\newtheorem{proposition}{Proposition}
\newtheorem{lemma}{Lemma}
\newtheorem{theorem}{Theorem}
\newtheorem{corollary}{Corollary}
\newtheorem{remark}{Remark}
\newtheorem{construction}{Construction}
\newtheorem{alg}{Algorithm}
\theoremstyle{definition}
\newtheorem{definition}{Definition}
\def\mbA{\mathbf{A}}
\def\mbB{\mathbf{B}}
\def\mbC{\mathbf{C}}
\def\mbE{\mathbf{E}}
\def\mbI{\mathbf{I}}
\def\mbP{\mathbf{P}}
\def\mbU{\mathbf{U}}
\def\mbV{\mathbf{V}}
\def\mbX{\mathbf{X}}
\def\mbY{\mathbf{Y}}
\def\mbZ{\mathbf{Z}}
\def\mb0{\mathbf{0}}
\def\mba{\mathbf{a}}
\def\mbb{\mathbf{b}}
\def\mbd{\mathbf{d}}
\def\mbf{\mathbf{f}}
\def\mbg{\mathbf{g}}
\def\mbp{\mathbf{p}}
\def\mbr{\mathbf{r}}
\def\mbu{\mathbf{u}}
\def\mbn{\mathbf{n}}
\def\mbv{\mathbf{v}}
\def\mbx{\mathbf{x}}
\def\mby{\mathbf{y}}
\def\mbz{\mathbf{z}}
\def\calA{\mathcal{A}}
\def\calB{\mathcal{B}}
\def \calP{\mathcal{P}}
\def \calS{\mathcal{S}}
\def\mbcalA{\bm{\mathcal{A}}}
\def\mbcalB{\bm{\mathcal{B}}}
\def \bbR{\mathbb{R}}
\def\mbalpha{\bm{\alpha}}
\def\mbbeta{\bm{\beta}}
\def\mblambda{\bm{\lambda}}
\def\mbalphaj{\bm{\alpha}^{(j)}}
\def\mbbetai{\bm{\beta}^{(i)}}
\def\mbbetaj{\bm{\beta}^{(j)}}
\DeclareMathOperator{\Tr}{Tr}
\DeclareMathOperator*{\argmax}{\arg\!\max}
\newcommand{\vander}[1]{\mathsf{Vander}(#1)}
\newcommand{\norm}[1]{\left\lVert#1\right\rVert}
\begin{document}
%
% paper title
% Titles are generally capitalized except for words such as a, an, and, as,
% at, but, by, for, in, nor, of, on, or, the, to and up, which are usually
% not capitalized unless they are the first or last word of the title.
% Linebreaks \\ can be used within to get better formatting as desired.
% Do not put math or special symbols in the title.
\title{$\epsilon$-Approximate Coded Matrix Multiplication is Nearly Twice as Efficient as Exact Multiplication}
%
%
% author names and IEEE memberships
% note positions of commas and nonbreaking spaces ( ~ ) LaTeX will not break
% a structure at a ~ so this keeps an author's name from being broken across
% two lines.
% use \thanks{} to gain access to the first footnote area
% a separate \thanks must be used for each paragraph as LaTeX2e's \thanks
% was not built to handle multiple paragraphs
%

\author{Haewon Jeong, Ateet Devulapalli, Viveck R. Cadambe, Flavio Calmon$^{*}$\thanks{$^{*}$Haewon Jeong and Flavio Calmon are with the John A. Paulson School of Engineering and Applied Sciences at Harvard University.  E-mails: \textsf{haewon, flavio@seas.harvard.edu}. Ateet Devulapalli and Viveck R. Cadambe are with the School of Electrical Engineering and Computer Science at Pennsylvania State University.  E-mails: \textsf{azd565,viveck@psu.edu}}} % <-this % stops a space
\maketitle

% As a general rule, do not put math, special symbols or citations
% in the abstract or keywords.
\begin{abstract}
We study coded distributed matrix multiplication from an approximate recovery viewpoint. We consider a system of $P$ computation nodes where each node stores $1/m$ of each multiplicand via linear encoding. Our main result shows that the matrix product can be recovered with $\epsilon$ relative error from any $m$ of the $P$ nodes for any $\epsilon > 0$. We obtain this result through a careful specialization of MatDot codes---a class of matrix multiplication codes previously developed in the context of exact recovery ($\epsilon=0$). Since prior results showed that  MatDot codes achieve the best exact recovery threshold for a class of linear coding schemes, our result shows that allowing for mild approximations leads to a system that is nearly twice as efficient as exact reconstruction. Moreover, we  develop an optimization framework based on alternating minimization that enables the discovery of new codes for approximate matrix multiplication.
\end{abstract}

\IEEEpeerreviewmaketitle

\section{Introduction}
Coded computing has emerged as a promising paradigm to resolving straggler and security bottlenecks in large-scale distributed computing platforms~\cite{lee2017coded,yu2017polynomial,MatDotITTrans,dutta2016short,yu2019lagrange,GC2,GC3, halbawi2018improving,reisizadeh2017coded,jeong2018masterless,yu2017coded,Virtualization,ferdinand2016anytime,ferdinand2018hierarchical,mallick2018rateless,wang2018coded,quangQR,severinson2018block,crossiter, gupta2018oversketch,gupta2019oversketched,NewsletterPaper,li2020coded,dutta2020addressing}. The foundations of this paradigm lie in novel code constructions for elemental computations such as matrix operations and polynomial computations, and fundamental limits on their performance. In this paper, we show that the state-of-the-art fundamental limits for such elemental computations grossly underestimate the performance by focusing on \emph{exact} recovery of the 
computation output. By allowing for mild \emph{approximations} of the computation output, we demonstrate significant improvements in terms of the trade-off between fault-tolerance and the degree of redundancy.

Consider a distributed computing system with $P$ nodes for performing the matrix multiplication $\mathbf{A}\mathbf{B}.$ If each node is required to store a fraction $1/m$ of both matrices, the best known recovery threshold is equal to $2m-1$ achieved by the  MatDot code \cite{MatDotITTrans}. Observe the contrast between distributed coded \emph{computation} with distributed data \emph{storage}, where a maximum distance separable (MDS) code  ensures that if each node stores a fraction $1/m$ of the data, then the data can be recovered from any $m$ nodes\footnote{This essentially translates to the Singleton bound being tight for a sufficiently large alphabet} \cite{roth2006introduction}. Indeed, the recovery threshold of $m$ is crucial to the existence of practical codes  that bring  fault-tolerance to large-scale data storage systems with relatively minimal overheads (e.g., single parity and Reed-Solomon codes \cite{Balaji2018ErasureOverview}). 

The contrast between data storage and computation is even more pronounced when we consider the generalization of matrix-multiplication towards multi-variate polynomial evaluation $f(\mathbf{A}_{1},\mathbf{A}_{2},\ldots,\mathbf{A}_{\ell})$ where each node is allowed to store a fraction $1/m$ of each of $\mathbf{A}_1,\mathbf{A}_2,\ldots,\mathbf{A}_\ell$. In this case, the technique of Lagrange coded-computing \cite{yu2019lagrange} demonstrates that the recovery threshold is $d(m-1)+1,$ where $d$ is the degree of the polynomial. Note that a recovery threshold of $m$ is only obtained for the special case of degree $d=1$ polynomials, i.e., elementary linear transformations that were originally studied in \cite{lee2017speeding}. While the results of \cite{MatDotITTrans, yu2020transIT} demonstrate that the amount of redundancy is much less than previously thought for degree $d > 1$ computations, these codes still require an overwhelming amount of additional redundancy---even to tolerate a single failed node---when compared to codes for distributed storage.

\subsection{Summary of Results}

Our paper is the result of the search for an analog of MDS codes% (such as ``single-parity'' and Reed-Solmon codes)
---in terms of the amount of redundancy required---for  coded-computation of polynomials with degree greater than $1$. We focus on the case of coded matrix multiplication where the goal is to recover the matrix product  $\mathbf{C}=\mathbf{A}\mathbf{B}$. We consider a distributed computation system of $P$ worker nodes  similar to \cite{yu2017polynomial,MatDotITTrans}; we allow each worker to store an $m$-th fraction of  matrices of $\mathbf{A},$ $\mathbf{B}$ via linear transformations (encoding). The workers output the product of the encoded matrices. A central master/fusion node collects the output of a set $\calS$ of non-straggling workers and aims to decode $\mathbf{C}$ with a relative error of $\epsilon$. The recovery threshold $K(m,\epsilon)$ is the cardinality of the largest minimal subset $\calS$ that allows for such recovery. It has been shown in \cite{MatDotITTrans,yu2020transIT} that, for natural classes of linear encoding schemes, $K(m,0) = 2m-1.$ 

Our main result shows that the MatDot code with a specific set of evaluation points is   able to achieve $K(m,\epsilon) = m$, remarkably, for \emph{any} $\epsilon > 0.$ A simple converse shows that the our result is tight for $0 < \epsilon < 1$ for unit norm matrices. Our results mirrors several results in classical information theory (e.g., almost lossless data compression), where allowing $\epsilon$-error for any $\epsilon > 0$ 
 leads to surprisingly significant improvements in performance. We also show that for PolyDot/Entangled polynomial codes \cite{MatDotITTrans,yu2020transIT,dutta2018unified} where matrices $\mathbf{A},\mathbf{B}$ are restricted to be split as $p\times q$ and $q \times p$ block matrices respectively, we improve the recovery threshold\footnote{Strictly speaking, the recovery threshold of entangled polynomial codes depends on the bilinear complexity, which can be smaller than $p^2q+q-1$ \cite{yu2020transIT}.} from $p^2 q + q -1$  to $p^2 q$ by allowing  $\epsilon$-error.
 We believe that these results open up a new avenue in coded computing research via revisiting existing code constructions and allowing for an $\epsilon$-error.  
  
A second contribution of our paper is the development of an optimization formulation that enables the discovery of new coding schemes for approximate computing.  We show that the optimization can be solved through an alternating minimization algorithm that has simple, closed-form iterations as well as provable convergence to a local minimum. We illustrate through numerical examples that our optimization approach finds approximate computing codes with favourable trade-offs between approximation error and recovery threshold. Through an application of our code constructions to distributed training for classification via logistic regression, we show that our approximations suffices to obtain accurate classification results in practice. 

\subsection{Related Work}
The study of coded computing for elementary linear algebra operations, starting from \cite{lee2017speeding, dutta2016short}, is an active research area (see surveys \cite{NewsletterPaper,li2020coded,dutta2020addressing}). Notably, the recovery thresholds for matrix multiplication were established via achievability and converse results respectively in \cite{yu2017polynomial, MatDotITTrans, yu2020transIT}. The Lagrange coded computing framework of \cite{yu2019lagrange} generalized the systematic MatDot code construction of \cite{MatDotITTrans} to the context of multi-variate polynomial evaluations and established a tight lower bound on the recovery threshold. These works focused on exact recovery of the computation output.

References \cite{wang2019fundamental,charles2017approximate,9081964} studied the idea of gradient coding from an approximation viewpoint, and demonstrated improvements in recovery threshold over exact recovery. However, in contrast with our results, the error obtained either did not correct all possible error patterns with a given recovery threshold (i.e., they considered a probabilistic erasure model), and the relative error of their approximation was lower bounded. The references that are most relevant to our work are \cite{gupta2018oversketch,gupta2019oversketched,jahani2019codedsketch}, which also aim to improve the recovery threshold of coded matrix multiplication by allowing for a relative error of $\epsilon.$ These references use random linear coding (i.e., sketching) techniques to obtain a recovery threshold $\overline{K}(\epsilon,\delta,m)$ where $\delta$ is the probability of failing to recover the matrix product with a relative error of $\epsilon$; the problem statement of \cite{jahani2019codedsketch} is particularly similar to ours. Our results can be viewed as a strict improvement over this prior work, as we are able to obtain a recovery threshold of $m$ even with $\delta=0$, whereas the recovery threshold is at least $2m-1$ for $\delta=0$ in \cite{gupta2018oversketch,gupta2019oversketched,jahani2019codedsketch}.

A related line of work in \cite{jahani2020berrut,soleymani2020analog}  study coded polynomial evaluation beyond exact recovery and note techniques to improve the quality of the approximation. References \cite{kosaian2019parity, kosaian2020learning} develops machine learning techniques for approximate learning; while they show empirical existence codes with low recovery thresholds (such as single parity codes \cite{kosaian2019parity}) for learning tasks they do not provide theoretical guarantees. Specifically, while \cite{kosaian2019parity, kosaian2020learning} shows the benefits of approximation in terms of recovery threshold, it is unclear whether these benefits appear in their scheme due to the special structure of the data, or whether the developed codes work for all realizations of the data. In contrast with \cite{kosaian2019parity,kosaian2020learning,  jahani2020berrut, soleymani2020analog}, we are the first to establish the strict gap in the recovery thresholds for  $\epsilon$-error computations versus exact computation for matrix multiplication, which is a canonical case of degree $2$ polynomial evaluation.
 
A tangentially related body of work \cite{fahim2019numerically, ramamoorthy2019numerically, subramaniam2019random,charalambides2020numerically} studies the development of numerically stable coded computing techniques. While some of these works draw on techniques from approximation theory, they focus on maintaining recovery threshold the same as earlier constructions, but bounding the approximation error of the output in terms of the precision of the computation.

\section{System Model and Problem Statement} \label{sec:model}
\subsection{Notations} \label{subsec:notations}
We  define $[n]\triangleq \{1, 2, \cdots, n \}$. We use bold fonts for vectors and matrices. $A[i,j]$ denotes the $(i,j)$-th entry of an $M \times N$ matrix $\mbA$ ($i\in [M], j\in [N]$) and $v[i]$ is  the $i$-th entry of a length-$N$ vector $\mbv$  ($i\in [N]$). 

\subsection{System Model} \label{subsec:sys_mod}
 We  consider a distributed computing system with a master node and $P$ worker nodes. At the beginning of the computation, a master node distributes appropriate tasks and inputs to worker nodes. Worker nodes perform the assigned task and send the result back to the master node. Worker nodes are prone to failures or delay (stragglers). Once the master node receives results from a sufficient number of worker nodes, it produces the final output. 

 We are interested in distributed matrix multiplication, where the goal is to compute 
\begin{equation}\label{eq:C_AB}
    \mbC = \mbA \cdot \mbB.
\end{equation}
We assume $\mbA, \mbB \in \bbR^{n\times n}$ are matrices with a bounded norm, i.e., 
\begin{equation}
    \label{eq:norm-constraint}
    ||\mbA||_F \leq \eta \;\; \text{ and } \; ||\mbB||_F \leq \eta, 
\end{equation}
where $||\cdot||_F$ denotes Frobenius norm. We further assume that worker nodes have memory constraints such that each node can hold only an $m$-th fraction of $\mbA$ and an $m$-th fraction of $\mbB$ in  memory. To meet the memory constraint, we divide $\mbA, \mbB$ into small equal-sized sub-blocks as follows\footnote{We limit ourselves to splitting the input matrices into a grid of submatrices. Splitting into an arbitrary shape is beyond the scope of this work.}: 
\small
\begin{equation} 
    \mbA = \begin{bmatrix} 
    \mbA_{1,1} & \cdots & \mbA_{1,q} \\ 
    \vdots & \ddots & \vdots \\ 
     \mbA_{p,1} & \cdots & \mbA_{p,q}
    \end{bmatrix}, 
    \mbB = \begin{bmatrix} 
    \mbB_{1,1} & \cdots & \mbB_{1,p} \\ 
    \vdots & \ddots & \vdots \\ 
     \mbB_{q,1} & \cdots & \mbB_{q,p}
    \end{bmatrix},
    \label{eq:ABmodel}
\end{equation}
\normalsize
where $pq = m$. When $p=1,$ we simply denote $$\mathbf{A}=\begin{bmatrix}\mathbf{A}_{1} & \mathbf{A}_{2} & \ldots & \mathbf{A}_{m}\end{bmatrix} \mbox{ and } \mathbf{B}=\begin{bmatrix}\mathbf{B}_{1} \\ \mathbf{B}_{2} \\ \vdots \\ \mathbf{B}_{m}\end{bmatrix}.$$
    
To mitigate failures or stragglers, a master node encodes redundancies through linear encoding. The $i$-th worker node receives encoded inputs $\widetilde{\mbA}_i$ and $\widetilde{\mbB}_i$ such that:
    \begin{align*}
        \widetilde{\mbA}_i = f_i(\mbA_{1,1}, \cdots, \mbA_{p,q}), \;\;
        \widetilde{\mbB}_i = g_i(\mbB_{1,1}, \cdots, \mbB_{q,p}),
    \end{align*}
    where %$f_i,g_i$ are linear functions:
    \begin{align}
        f_i: \underbrace{\bbR^{\frac{n}{p}\times\frac{n}{q}} \times \cdots \times \bbR^{\frac{n}{p}\times\frac{n}{q}}}_{pq=m} \rightarrow \bbR^{\frac{n}{p}\times\frac{n}{q}},
        \label{eq:fimapdef}\\ 
        g_i: \underbrace{\bbR^{\frac{n}{q}\times\frac{n}{p}} \times \cdots \times \bbR^{\frac{n}{q}\times\frac{n}{p}}}_{m} \rightarrow \bbR^{\frac{n}{q}\times\frac{n}{p}}.
        \label{eq:gimapdef}
    \end{align}
    
    We assume that $f_i,g_i$ are linear, i.e., their outputs are linear combinations of  $m$ inputs. For example, we may have
    $
        f_i(\mbZ_1, \cdots, \mbZ_m) = \gamma_{i,1} \mbZ_1 + \cdots + \gamma_{i,m} \mbZ_m
   $
    for some $\gamma_{i,j} \in \bbR \; (j\in [m])$. .
    
Worker nodes are oblivious of the encoding/decoding process and simply perform matrix multiplication on the inputs they receive. In our case, each worker node computes  
    \begin{equation}
        \widetilde{\mbC}_i = \widetilde{\mbA}_i \cdot \widetilde{\mbB}_i, 
    \end{equation}
    and  returns the $\frac{n}{p}\times\frac{n}{p}$ output matrix $\widetilde{\mbC}_i$  to the master node. 
    
Finally, when the master node receives outputs from a subset of worker nodes, say  $\calS \subseteq [P]$, it performs decoding:
    \begin{equation}
        \widehat{\mbC}_{\calS} = d_{\calS}((\widetilde{\mbC}_i)_{i \in \calS}), 
    \end{equation}
    where $\left\{d_{\calS}\right\}_{\calS\subseteq [P]}$ is a set of predefined decoding functions that take $|\calS|$ inputs from $\bbR^{\frac{n}{p} \times \frac{n}{p}}$ and outputs an $n$-by-$n$ matrix. Note that we do not restrict the decoders $d_{\calS}$ to be linear.

\subsection{Approximate Recovery Threshold} \label{subsec:approx_rt}
Let $\mbf$ and $\mbg$ be vectors of linear encoding functions: $$\mbf = \begin{bmatrix} f_1 & \cdots & f_{P} \end{bmatrix}, \mbg = \begin{bmatrix} g_1 & \cdots & g_{P} \end{bmatrix},$$
and let $\mbd$ be a length-$2^P$ vector of decoding functions $d_{\calS}$ for all subsets $\calS \subseteq [P]$. More specifically, $d_{\calS}$ is a decoding function for the scenario where worker nodes in set $\calS$ are successful in returning their computations to the master node and all other worker nodes fail. We say that the $\epsilon$-approximate recovery threshold of $\mbf, \mbg, \mbd$ is $K$ if  for any $\mbA$ and $\mbB$ that satisfy the norm constraints \eqref{eq:norm-constraint}, the decoded matrix satisfies

    \begin{equation}
    \label{eq:epsilondef}
        |\widehat{C}_{\calS}[i,j] - C[i,j]| \leq \epsilon \quad (i,j \in [n])
    \end{equation}
    for every $\calS \subseteq [P]$ such that $|\calS| \geq K$. We denote this recovery threshold as $K(m, \epsilon, \mbf, \mbg, \mbd)$. Moreover, let $K^{*}(m, \epsilon)$ be defined as the minimum of $K(m, \epsilon, \mbf, \mbg, \mbd)$ over all possible linear functions $\mbf$, $\mbg$ and all possible decoding functions $\mbd$, i.e.,
    \begin{equation}\label{eq:K_approx_def}
        K^{*}(m, \epsilon) \triangleq \min_{\mbf, \mbg, \mbd}  K(m, \epsilon, \mbf, \mbg, \mbd). 
    \end{equation}
    Note that parameters $p$ and $q$ are embedded in $\mbf$ and $\mbg$ and hence $K^{*}(m, \epsilon)$ is the minimum over all combinations of $p, q$ such that $pq = m$.
Through an achievability scheme in \cite{MatDotITTrans} and a converse in \cite{yu2020transIT}, for exact recovery, the optimal threshold has been characterized to be $2m-1$: 
   \begin{theorem} [Adaptation of Theorem 2 in \cite{yu2020transIT} and Theorem III.1 in \cite{MatDotITTrans}]
   Under the system model given in Section~\ref{subsec:sys_mod}
    \begin{equation}
        K^{*}(m,\epsilon=0) = 2m -1.
    \end{equation}
   \end{theorem}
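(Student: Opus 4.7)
The plan is to prove the theorem in two parts: the achievability bound $K^{*}(m,0) \leq 2m-1$ by exhibiting the MatDot construction, and the matching converse $K^{*}(m,0) \geq 2m-1$ by citing the bilinear-complexity argument of \cite{yu2020transIT} and verifying it applies to our model.

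For achievability, I would specialize to $p=1$, $q=m$, so that $\mbC = \sum_{k=1}^{m}\mbA_k\mbB_k$. Pick $P$ distinct real numbers $x_1,\dots,x_P$ and set, for each worker $i$,
\[
\widetilde{\mbA}_i \;=\; \sum_{k=1}^{m} \mbA_k\, x_i^{k-1}, \qquad
\widetilde{\mbB}_i \;=\; \sum_{k=1}^{m} \mbB_k\, x_i^{m-k}.
\]
These are linear in $(\mbA_1,\dots,\mbA_m)$ and $(\mbB_1,\dots,\mbB_m)$ respectively, so they fit the form \eqref{eq:fimapdef}--\eqref{eq:gimapdef}. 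The worker returns $\widetilde{\mbC}_i = \widetilde{\mbA}_i\widetilde{\mbB}_i$, which is the evaluation at $x=x_i$ of the matrix polynomial $\mbP(x) = \sum_{j,k} \mbA_j\mbB_k\, x^{\,j+m-k-1}$, whose coefficient of $x^{m-1}$ equals $\mbC$. Since $\deg \mbP \leq 2m-2$, any set $\calS$ with $|\calS|\geq 2m-1$ allows Lagrange interpolation to recover $\mbP$ entry-wise and hence $\mbC$ exactly; this defines $d_\calS$. This gives $K^{*}(m,0)\leq 2m-1$.

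For the converse, I would appeal to the bilinear-complexity lower bound established in \cite{yu2020transIT}. The key observation is that, with $p=1,q=m$, computing $\mbC=\sum_{k=1}^m \mbA_k\mbB_k$ from workers that each output one bilinear product of a linear combination of the $\mbA_k$ with a linear combination of the $\mbB_k$ is precisely the problem of computing an inner product in the algebra of $n\times n$ matrices; this has bilinear complexity at least $2m-1$, which forces any scheme with exact recovery on every $\calS$ with $|\calS|=K$ to satisfy $K\geq 2m-1$. For splittings with $p>1$, one must reduce to the $p=1$ case (or invoke the entangled-polynomial converse directly): any scheme that recovers all $p\times p$ blocks of $\mbC$ in particular recovers the diagonal inner-product sums $\sum_{k=1}^q \mbA_{i,k}\mbB_{k,i}$, and a rank/degree-counting argument of the type used in \cite{yu2020transIT} shows that this still needs $\geq 2m-1$ evaluations when $pq=m$.

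The main obstacle is the converse: the achievability is a straightforward interpolation argument, but carefully verifying that the bilinear-complexity argument of \cite{yu2020transIT} applies uniformly over all factorizations $pq=m$, and over \emph{nonlinear} decoders $d_\calS$ (the model only requires encoders to be linear), is the delicate point. The reduction works because, for \emph{exact} recovery on all $\mbA,\mbB$, the standard argument shows that a nonlinear decoder cannot do better than the best linear decoder: if $\widehat{\mbC}_\calS$ depends nonlinearly on $(\widetilde{\mbC}_i)_{i\in\calS}$ yet equals the bilinear form $\mbA\mbB$ for every input, then the linearization at the origin already recovers $\mbC$ exactly, reducing the problem to one amenable to the tensor-rank lower bound. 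With that observation in place, the prior converse yields $K^{*}(m,0)\geq 2m-1$, matching the achievability and completing the proof.
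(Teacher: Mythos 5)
The paper does not supply its own proof of this theorem --- it is explicitly labeled an adaptation of Theorem III.1 in \cite{MatDotITTrans} (achievability via MatDot) and Theorem 2 in \cite{yu2020transIT} (converse) --- so your citation-based route matches the paper's. Your achievability argument (MatDot encoding, degree-$(2m-2)$ product polynomial, Lagrange interpolation from any $2m-1$ evaluations) is correct and complete.

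Two steps in your converse sketch, however, do not hold up as stated. First, the ``linearization at the origin'' reduction of nonlinear decoders to linear ones presupposes that $d_{\calS}$ is differentiable at $0$, which is not part of the model (the paper only requires $d_{\calS}$ to be some function); and even granting differentiability, the scaling argument (from $\mbA \mapsto t\mbA$, giving $d_{\calS}(t\mbz)=t\,d_{\calS}(\mbz)$ on realizable $\mbz$) only forces $d_{\calS}$ to agree with its linearization along rays through the origin inside the range of the worker-output map --- additivity across distinct rays is not established, so you cannot conclude that the linearization is itself a valid exact decoder. The converse in \cite{yu2020transIT} handles arbitrary decoders by a dimension/tensor-rank argument on the worker-output map and does not proceed by linearizing the decoder. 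Second, your reduction for $p>1$ via the diagonal blocks $\mbC_{i,i}=\sum_{k=1}^q \mbA_{i,k}\mbB_{k,i}$ is too weak: each such block is a $q$-term inner product, so this line of reasoning can only yield $K \geq 2q-1$, which is strictly smaller than $2m-1$ whenever $p>1$. The cleaner observation is that the converse of \cite{yu2020transIT} applies directly to each factorization $pq=m$ and gives a threshold lower bound at least $p^2q+q-1 = 2pq-1 + q(p-1)^2 \geq 2m-1$, with equality only at $p=1$; so the minimum over factorizations is $2m-1$ and no reduction to $p=1$ is needed. Relatedly, the phrase ``bilinear complexity at least $2m-1$'' for the $m$-term inner product is not correct in the standard tensor-rank sense (where the rank of $\sum_k a_k b_k$ is $m$): the $2m-1$ figure comes from the stronger any-$K$-of-$P$ recovery requirement, not from bilinear complexity alone.
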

    
\subsection{Summary of Main Result}
Our main result is summarized the following theorem:

\begin{mdframed}
\begin{theorem}
Under the system model given in Section~\ref{subsec:sys_mod}, the optimal $\epsilon$-approximate recovery threshold is:
\begin{equation}
    K^{*}(m, \epsilon) = m. 
\end{equation}
\textbf{(Achievability -- Theorem~\ref{thm:approx_matdot})}

For any $0 < \epsilon < \min (2, 3\eta^2 \sqrt{2m-1})$, the $\epsilon$-approximate MatDot codes in Construction~\ref{const:approx_matdot} achieves:
\begin{equation*}
    K(m,\epsilon, \mbf_{\epsilon\text{-MatDot}}, \mbg_{\epsilon\text{-MatDot}}, \mbd_{\epsilon\text{-MatDot}}) = m.
\end{equation*}
\textbf{(Converse -- Theorem~\ref{thm:con})}

For all $0 < \epsilon < \eta^2$, $K^{*}(m, \epsilon) \geq m.$
\end{theorem}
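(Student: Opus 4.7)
The theorem has two parts, achievability ($\epsilon$-MatDot attains threshold $m$) and converse (no scheme beats $m$). I sketch both.

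\textbf{Achievability.} I adopt the classical MatDot polynomial structure $\widetilde{\mbA}_i = \sum_{j=1}^{m} \mbA_j x_i^{j-1}$, $\widetilde{\mbB}_i = \sum_{j=1}^{m} \mbB_j x_i^{m-j}$, so that the worker output $\widetilde{\mbC}_i = \widetilde{\mbA}_i \widetilde{\mbB}_i = \sum_{k=0}^{2m-2} \mbC_k\, x_i^k$ is a matrix polynomial whose coefficient at $x^{m-1}$ equals $\mbA \mbB$. My key idea is to take $x_i = \rho z_i$ for fixed distinct nonzero reals $z_1,\ldots,z_P$ and a small tuning parameter $\rho > 0$. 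Given any receiving set $\calS$ of size $m$, I define the decoder through the Lagrange coefficients $\mbalpha \in \mathbb R^{m}$ satisfying $\sum_{i\in\calS}\alpha_i x_i^k = \delta_{k,m-1}$ for $k = 0,1,\ldots,m-1$ (a well-posed Vandermonde system, since the $x_i$ are distinct), and output $\widehat{\mbC} = \sum_{i\in\calS}\alpha_i \widetilde{\mbC}_i$. Writing $e_k := \sum_{i\in\calS}\alpha_i x_i^k$, the residual is $\widehat{\mbC} - \mbC = \sum_{k=m}^{2m-2} e_k \mbC_k$. Factoring $\rho$ out of the Vandermonde yields $\alpha_i = \rho^{-(m-1)}\beta_i$ for a $\rho$-independent vector $\mbbeta$, so $|e_k| = \rho^{k-m+1}\,\bigl|\sum_{i\in\calS}\beta_i z_i^k\bigr|$ for $k \geq m$. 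Combined with $\|\mbC_k\|_\infty \leq m\eta^2$ from the norm constraints, the entrywise error is $O(\rho)$, which can be driven below $\epsilon$ by shrinking $\rho$.

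\textbf{Converse.} Suppose some scheme attained threshold $K < m$, and fix any $\calS$ of size $K$. The linear encoder $\mathcal E_A : (\mbA_1,\ldots,\mbA_m)\mapsto(\widetilde{\mbA}_i)_{i\in\calS}$ has nontrivial kernel. For the $p = 1$ split, where the encoder acts row-by-row, a dimension count ($Kn/m$ scalar constraints against the $n$ entries of a single row) shows the kernel contains a nonzero single-row matrix $\bm\Delta = \mbe_r \mbd^\top$, which I rescale so $\|\mbd\|_2 = 2\eta$. Taking $\mbA := \bm\Delta/2$ and $\mbA' := -\bm\Delta/2$ gives two inputs of Frobenius norm $\eta$ that encode identically, so the decoder must output the same $\widehat{\mbC}$ for both regardless of $\mbB$. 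Choosing $\mbB := \tfrac12 \mbd\,\mbe_j^\top$ (of Frobenius norm $\eta$) yields $\bm\Delta \mbB = 2\eta^2\, \mbe_r \mbe_j^\top$, so $\|\mbA\mbB - \mbA'\mbB\|_\infty = 2\eta^2$. The triangle inequality applied to $\widehat{\mbC}$ against both products forces $2\eta^2 \leq 2\epsilon$, contradicting $\epsilon < \eta^2$.

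\textbf{Expected obstacle.} On the achievability side, the technical crux is a \emph{uniform} upper bound on $\bigl|\sum_{i\in\calS}\beta_i z_i^k\bigr|$ across all $\binom{P}{m}$ subsets $\calS$, so that a single $\rho$ controls the worst straggler pattern; this reduces to bounding the condition number of every $m\times m$ Vandermonde minor of $z_1,\ldots,z_P$, which is where the specific bound $\epsilon < 3\eta^2\sqrt{2m-1}$ in the theorem statement should come from. On the converse side, extending the single-row kernel argument beyond the $p = 1$ split is delicate because the encoder then couples rows across block rows; the natural remedy is to run the symmetric single-column argument on $\mathcal E_B$ whenever $q \leq p$, so that at least one of the two adversarial choices always produces a rank-one residual of the desired norm.
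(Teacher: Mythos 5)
Your achievability sketch is essentially the paper's \emph{second} proof technique (the one in Appendix E used for PolyDot, Theorem~\ref{thm:alt_prf}): take the last row $\mbv$ of the Vandermonde inverse, write the error as $\mbv\cdot\mbr$ with $\mbr$ the higher-degree tail, and shrink the evaluation points to kill it. Your "$\rho$-rescaling'' trick is a clean way to see the $O(\rho)$ decay; the paper instead appeals to the identity $\sum_i x_i^{m-1+l}/\prod_{j\neq i}(x_j-x_i)=h_l(\mbx)$ (Lemma~\ref{lem:alt_prf_lem}), which yields explicit constants without mentioning condition numbers at all. Note that the paper's \emph{primary} achievability proof (Appendix~\ref{app:approx_matdot}) is structurally different: it characterizes $\mathsf{null}(\mbV^T)$ via elementary symmetric polynomials, uses a minimum-$\ell_2$-norm decoder over the solution manifold, and shows any two norm-bounded consistent coefficient vectors agree entrywise to within $\epsilon$ (Theorem~\ref{thm:V_null}, Corollary~\ref{cor:main}). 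Your speculation that the constant $3\eta^2\sqrt{2m-1}$ comes from Vandermonde-minor conditioning is off: in the paper it arises from the combination of $\|\mbp_{[i,j]}\|_2\le\sqrt{2m-1}\eta^2$ (Lemma~\ref{lem:p_norm}) with the inequality $(1+\tfrac{1}{k-m})^{k-m}<3$ that bounds the $\alpha_l$ in Theorem~\ref{thm:V_null}. But this is a difference in route to the constant, not a flaw.

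Your converse, however, has a genuine gap. For $p=1$ your single-row construction is correct and in fact \emph{sharper} on norms than the paper's bound: the residual $2\eta^2\,\mbe_r\mbe_j^\top$ has $\max$-norm exactly $2\eta^2$, whereas the paper establishes a Frobenius-norm lower bound. But the definition of $K^*(m,\epsilon)$ takes the minimum over \emph{all} splits $p,q$ with $pq=m$, so the converse must cover arbitrary $p>1$. For $p>1$ a single row of $\mbA$ lies entirely in one block row and is encoded through only $q$ scalar coefficients; your dimension count then gives $Kn/q$ constraints on $n$ unknowns and needs $K<q$, not $K<m=pq$. Crucially, the ``symmetric'' single-column argument on $\mathcal{E}_B$ does not rescue you: $\mbB_{k,l}$ is $\tfrac{n}{q}\times\tfrac{n}{p}$, so a single column of $\mbB$ is also encoded through $q$ coefficients and the count again gives $K<q$. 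Thus for any $p,q\ge2$ neither choice reaches $K<pq$. The paper closes this by choosing a $p\times q$ matrix $\bm{Q}$ with $\mathsf{vectorize}(\bm{Q})$ in the kernel of the scalar encoder-coefficient matrix $\bm{f}_{\calS}$ (which is $(m-1)\times m$, so has a nontrivial kernel), setting $\mbA=\bm{Q}\otimes\bar{\mbA}$ and $\mbB^{(\pm1)}=\pm\bm{Q}^T\otimes\bar{\mbB}$, and computing $\|\mbA\mbB^{(1)}\|_F=\|\bm{Q}\bm{Q}^T\|_F\,\|\bar{\mbA}\bar{\mbB}\|_F\ge\eta^2$. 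The essential idea you are missing is to exploit the kernel of the \emph{full} block-coefficient map rather than restricting attention to a single physical row or column of the data matrix.
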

\end{mdframed}

The achievability scheme given in Theorem~\ref{thm:approx_matdot} is  only for $p=1$. For a fixed $p>1$, we propose $\epsilon$-approximate PolyDot strategy which reduces the recovery threshold of Entangled-Poly codes from $p^2q+q-1$ to $p^2q$ by allowing $\epsilon$-error in the recovered output (Theorem~\ref{thm:approx_polydot}).

\section{Theoretical Characterization of $K^{*}(m,\epsilon)$} \label{sec:approx_matdot}
In this section, we first propose the construction of $\epsilon$-approximate MatDot codes that can achieve the recovery threshold of $m$ for $\epsilon$ approximation error. Then, we prove the converse result which states that the recovery threshold cannot be smaller than $m$ for sufficiently small $\epsilon$. 

\subsection{Approximate MatDot Codes} \label{subsec:approx_matdot}
We briefly introduce the construction of MatDot codes and then  show that a simple adaptation of MatDot codes can be used for approximate coded computing. 

\begin{construction}[MatDot Codes~\cite{MatDotITTrans}] \label{const:MatDot}
Define polynomials $p_{\mbA}(x)$ and $p_{\mbB}(x)$ as follows: 
\begin{equation} \label{eq:MatDot_pA_pB}
    p_\mathbf{A}(x)=\sum_{i=1}^{m} \mathbf{A}_i x^{i-1}, p_\mathbf{B}(x)=\sum_{j=1}^{m} \mathbf{B}_j x^{m-j}.
\end{equation}
Let $\lambda_1, \lambda_2, \ldots, \lambda_P$ be $P$ distinct elements in $\bbR$. The $i$-th worker receives 
encoded versions of matrices:
\begin{align*}
    \widetilde{\mbA}_i &= p_{\mbA}(\lambda_i) = \mbA_1 + \lambda_i \mbA_2 + \cdots + \lambda_i^{m-1} \mbA_{m}, \\
    \widetilde{\mbB}_i &= p_{\mbB}(\lambda_i) = \mbB_m + \lambda_i \mbB_{m-1} + \cdots + \lambda_i^{m-1} \mbB_{1},
\end{align*}
and then computes matrix multiplication on the encoded matrices: 
\begin{equation*}
    \widetilde{\mbC}_i = \widetilde{\mbA}_i \widetilde{\mbB}_i = p_{\mbA}(\lambda_i) p_{\mbB}(\lambda_i) = p_{\mbC} (\lambda_i).
\end{equation*}
The polynomial $p_{\mbC}(x)$ has degree $2m-2$ and has the following form: 
\begin{equation}
p_{\mathbf{C}}(x) = \sum_{i=1}^{m} \sum_{j=1}^{m} \mathbf{A}_i \mathbf{B}_j x^{m-1+(i-j)}.
\end{equation}
Once the master node receives outputs from $2m-1$ successful worker nodes, it can recover the coefficients of  $p_{\mbC}(x)$ through polynomial interpolation, and then  recover $\mbC=\sum_{i=1}^{m} \mathbf{A}_{i}\mathbf{B}_{i}$ as the coefficient of $x^{m-1}$ in $p_{\mathbf{C}}(x)$.
\hfill $\square$
\end{construction}

The recovery threshold of MatDot codes is $2m-1$ because the output polynomial $p_{\mbC}(x)$ is a degree-$(2m-2)$ polynomial and we need $2m-1$ points to recover all of the coefficients of $p_{\mbC}(x)$. However, in order to recover $\mbC$, we only need the coefficient of $x^{m-1}$ in $p_{\mbC}(x)$. The key idea of Approximate MatDot Codes is to carefully choose the evaluation points that reduce this overhead. In fact, we select evaluation points in a small interval that is proportional to $\epsilon$. 

\begin{construction}[$\epsilon$-Approximate MatDot codes] \label{const:approx_matdot}
Let $\mbA$ and $\mbB$ be matrices in $\bbR^{n\times n}$ that satisfy $||\mbA||_F, ||\mbB||_F \leq \eta$. Let $\epsilon \in \bbR$ be a constant such that
\begin{equation}
    0 < \epsilon < \min(2, 3\eta^2\sqrt{2m-1}).
\end{equation}
Then, $\epsilon$-Approximate MatDot code is a MatDot code defined in Construction~\ref{const:MatDot} with evaluation points $\lambda_1, \ldots, \lambda_P$ that satisfy:
\begin{equation}\label{eq:appox_matdot_cond}
    |\lambda_i| < \frac{\epsilon}{6 \eta^2 \sqrt{2m-1}(m-1)m}, \;\; i\ \in [P].
\end{equation}
\end{construction}

We  then show that this construction has the approximate recovery threshold of $m$. 
\begin{theorem}\label{thm:approx_matdot}
For any $0 < \epsilon < \min (2, 3\eta^2 \sqrt{2m-1})$, the $\epsilon$-Approximate MatDot codes in Construction~\ref{const:approx_matdot} achieves:
\begin{equation} \label{eq:app_matdot_thm}
    K(m,\epsilon, \mbf_{\epsilon\text{-MatDot}}, \mbg_{\epsilon\text{-MatDot}}, \mbd_{\epsilon\text{-MatDot}}) = m,
\end{equation}
where $\mbf_{\epsilon\text{-MatDot}}, \mbg_{\epsilon\text{-MatDot}}, \mbd_{\epsilon\text{-MatDot}}$ are encoding and decoding functions specified by Construction~\ref{const:approx_matdot}.
\end{theorem}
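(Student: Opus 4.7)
The plan is to exhibit a single decoder that works for every admissible $\calS$ and to control its entrywise error through a Lagrange interpolation / divided difference calculation. Fix any $\calS \subseteq [P]$ with $|\calS| \ge m$ and, without loss of generality, retain exactly $m$ indices $i_1, \ldots, i_m \in \calS$; each returns $\widetilde{\mathbf{C}}_{i_k} = p_{\mathbf{C}}(\lambda_{i_k})$, where $p_{\mathbf{C}}(x) = \sum_{k=0}^{2m-2} c_k x^k$ is the product polynomial from Construction~\ref{const:MatDot} and $c_{m-1} = \mathbf{C}$. I define $d_{\calS}$ to be the coefficient of $x^{m-1}$ in the unique degree-$(m-1)$ polynomial $L(x)$ that interpolates $(\lambda_{i_k}, \widetilde{\mathbf{C}}_{i_k})$ for $k = 1, \ldots, m$; call this matrix $L_{m-1}$. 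Intuitively, the decoder pretends $p_{\mathbf{C}}$ has degree $m-1$, so all of the approximation error comes from the truncated high-order coefficients $c_m, \ldots, c_{2m-2}$.

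The first key step is an exact algebraic identity for $L_{m-1}$. Since $L_{m-1}$ is the $m$-th divided difference of $p_{\mathbf{C}}$ at the distinct points $\lambda_{i_1}, \ldots, \lambda_{i_m}$, and since the $m$-th divided difference of the monomial $x^j$ at $m$ points equals $0$ when $j < m-1$, equals $1$ when $j = m-1$, and equals the complete homogeneous symmetric polynomial $h_{j-m+1}(\lambda_{i_1}, \ldots, \lambda_{i_m})$ when $j \ge m$, linearity of the divided difference in the matrix coefficients gives
\[
L_{m-1} \;=\; c_{m-1} \;+\; \sum_{r=1}^{m-1} c_{m-1+r}\, h_{r}(\lambda_{i_1},\ldots,\lambda_{i_m}).
\]
I would first verify this identity carefully and then note that it reduces the theorem to showing the sum on the right is entrywise at most $\epsilon$.

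The second step is the quantitative bound. For the coefficient matrices, use $c_{m-1+r} = \sum_i \mathbf{A}_i \mathbf{B}_{i-r}$ together with $\|\mathbf{A}_i \mathbf{B}_{i-r}\|_F \le \|\mathbf{A}_i\|_F \|\mathbf{B}_{i-r}\|_F$ and Cauchy--Schwarz on the sum in $i$ to obtain $\|c_{m-1+r}\|_F \le \|\mathbf{A}\|_F \|\mathbf{B}\|_F \le \eta^2$, whence $|c_{m-1+r}[a,b]| \le \eta^2$ for every entry. For the symmetric functions, $h_r$ is a sum of $\binom{m+r-1}{r}$ monomials of degree $r$ in the $\lambda_{i_k}$, each of magnitude at most $\delta^r$ with $\delta = \max_k |\lambda_{i_k}|$ bounded by \eqref{eq:appox_matdot_cond}. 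Combining yields
\[
\bigl|L_{m-1}[a,b] - c_{m-1}[a,b]\bigr| \;\le\; \eta^2 \sum_{r=1}^{m-1} \binom{m+r-1}{r} \delta^{r},
\]
and the smallness of $\delta$ makes this geometric-type sum at most a small constant times the first-order term $m\eta^2\delta$, which is $\le \epsilon$ by the chosen threshold. Since the bound holds uniformly in $\calS$, the recovery threshold is $m$, matching \eqref{eq:app_matdot_thm}.

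The main obstacle is neither the interpolation identity nor the individual bounds but the constant-chasing needed to match the explicit threshold in \eqref{eq:appox_matdot_cond}. Concretely, one must verify that the tail $\sum_{r \ge 2} \binom{m+r-1}{r} \delta^{r}$ is dominated by the $r=1$ term once $\delta$ satisfies the stated upper bound, and convert the Frobenius estimate into a genuinely entrywise bound; the factor $\sqrt{2m-1}$ in \eqref{eq:appox_matdot_cond} most likely enters through a Cauchy--Schwarz step that groups the $2m-1$ coefficients of $p_{\mathbf{C}}$. The side condition $\epsilon < \min(2, 3\eta^2\sqrt{2m-1})$ in Construction~\ref{const:approx_matdot} is there precisely to keep $\delta$ in the regime where all of these estimates hold simultaneously, so once that bookkeeping is done the argument closes cleanly.
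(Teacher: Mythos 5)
Your argument is correct, but it is not the route the paper takes for this particular theorem. The paper proves Theorem~\ref{thm:approx_matdot} with a \emph{min-norm decoder}: it characterizes the left null space of the $k\times m$ Vandermonde matrix by the elementary symmetric polynomials $e_l(\mblambda)$, shows (Theorem~\ref{thm:V_null}, Corollary~\ref{cor:main}) that any null vector with bounded $2$-norm has small leading entries, bounds the coefficient vector's $2$-norm by $\sqrt{2m-1}\,\eta^2$ (Lemma~\ref{lem:p_norm}), and finally solves a norm-constrained least-$2$-norm system to decode. Your approach instead uses a \emph{naive interpolation decoder} and controls the error through the Newton/Lagrange formula for the leading coefficient, i.e. the $(m{-}1)$-th divided difference, together with the complete homogeneous symmetric polynomial identity for $\sum_i x_i^{m-1+l}/\prod_{j\neq i}(x_j-x_i)$. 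That is precisely the argument the paper uses later in Theorem~\ref{thm:alt_prf} to prove the PolyDot case (Theorem~\ref{thm:approx_polydot}); and the paper explicitly observes (in the remark following Theorem~\ref{thm:approx_polydot} and the discussion of Remark~2's relaxed bound) that this alternative proof specialized to $p=1, q=m$ also establishes Theorem~\ref{thm:approx_matdot} with slightly improved constants. So you have independently reproduced the paper's second proof rather than its first.

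The tradeoff is instructive. The paper's null-space argument works directly with the set of \emph{all} coefficient vectors consistent with the observations, and hence bounds the error of any bounded-norm decoder; the price is the global $2$-norm bound $\sqrt{2m-1}\,\eta^2$ and the ``declare failure'' logic of the decoding algorithm. Your divided-difference route only needs the entrywise bound $\|c_{m-1+r}\|_\infty\le\eta^2$ on the tail coefficients, so the $\sqrt{2m-1}$ factor in \eqref{eq:appox_matdot_cond} is never actually used — it is slack — and the resulting guarantee is $|L_{m-1}[a,b]-c_{m-1}[a,b]|\le\eta^2(m-1)m\,\delta$ once $\delta<1/m$, matching the relaxed condition $|\lambda_i|<\min\bigl(\tfrac{\epsilon}{\eta^2 m(m-1)},\tfrac{1}{m}\bigr)$ the paper gives in its Remark. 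Two small bookkeeping points to tighten your write-up: what you call the ``$m$-th divided difference'' is conventionally the $(m{-}1)$-th (the $m$-point) divided difference $p_{\mbC}[\lambda_{i_1},\ldots,\lambda_{i_m}]$; and the step bounding $\sum_{r\ge1}\binom{m+r-1}{r}\delta^r$ should note that $\delta<1/m$ forces each term's ratio to the previous one, $\tfrac{m+r-1}{r}\delta\le m\delta<1$, to be below $1$, so every summand is at most the first and the whole sum is $\le(m-1)\,m\delta$ — this gives the clean closed-form constant rather than a ``geometric-type'' heuristic. With those adjustments the argument is complete and actually yields a sharper threshold than the one quoted in Construction~\ref{const:approx_matdot}.
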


\begin{remark}
When $\epsilon \geq \min (2, 3\eta^2 \sqrt{2m-1})$, we can use $\epsilon'$-Approximate MatDot codes for some $\epsilon' < \min (2, 3\eta^2 \sqrt{2m-1})$. Then, \eqref{eq:app_matdot_thm} can be expressed as:
\begin{equation}
    K(m,\epsilon, \mbf_{\epsilon'\text{-MatDot}}, \mbg_{\epsilon'\text{-MatDot}}, \mbd_{\epsilon'\text{-MatDot}}) = m.
\end{equation}
\end{remark}

\begin{remark}
The error bound provided by Theorem~\ref{thm:approx_matdot} is an absolute bound, i.e., $||\widehat{\mbC} - \mbC||_{\max} \leq \epsilon$. This is because we choose the evaluation points which are scaled by $\eta$, which is the upper bound of $||\mbA||_F$ and $||\mbB||_F$ as given in \eqref{eq:appox_matdot_cond}. If we do not assume prior knowledge on the upper bound of $||\mbA||_F$ and $||\mbB||_F$, we can choose $\lambda_i$'s to be some small numbers, e.g., $|\lambda_i| \leq \Delta$, and then the error bound will be a relative bound on $\frac{||\widehat{\mbC} - \mbC||_{\max}}{||\mbA||_F  ||\mbB||_F}$. Furthermore, note that the bound on the max norm can be easily converted to bounds on other types of norm (e.g., Frobenius norm or 2-norm) within a constant factor using matrix norm equivalence relations. 
\end{remark}
While we defer the full proof to Appendix~\ref{app:approx_matdot}, we provide an intuitive explanation of the above theorem. 
\subsection{An insight behind Approximate MatDot Codes}
% This insight derives from Gauss-Legendre curvature where we can reconstruct $\int_{-1}^{1} f(x) dx$ with $m$ evaluations of $f$, as long as the degree of $f$ is at most $2m-1$. The claim is that if we have fixed evaluation points, as in Newton-Cotes quadrature, we need $2m-1$ evaluation points to recover the integral. But, if we have the freedom to choose $m$ evaluation points that are convenient for us, we only need $m$ evaluations. In their case, they are the roots of a Legendre polynomial. In our case, we do have ``some'' control over evaluation points because we have complete freedom for choosing evaluation points, but we need any subset of $m$ points to work well, out of $P$ points.
% \vc{I suggest NOT referring to the quadrature rules above. While you might have been "inspired" by them, the discussion that follows does not pertain to quadrature rules, as no integration is involved. It can confuse the reader (and it confused me the first time!)}

 Let $S(x)$ be a polynomial of degree $2m-1$ and let $P(x)$ be a polynomial of degree $m$. Then, $S(x)$ can be written as:
\begin{equation}
    S(x) = P(x) Q(x) + R(x),
\end{equation}
and the degree of $Q$ and $R$ are both at most $m-1$. Now, let $\lambda_1, \ldots, \lambda_m$ be the roots of $P(x)$. Then, 
\begin{equation}
    S(\lambda_i) = R(\lambda_i). 
\end{equation}
If we have $m$ evaluations at these points, we can exactly recover the coefficients of the polynomial $R(x)$. 

Recall that we only need the coefficient of $x^{m-1}$ in MatDot codes. Letting $P(x)=x^m$, $S(x)$ can be written as:
\begin{equation}
    S(x) = x^m Q(x) + R(x).
\end{equation}
Since the lower order terms are all in $R(x)$, the coefficient of $x^{m-1}$ in $R(x)$ is equal to the coefficient of $x^{m-1}$ in $S(x)$. Thus, recovering the coefficients of $R(x)$ is sufficient for MatDot decoding. However, $x^m$ has only one root, $0$. For approximate decoding, we can use points close to $0$ as evaluation points to make $x^m \approx 0$. Then, we have:
\begin{equation}
    S(\lambda_i) = \lambda_i^m Q(\lambda_i) + R(\lambda_i) \approx R(\lambda_i). 
\end{equation}
When $|S(\lambda_i)-R(\lambda_i)|$ is small, we can use $m$ evaluations of $S(\lambda_i)$'s to approximately interpolate $R(x)$. Moreover, when $\lambda_i$ is small, we can also bound $|S(\lambda_i)-R(\lambda_i)|$ when $Q$ has a bounded norm. In our case, $S$ has a bounded norm due the norm constraints \eqref{eq:norm-constraint} on the input matrices and, thus, $Q$ must have a bounded norm since the higher-order terms in $S$ are solely determined by $Q$. 

\subsection{Converse}

We have shown that for \textit{any} matrices $\mbA$ and $\mbB$, and with a recovery threshold of $m$,  $\epsilon$-approximate MatDot codes  can achieve arbitrarily small error. We now show a converse indicating that for a recovery threshold of $m-1$, there exists matrices $\mbA\in \mathbb{R}^{n\times n}$ and $\mbB\in \mathbb{R}^{n\times n}$ where the error cannot be made arbitrarily small for any type of encoding.

\begin{theorem}
\label{thm:con}
	Under the system model given in Section~\ref{sec:model}, for any  $0< \epsilon < \eta^2$, 
\begin{equation}
    K^{*}(m,\epsilon) \geq m.
\end{equation}
\end{theorem}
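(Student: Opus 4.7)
I will prove the converse by contradiction. Suppose some scheme $(\mbf,\mbg,\mbd)$ achieves $\epsilon$-approximate recovery from every subset of $m-1$ workers while $\epsilon<\eta^2$. I will exhibit two norm-bounded input pairs $(\mbA,\mbB^{(1)})$ and $(\mbA,\mbB^{(2)})$ that produce identical worker outputs on some $\calS$ of size $m-1$ yet whose true products differ by $2\eta^2$ at a single entry. The triangle inequality will then force the decoder's per-entry error to exceed $\eta^2>\epsilon$, a contradiction. The construction works cleanly for the MatDot layout $p=1,\,q=m$, which I treat first; the general $p\times q$ case follows the same skeleton but needs extra care, as discussed below.

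\textbf{Construction and collision.} Fix any $\calS\subseteq[P]$ with $|\calS|=m-1$ and collect the scalar encoding coefficients of $\tilde{\mbA}_i=\sum_{s=1}^{m}\alpha_{i,s}\mbA_s$ into a matrix $\alpha\in\bbR^{(m-1)\times m}$. Since $\alpha$ has rank at most $m-1<m$, there exists a unit vector $\mbu\in\ker(\alpha)$. Pick rank-one blocks $\mbA_0\in\bbR^{n\times n/m}$ and $\mbB_0\in\bbR^{n/m\times n}$, each of Frobenius norm $\eta$, whose product $\mbA_0\mbB_0$ has $(1,1)$-entry equal to $\eta^2$; for instance, the outer products of first standard basis vectors of the appropriate sizes. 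Set $\mbA_s:=u_s\mbA_0$, $\mbB_s^{(1)}:=u_s\mbB_0$, and $\mbB_s^{(2)}:=-u_s\mbB_0$. Using $\|\mbu\|_2=1$, each of $\mbA$, $\mbB^{(1)}$, $\mbB^{(2)}$ has Frobenius norm exactly $\eta$, so the norm constraint \eqref{eq:norm-constraint} is satisfied. For every $i\in\calS$, $\tilde{\mbA}_i=(\alpha\mbu)_i\mbA_0=\mb0$, hence $\tilde{\mbC}_i^{(1)}=\tilde{\mbC}_i^{(2)}=\mb0$. On the other hand, $\mbC^{(1)}=\bigl(\sum_s u_s^2\bigr)\mbA_0\mbB_0=\mbA_0\mbB_0$ and $\mbC^{(2)}=-\mbA_0\mbB_0$, whose $(1,1)$-entries are $\eta^2$ and $-\eta^2$. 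Since the decoder sees the same zero inputs from $\calS$ in both cases, it outputs a single $\widehat{\mbC}$, and the triangle inequality at entry $(1,1)$ yields $2\eta^2=|C^{(1)}[1,1]-C^{(2)}[1,1]|\leq 2\epsilon$, contradicting $\epsilon<\eta^2$.

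\textbf{Main obstacle.} Extending the argument to general $p\times q$ layouts with $pq=m$ is the delicate step. There $\mbu$ is indexed by $(r,s)\in[p]\times[q]$ with $\|\mbu\|_2=1$, and the analogous construction $\mbA_{r,s}=u_{r,s}\mbA_0$ yields $\mbC_{r,t}=\mbA_0\sum_s u_{r,s}\mbB_{s,t}$; the sup-norm of $\mbC^{(1)}-\mbC^{(2)}$ attainable under a norm-$\eta$ budget on $\mbB$ then scales with $\max_r\|\mbu[r,:]\|_2$, which is only guaranteed to be $\geq 1/\sqrt{p}$ and so falls short of the constant $\eta^2$ required. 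To close this gap I would either permute block indices so that the heaviest row of $\mbu$ aligns with the target output block, or additionally invoke a kernel direction of the $\mbg$-encoding so that both $\tilde{\mbA}_i$ and $\tilde{\mbB}_i$ vanish, which lets me concentrate the rank-one perturbation into a single $(r,t)$-block and recover the tight constant $\eta^2$.
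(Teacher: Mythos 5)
Your argument for the MatDot layout $p=1$, $q=m$ is correct and is essentially the paper's construction specialized to that case: find a unit vector $\mbu$ in the kernel of the $\mbA$-encoding restricted to an arbitrary responding set $\calS$ of size $m-1$, set the blocks of $\mbA$ and of the two candidate $\mbB$'s proportional to $\pm u_s$ times fixed rank-one matrices of Frobenius norm $\eta$, observe that all worker products on $\calS$ vanish, and split the resulting $2\eta^2$ gap at the $(1,1)$ entry by the triangle inequality. The paper realizes the same rank-one concentration through Lemma~\ref{lem:con} with $\bar{\mbA}=\mbx\mby^T$, $\bar{\mbB}=\mby\mbz^T$; the constants agree.

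However, $K^*(m,\epsilon)$ in \eqref{eq:K_approx_def} is a minimum over \emph{all} block layouts $p,q$ with $pq=m$, so stopping at $p=1$ leaves a genuine gap, which you acknowledge but do not close. Neither of your sketched repairs works. Permuting block indices cannot improve the worst-case $\max_r\|\mbu[r,:]\|_2\ge 1/\sqrt{p}$ bound, so the $\sqrt{p}$ loss persists. The dual-kernel idea is worse: taking $\mbu\in\ker(\mbf_\calS)$ and $\mbv\in\ker(\mbg_\calS)$, reshaped to $U\in\bbR^{p\times q}$ and $V\in\bbR^{q\times p}$, and setting blocks of $\mbA$ and $\mbB$ proportional to their entries yields $\mbC\propto UV\otimes(\mbA_0\mbB_0)$, and $UV$ can be identically $\mb0$ (when the column space of $V$ lies in the null space of $U$), at which point both products are zero and no contradiction is obtained; since we must defeat an adversarial encoder, this failure mode cannot be ruled out. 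The paper's route for $p>1$ avoids both problems by coupling the two inputs through a single algebraic object: it sets $\mbA=Q\otimes\bar{\mbA}$ and $\mbB^{(\pm)}=\pm Q^T\otimes\bar{\mbB}$ where only the vectorization of $Q$ is required to lie in $\ker(\mbf_\calS)$. Because each worker's output is $\tilde{\mbA}_i\tilde{\mbB}_i$, zeroing $\tilde{\mbA}_i$ alone silences the worker regardless of $\tilde{\mbB}_i$, so no $\mbg$-kernel is needed; and $\mbA\mbB^{(\pm)}=\pm(QQ^T)\otimes(\bar{\mbA}\bar{\mbB})$ with $QQ^T$ automatically symmetric PSD and nonzero whenever $Q\neq\mb0$, which is exactly what keeps the product from vanishing. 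That reuse of one kernel vector for $\mbA$ and, transposed, for $\mbB$ is the structural idea missing from your proposal for $p>1$.
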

Proof is given in Appendix~\ref{app:conproof}.
% \begin{proof}[Proof Sketch]
%     %See Appendix~\ref{app:conproof} for the full proof.
%     The idea of the proof is to choose a particular matrix $\mbA$ such that its encoded version $\bm{f}_{\mathcal{S}}(\mbA)=\bm{0}$, which would make the computation output of each worker to be  $\bm{0}$. Then, via an appropriate choice of $\mathbf{B}$, we use the triangle inequality to lower bound the norms of the sums of $(\bm{d}_{\mathcal{S}}(\bm{0})-\mbA\mbB)$ or $(\bm{d}_{\mathcal{S}}(\bm{0})+\mbA\mbB)$ which leads to the desired bound.
% \end{proof}

\subsection{Approximate PolyDot Codes} 
The construction of $\epsilon$-approximate MatDot codes achieves the optimal $\epsilon$- approximate recovery threshold, but is limited to $p=1, q=m$ in \eqref{eq:ABmodel}. For arbitrary $p$ and $q$, the recovery threshold of $p^2q + q -1$ is achieved by PolyDot codes (Entangled-Poly codes). In this section, we show that---similarly to MatDot Codes---the recovery threshold of Polydot codes can be improved  by allowing an $\epsilon$-approximation of the matrix multiplication and selecting evaluation points near zero. We briefly review next the construction of PolyDot codes \cite{MatDotITTrans} (also known as Entangled-Poly codes \cite{yu2020transIT}). 

\begin{construction} [PolyDot (Entangled-Poly) Codes \cite{MatDotITTrans,yu2020transIT}] \label{const:polydot}
In \cite{MatDotITTrans}, a general framework for PolyDot codes is proposed as follows:
\begin{equation}
    p_{\mbA}(x,y) = \sum_{i=1}^{p} \sum_{j=1}^{q} \mbA_{i,j} x^{i-1} y^{j-1}, \;\; p_{\mbB}(y,z) = \sum_{k=1}^{q} \sum_{l=1}^{p} \mbB_{k,l} y^{q-k} z^{l-1},
\end{equation}
where the input matrices  are split as \eqref{eq:ABmodel}. Substituting $x = y^q$ and $z =y^{pq}$ results in Entangled-Poly codes~\cite{yu2020transIT}:
\begin{equation}
    p_{\mbA}(y) = \sum_{i=1}^{p} \sum_{j=1}^{q} \mbA_{i,j} y^{q(i-1) + (j-1)}, \;\; 
    p_{\mbB}(y) = \sum_{k=1}^{q} \sum_{l=1}^{p} \mbB_{k,l} y^{q-k + pq(l-1)}.
\end{equation}
In the product polynomial $p_{\mbC}(y) = p_{\mbA}(y) p_{\mbB}(y) $, the coefficient of $y^{(i-1)q + q-1 + pq(l-1)} = y^{iq +pq(l-1) -1}$ is $\mbC_{i,l} = \sum_{k=1}^{q} \mbA_{i,k} \mbB_{k,l}$. The degree of $p_{\mbC}$ is:
$$(p-1)q +(q-1) + (q-1) + pq(p-1) = p^2q +q -2.$$
Hence, the recovery threshold of $p^2q +q -1$ is achievable.
\end{construction}

We describe next a construction for $\epsilon$-approximate PolyDot codes. 
\begin{construction}[$\epsilon$-Approximate PolyDot codes] \label{const:approx_polydot}
Let $\mbA$ and $\mbB$ be matrices in $\bbR^{n\times n}$ that satisfy $||\mbA||_F, ||\mbB||_F \leq \eta$ and let $\epsilon > 0$ be a constant. 
Then, $\epsilon$-Approximate PolyDot code is a PolyDot code defined in Construction~\ref{const:polydot} with evaluation points $\lambda_1, \ldots, \lambda_P$ that satisfy:
\begin{equation}\label{eq:appox_polydot_cond}
    |\lambda_i| < \min\left( \frac{\epsilon}{\eta^2 q (p^2q-1)}, \frac{1}{p^2q-1}\right), \;\; i\ \in [P].
\end{equation}
\end{construction}

The following theorem states that the recovery threshold can be reduced by $q-1$ by allowing  $\epsilon$-approximate recovery. 
\begin{theorem}\label{thm:approx_polydot}
For any $\epsilon>0$, the $\epsilon$-approximate PolyDot codes in Construction~\ref{const:approx_polydot} achieves:
\begin{equation*}
    K(m,\epsilon, \mbf_{\epsilon\text{-PolyDot}}, \mbg_{\epsilon\text{-PolyDot}}, \mbd_{\epsilon\text{-PolyDot}}) = p^2q =pm ,
\end{equation*}
where $\mbf_{\epsilon\text{-PolyDot}}, \mbg_{\epsilon\text{-PolyDot}}, \mbd_{\epsilon\text{-PolyDot}}$ are encoding and decoding functions specified by Construction~\ref{const:approx_polydot}.
\end{theorem}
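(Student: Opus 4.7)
The plan is to follow the same blueprint as the proof of Theorem~\ref{thm:approx_matdot}, adapted to the degree structure of Entangled-Poly codes. Concretely, the product polynomial $p_{\mbC}(y)=p_{\mbA}(y)p_{\mbB}(y)$ has degree $p^2q+q-2$, but every ``target'' block $\mbC_{i,l}$ (with $i,l\in[p]$) is the coefficient of $y^{iq+pq(l-1)-1}$, and all of these indices are at most $p^2q-1$. Hence, first I would perform the polynomial division
\begin{equation*}
p_{\mbC}(y)=R(y)+y^{p^2q}\,Q(y),\qquad \deg R\le p^2q-1,\ \deg Q\le q-2,
\end{equation*}
and observe that all blocks $\mbC_{i,l}$ appear as coefficients of $R(y)$. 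Using the Frobenius-norm constraint~\eqref{eq:norm-constraint} together with Cauchy--Schwarz exactly as in the MatDot analysis, each matrix coefficient of $p_{\mbC}$ (and hence of $Q$) has Frobenius norm at most $\eta^2$.

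Given the $p^2q$ observations $\widetilde{\mbC}_i=p_{\mbC}(\lambda_i)=R(\lambda_i)+\lambda_i^{p^2q}Q(\lambda_i)$ at distinct evaluation points satisfying~\eqref{eq:appox_polydot_cond}, the decoder would Lagrange-interpolate a polynomial $\widetilde R$ of degree $\le p^2q-1$ through the points $(\lambda_i,\widetilde{\mbC}_i)$ and read off the target coefficients. The error polynomial $E(y)=\widetilde R(y)-R(y)$ is then the unique polynomial of degree $\le p^2q-1$ interpolating the perturbation values $e_i=\lambda_i^{p^2q}Q(\lambda_i)$, so each recovered block is off by the corresponding coefficient of $E$. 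The proof would then reduce to showing that, for every $d\in\{0,1,\dots,p^2q-1\}$, the degree-$d$ coefficient of $E$ has Frobenius norm at most $\epsilon$.

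To bound those coefficients I would use the divided-difference representation of Lagrange interpolation: the coefficient of $y^{d}$ in $E$ can be written as a linear combination of divided differences $[\lambda_{i_1},\dots,\lambda_{i_{d+1}}]\,e$, each of which, for $e(\lambda)=\lambda^{p^2q}Q(\lambda)$ with $Q$ of degree $\le q-2$ and coefficient norms $\le \eta^2$, expands into complete homogeneous symmetric polynomials $h_{k}(\lambda_{i_1},\dots,\lambda_{i_{d+1}})$ with $k\ge p^2q-d$. Using $|\lambda_i|\le \Delta$ with $\Delta$ as in~\eqref{eq:appox_polydot_cond}, each such symmetric polynomial is bounded by a binomial count times $\Delta^{k}$, and the worst case arises at the highest target degree $d=p^2q-1$, where one factor of $\Delta$ survives. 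This produces a bound of the form $\|E\|_{\max}\le \eta^2 q(p^2q-1)\Delta$, and the two terms inside the minimum in~\eqref{eq:appox_polydot_cond} are precisely what is needed to make this bound $\le\epsilon$ while keeping $|\lambda_i|\le 1$ so that $|Q(\lambda_i)|\le (q-1)\eta^2$.

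The main obstacle, as in the MatDot case, is controlling the apparent ill-conditioning of Vandermonde interpolation at evaluation points clustered near zero: naively, entries of the inverse Vandermonde blow up, but the factor $\lambda_i^{p^2q}$ in each perturbation $e_i$ cancels this blow-up at every target degree $d\le p^2q-1$. Making this cancellation rigorous via the divided-difference identity $\sum_i e_i/\!\prod_{j\ne i}(\lambda_i-\lambda_j)=[\lambda_1,\dots,\lambda_{p^2q}]e$ and the symmetric-polynomial estimates above is the technical heart of the argument; once this is established, plugging in~\eqref{eq:appox_polydot_cond} finishes the proof and yields $K(m,\epsilon,\mbf_{\epsilon\text{-PolyDot}},\mbg_{\epsilon\text{-PolyDot}},\mbd_{\epsilon\text{-PolyDot}})=p^2q$.
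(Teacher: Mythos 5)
Your approach is sound in spirit and converges to the same bound, but it genuinely differs from the paper's in one key structural choice that matters for the amount of bookkeeping. You fix the decoder to interpolate a single degree-$(p^2q-1)$ polynomial $\widetilde R$ through \emph{all} $p^2q$ responses and then read off every target block $\mbC_{i,l}$ from the coefficients of $\widetilde R$. The paper instead decodes each block separately: for $\mbC_{i,l}$, which sits at degree $d_{i,l}-1$ with $d_{i,l}=iq+pq(l-1)$, it uses only $d_{i,l}$ of the responses and inverts a $d_{i,l}\times d_{i,l}$ Vandermonde. The point of this variable-size trick is that the target coefficient is then always the \emph{last} coefficient of the low-degree remainder, and the last row of $\mbV^{-1}$ has the elementary closed form $v[i]=1/\prod_{j\neq i}(x_j-x_i)$; combining that with the identity $\sum_i x_i^{m-1+l}/\prod_{j\neq i}(x_j-x_i)=h_l(\mbx)$ (Theorem~\ref{thm:alt_prf} in the appendix, via \cite{cornelius2011identities}) immediately yields the $\|\mba\|_\infty(k-m)m\delta$ bound, which the paper then just plugs in with $k=p^2q+q-1$, $m=d_{i,l}$, and Lemma~\ref{lem:polydot_norm}. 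Your route requires bounding coefficients of $E$ at \emph{every} degree $0,\dots,p^2q-1$, not just the top one; the rows of $\mbV^{-1}$ below the last carry elementary-symmetric-polynomial factors $e_{p^2q-1-d}(\mblambda_{\setminus i})$ in the numerator, or equivalently the Newton-to-monomial conversion you invoke mixes divided differences with products $\prod_{i}(y-\lambda_i)$ expanded via $e_k(\mblambda)$. This works — the extra $e_k$ factors only contribute additional powers of $\Delta$, so the cancellation you describe does go through and the dominant term is still the degree-$(p^2q-1)$ coefficient giving $\sum_{s=0}^{q-2}\binom{p^2q+s}{s+1}\Delta^{s+1}\le q(p^2q-1)\Delta$ under $\Delta<1/(p^2q-1)$ — but it is more technical work than the paper needs, and your sketch glosses over the Newton-to-monomial step. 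The paper's variable-point decoder buys it the clean "last-row only" estimate; your fixed-size decoder buys a simpler decoding specification (one Vandermonde solve for all blocks) at the cost of heavier estimates.
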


Notice that PolyDot codes are a generalized version of MatDot codes, i.e., by setting $p=1, q=m$, Construction~\ref{const:polydot} reduces to MatDot codes. Hence, the result in Theorem~\ref{thm:approx_polydot} also applies to $\epsilon$-approximate MatDot codes. In fact, the proof of this theorem yields a slightly improved  error bound of $\epsilon$-approximate MatDot codes given in Section~\ref{subsec:approx_matdot}. 

\begin{remark}
The condition on the evaluation points $\lambda_i$'s in Construction~\ref{const:approx_matdot} can be relaxed to: 
\begin{equation}
    |\lambda_i| < \min\left( \frac{\epsilon}{\eta^2 \cdot m (m-1)}, \frac{1}{m}\right), \;\; i\ \in [P].
\end{equation}
\end{remark}

The techniques used in the proofs of Theorem \ref{thm:approx_matdot} and Theorem \ref{const:approx_polydot} are distinct, yet both proofs are sufficient to demonstrate the recovery threshold of $\epsilon$-approximate MatDot codes. We present both in this paper since they may serve as blueprints for future $\epsilon$-approximate code constructions.

\section{An Optimization Approach to Approximate Coded Computing} \label{sec:opt}

The  Approximate MatDot code construction shows the theoretical possibility that the recovery threshold can be brought down from $2m-1$ to $m$.
 
In this section, we propose another approach to find an approximate coded computing strategy. As we are not aiming for zero error, we pose the question as an optimization problem where the difference between the original matrix and the reconstructed matrix is minimized. The goal of optimization is  to find $\epsilon$ such that $K^{*}(m, \epsilon) \leq k$ for a given $k$, within the space of linear encoding and decoding functions.

% We have shown that using MatDot codes we can achieve a recovery threshold of $m$ for arbitrary error. We propose a novel idea of using optimization to find an approximate coded computing scheme that finds arbitrary codes for given recovery threshold. 
% \[\epsilon = \min_{\bm{f},\bm{g},\bm{d}} F(\bm{f},\bm{g},\bm{d};m,P,k_0)\]
% where $F$ is some objective function over $\bm{f},\bm{g},\bm{d}$ matrices, $1\leq k_0 \leq P$ is the desired recovery threshold. The optimization formulation aims to find $\epsilon$ such that $K^*(m,\epsilon) \leq k_0$.

In Section~\ref{subsec:opt_example}, we illustrate our optimization framework through a simple setting of $P=3 $ nodes. In Section~\ref{subsec:opt_method}, we describe our formulation formally, for arbitrary values of parameters $P,K,m.$. We report numerical  results for our optimization algorithm in Section~\ref{subsec:opt_result}.
\subsection{A simple example} \label{subsec:opt_example}
Consider an example of $m=2, k=2, P= 3$. The input matrices are split into: 
\begin{equation}
    \mbA = \begin{bmatrix} \mbA_1 & \mbA_2 \end{bmatrix}, \quad \mbB = \begin{bmatrix} \mbB_1 \\ \mbB_2 \end{bmatrix}.
\end{equation}
As $f_i$'s and $g_i$'s are linear encoding functions, let
\small
\begin{equation}
    \mbalpha^{(i)} = \begin{bmatrix}
     \alpha_1^{(i)} \\  \alpha_2^{(i)}
    \end{bmatrix}, \mbbeta^{(i)} = \begin{bmatrix}
     \beta_1^{(i)} \\ \beta_2^{(i)}
    \end{bmatrix}
\end{equation}
\normalsize
be the encoding coefficients for $\mbA$ and $\mbB$ for the $i$-th node. The $i$-th worker node receives encoded inputs:
\begin{equation*}
    \widetilde{\mbA}^{(i)} = \alpha^{(i)}_1 \mbA_1 + \alpha^{(i)}_2 \mbA_2, \;\; \widetilde{\mbB}^{(i)} = \beta^{(i)}_1 \mbB_1 + \beta^{(i)}_2 \mbB_2.
\end{equation*}
The matrix product output at the $i$-th worker node is:
\begin{align*}
     \widetilde{\mbC}^{(i)} = \widetilde{\mbA}^{(i)} \widetilde{\mbB}^{(i)} &=  \alpha^{(i)}_1 \beta^{(i)}_1 \cdot
    \mbA_1 \mbB_1 + 
    \alpha^{(i)}_1 \beta^{(i)}_2 \cdot \mbA_1 \mbB_2\\  &+
    \alpha^{(i)}_2 \beta^{(i)}_1 \cdot \mbA_2\mbB_1 + 
    \alpha^{(i)}_2  \beta^{(i)}_2 \cdot \mbA_2 \mbB_2.
    % \\
    % &= \begin{bmatrix} \alpha^{(i)}_1 \beta^{(i)}_1 \\ \alpha^{(i)}_1 \beta^{(i)}_2 \\ \alpha^{(i)}_2 \beta^{(i)}_1  \\ \alpha^{(i)}_2  \beta^{(i)}_2 \end{bmatrix} \odot \begin{bmatrix} \mbA_1 \mbB_1 \\ \mbA_1 \mbB_2 \\ \mbA_2 \mbB_1 \\ \mbA_2 \mbB_2 \end{bmatrix},
\end{align*}

The recovery threshold $k =2$ implies that with any two $\widetilde{\mbC}^{(i)}$, $\widetilde{\mbC}^{(j)}$, $i\neq j,~i, j\in [3]$, the master node can recover: 
\begin{align*}
    \mbC &= \mbA_1 \mbB_1 + \mbA_2 \mbB_2 = 1 \cdot \mbA_1 \mbB_1 + 0 \cdot \mbA_1 \mbB_2 +  0 \cdot \mbA_2 \mbB_1 + 1 \cdot \mbA_2 \mbB_2.
    % = \begin{bmatrix} 1 \\ 0 \\ 0 \\ 1 \end{bmatrix} \odot \begin{bmatrix} \mbA_1 \mbB_1 \\ \mbA_1 \mbB_2 \\ \mbA_2 \mbB_1 \\ \mbA_2 \mbB_2 \end{bmatrix}.
\end{align*}
For illustration, assume that nodes $i=1$ and $j=2$ responded first. For linear decoding, our goal is to determine decoding coefficients $d_1,d_2\in \bbR$ that yield
\begin{align*}
    \mbC = d_1 \widetilde{\mbC}^{(1)} + d_2 \widetilde{\mbC}^{(2)}.
\end{align*}
% i.e., 
% \begin{align*}
%     \begin{bmatrix} 1 \\ 0 \\ 0 \\ 1 \end{bmatrix} \odot \begin{bmatrix} \mbA_1 \mbB_1 \\ \mbA_1 \mbB_2 \\ \mbA_2 \mbB_1 \\ \mbA_2 \mbB_2 \end{bmatrix} &= 
%     \Bigg( d_1 \begin{bmatrix} \alpha^{(1)}_1 \beta^{(1)}_1 \\ \alpha^{(1)}_1 \beta^{(1)}_2 \\ \alpha^{(1)}_2 \beta^{(1)}_1  \\ \alpha^{(1)}_2  \beta^{(1)}_2 \end{bmatrix} + d_2 \begin{bmatrix} \alpha^{(2)}_1 \beta^{(2)}_1 \\ \alpha^{(2)}_1 \beta^{(2)}_2 \\ \alpha^{(2)}_2 \beta^{(2)}_1  \\ \alpha^{(2)}_2  \beta^{(2)}_2 \end{bmatrix} \\ 
%     &+ d_3 \begin{bmatrix} \alpha^{(3)}_1 \beta^{(3)}_1 \\ \alpha^{(3)}_1 \beta^{(3)}_2 \\ \alpha^{(3)}_2 \beta^{(3)}_1  \\ \alpha^{(3)}_2  \beta^{(3)}_2 \end{bmatrix} \Bigg) \odot \begin{bmatrix} \mbA_1 \mbB_1 \\ \mbA_1 \mbB_2 \\ \mbA_2 \mbB_1 \\ \mbA_2 \mbB_2 \end{bmatrix}
% \end{align*}
For the previous equality to hold for any $\mbA$ and $\mbB$, the coefficients must satisfy: 
\small
\begin{align}\label{eq:ex_enc_dec}
    \begin{bmatrix} 1 &  0 & 0 & 1 \end{bmatrix} &= 
     d_1 \begin{bmatrix} \alpha^{(1)}_1 \beta^{(1)}_1 & \alpha^{(1)}_1 \beta^{(1)}_2 & \alpha^{(1)}_2 \beta^{(1)}_1  & \alpha^{(1)}_2  \beta^{(1)}_2 \end{bmatrix} \nonumber \\ 
     &+ d_2 \begin{bmatrix} \alpha^{(2)}_1 \beta^{(2)}_1 & \alpha^{(2)}_1 \beta^{(2)}_2 & \alpha^{(2)}_2 \beta^{(2)}_1  & \alpha^{(2)}_2  \beta^{(2)}_2 \end{bmatrix} 
\end{align}
\normalsize
By reshaping the length-4 vectors in \eqref{eq:ex_enc_dec} into $2\times 2$ matrices and denoting the identity matrix by $\mbI_{2\times 2}$, \eqref{eq:ex_enc_dec} is equivalent to
\begin{equation} \label{eq:ex_enc_dec2}
    \mbI_{2\times 2} = \sum_{i=1}^{2} d_i \mbalpha^{(i)}\mbbeta^{(i)T}. 
\end{equation}

Encoding coefficients $\mbalpha^{(i)}$'s, $\mbbeta^{(i)}$'s and the decoding coefficients $d_i$'s that satisfy the equality in~\eqref{eq:ex_enc_dec2} would guarantee exact recovery for any input matrices $\mbA$ and $\mbB$. However, we are interested in \emph{approximate} recovery, which means that we want the LHS and RHS in~\eqref{eq:ex_enc_dec2} to be approximately equal. Hence, the goal of optimization is to find encoding and decoding coefficients that minimize the difference between LHS and RHS in~\eqref{eq:ex_enc_dec2}. One possible objective function for this is:
\begin{equation}
    || \mbI_{2\times 2} - \sum_{i=1}^{2} d_i \mbalpha^{(i)}\mbbeta^{(i)T} ||_F^2.
\end{equation}
Recall that this is for the scenario where the third node fails and the first two nodes are successful. There are $\binom{3}{2} = 3$ scenarios where two nodes out of three nodes are successful. For the final objective function, we have to add such loss function for each of these three scenarios. We formalize this next. 

\subsection{Optimization Formulation} \label{subsec:opt_method}
We formulate the optimization framework for arbitrary values of $m$, $k$ and $P$. We denote the encoding coefficients for the $i$-th  node as:
\begin{align*}
    \bm{\alpha}^{(i)} = [\alpha_1^{(i)}, \cdots, \alpha_m^{(i)} 
    ]^T,~ \bm{\beta}^{(i)} = [\beta_1^{(i)}, \cdots, \beta_m^{(i)} ]^T. 
\end{align*}
Let $\calP_k([P]) = \{ \calS: \calS \subseteq [P], |\calS|=k \}$ and let $\calS_p$ be the $p$-th set in $\calP_k([P])$. In other words, $\calP_k([P])$ is a set of all failure scenarios with $k$ successful nodes out of $P$ nodes. Then, we define $\mbd^{(p)}$ as the vector of decoding coefficients when $\mathcal{S}_{p}$ is the set of successful workers. We define our optimization problem as follows:
\begin{mdframed}
\textbf{Optimization for Approximate Coded Computing}:
\begin{align}
% &\min_{\mbalpha,\mbbeta,\bm{d}} F(\mbalpha,\mbbeta,\bm{d};m,n,k)= \nonumber\\
&\min_{\substack{\mbalpha^{(i)}, \; \mbbeta^{(i)}, \; \mbd^{(p)} \\ i= 1,\ldots,n, \\ p =1,\ldots, \binom{P}{k}} } \sum_{p=1}^{\binom{P}{k}} || \mbI_{m \times m} - \sum_{i \in \mathcal{S}_p} d_i^{(p)} \mbalpha^{(i)} \mbbeta^{(i) T} ||_F^2 . \label{eq:alt_opt}
\end{align}
\end{mdframed}

Notice that \eqref{eq:alt_opt} is a non-convex problem, but it is convex with respect to each coordinate, i.e., with respect to $\{\mbalpha^{(i)}:i \in [n]\}$, $\{\mbbeta^{(i)}:i\in [n]\}$, and $\{\mbd^{(p)}:p \in \left[\binom{P}{k}\right]\}$. Hence, we propose an alternating minimization algorithm that minimizes for $\mbd^{(p)}$, $\mbalpha^{(i)}$, and $\mbbeta^{(i)}$ sequentially. Each minimization step is a quadratic optimization with a closed-form solution, which we describe in the following proposition. The notation used in the proposition and in Algorithm~\ref{alg:alt_opt} is summarized in Table~\ref{tab:notations}.

\begin{table}[t]
    \centering
    \begin{tabular}{ccc}
        \toprule
        Symbol & Dimension & Expression \\ 
        \midrule
        $\mbcalA$ & $m\times P$ & $\begin{bmatrix} \mbalpha^{(1)}& \cdots & \mbalpha^{(P)} \end{bmatrix}$ \\ 
        \hline
        $\mbcalB$ &  $m\times P$ & $\begin{bmatrix} \mbbeta^{(1)}& \cdots & \mbbeta^{(P)} \end{bmatrix}$\\ 
        \hline
        $\mbZ^{(\text{full})}$ & $P \times P$ &   $(\mbcalA^T \mbcalA) \odot (\mbcalB^T \mbcalB)$ \\ 
        \hline 
        $\mbz^{(\text{full})}$ & $P$ &  $[\mbalpha^{(i)} \cdot \mbbeta^{(i)}]_{i=1,\ldots,P}$ \\ 
        \hline 
        $\mbZ^{(p)}$ & $k \times k$ & $\mbZ^{(\text{full})}|_{i\in \calS_p, j \in \calS_p}$ \\ \hline
        $\mbz^{(p)}$ & $k \time 1$ & $\mbz^{(\text{full})}|_{i\in \calS_p}$ \\ \hline 
        $\mbY$ & $P \times P$ & $\begin{bmatrix}
         \sum_{p: i,j \in \calS_p} d_i^{(p)} d_j^{(p)} \end{bmatrix}_{\substack{i=1,\ldots,P \\j=1,\ldots, P}}$ \\ \hline
        $\mby$ & $P$ & $\begin{bmatrix}
         \sum_{p: i\in \calS_p} d_i^{(p)}
        \end{bmatrix}_{i=1,\ldots,P}$ \\ \hline
        $\mbY_{\mbcalA}$, $\mbY_{\mbcalB}$ &$P \times P$ & $\mbY_{\mbcalA} = \mbY \odot (\mbcalA^T \mbcalA)$,  $\mbY_{\mbcalB} = \mbY \odot (\mbcalB^T \mbcalB)$ \\ \bottomrule
    \end{tabular}
    \vspace{1em}
    \caption{Summary of Notations used in Proposition~\ref{prop:opt_stable} and Algorithm~\ref{alg:alt_opt}}
    \label{tab:notations}
\end{table}

\begin{proposition}\label{prop:opt_stable}
The stationary points of the objective function given in \eqref{eq:alt_opt} satisfy 
\vspace{-0.8em}
\begin{align*}
    &\text{(i) } \;\mbZ^{(p)} \cdot \mbd^{(p)} = \mbz^{(p)} \;\; \text{ for } \; p= 1, \ldots, \binom{n}{k}, \\ 
    &\text{(ii) } \; \mbY_{\mbcalB} \mbcalA = \mathrm{diag}(\mby) \mbcalB, \\
    &\text{(iii) } \;  \mbY_{\mbcalA} \mbcalB = \mathrm{diag}(\mby) \mbcalA,
\end{align*}
where $\mathrm{diag}(\mby)$ is an $n$-by-$n$ matrix which has $y_i$ on the $i$-th diagonal and $0$ elsewhere. 
\end{proposition}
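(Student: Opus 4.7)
My plan is to derive all three stationarity conditions by computing the partial gradients of the objective $F$ in~\eqref{eq:alt_opt} with respect to the three blocks of variables and setting them to zero. The key algebraic simplification is to expand each summand $\|\mathbf{M}_p\|_F^2 = \mathrm{Tr}(\mathbf{M}_p^T \mathbf{M}_p)$ using the cyclic property of trace, which converts the rank-one outer products $\mbalpha^{(i)}\mbbeta^{(i)T}$ into scalar inner products and yields
\begin{equation*}
\Bigl\| \mbI_{m \times m} - \sum_{i \in \calS_p} d_i^{(p)} \mbalpha^{(i)} \mbbeta^{(i)T}\Bigr\|_F^2 = m - 2\sum_{i \in \calS_p} d_i^{(p)}\, \mbalpha^{(i)T}\mbbeta^{(i)} + \sum_{i,j \in \calS_p} d_i^{(p)} d_j^{(p)} \bigl(\mbalpha^{(i)T}\mbalpha^{(j)}\bigr)\bigl(\mbbeta^{(i)T}\mbbeta^{(j)}\bigr).
\end{equation*}
Summing over $p$ expresses $F$ purely in terms of the Gram entries $(\mbcalA^T\mbcalA)_{ij}$ and $(\mbcalB^T\mbcalB)_{ij}$, which is exactly the structure codified in Table~\ref{tab:notations}; the remaining work is then almost purely symbolic.

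Condition~(i) follows immediately: only the $p$-th summand depends on the scalar $d_i^{(p)}$, so $\partial F / \partial d_i^{(p)} = 0$ produces the scalar equation $\sum_{j \in \calS_p}\bigl(\mbalpha^{(i)T}\mbalpha^{(j)}\bigr)\bigl(\mbbeta^{(i)T}\mbbeta^{(j)}\bigr) d_j^{(p)} = \mbalpha^{(i)T}\mbbeta^{(i)}$. By the definition $\mbZ^{(\mathrm{full})} = (\mbcalA^T\mbcalA) \odot (\mbcalB^T\mbcalB)$, this is exactly the $i$-th row of $\mbZ^{(p)}\mbd^{(p)} = \mbz^{(p)}$, giving~(i).

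For~(ii) and~(iii), which are exchanged under $\mbalpha \leftrightarrow \mbbeta$ in the objective, I will carry out a single gradient computation for $\mbalpha^{(i)}$ and invoke symmetry for the other. Differentiating $F$ with respect to $\mbalpha^{(i)}$ collects contributions from every failure pattern $p$ with $i \in \calS_p$, and after using symmetry of the quadratic term in $(i,j)$, the condition $\nabla_{\mbalpha^{(i)}} F = 0$ becomes the vector identity
\begin{equation*}
\Bigl(\sum_{p:\, i \in \calS_p} d_i^{(p)}\Bigr)\mbbeta^{(i)} = \sum_{j=1}^{P}\Bigl(\sum_{p:\, i,j \in \calS_p} d_i^{(p)} d_j^{(p)}\Bigr)\bigl(\mbbeta^{(i)T}\mbbeta^{(j)}\bigr)\,\mbalpha^{(j)}.
\end{equation*}
The prefactor on the left is exactly $y_i$, and the coefficient of $\mbalpha^{(j)}$ on the right is $Y_{ij}(\mbcalB^T\mbcalB)_{ij} = (\mbY_{\mbcalB})_{ij}$. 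Assembling these identities across $i \in [P]$ recovers the matrix form in~(ii), and the analogous derivation with $\nabla_{\mbbeta^{(i)}} F = 0$ yields~(iii).

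The main obstacle I anticipate is not algebraic difficulty but careful bookkeeping in the final assembly step: exchanging the order of summation $\sum_p \sum_{i,j \in \calS_p} \to \sum_{i,j} \sum_{p:\, i,j \in \calS_p}$ so that the coefficients consolidate into entries of $\mbY$, and then matching the resulting Hadamard products against the definitions of $\mbY_{\mbcalA}$ and $\mbY_{\mbcalB}$ in the correct orientation. Dimensional consistency among $\mbcalA, \mbcalB \in \bbR^{m \times P}$ and the $P \times P$ matrices $\mbY, \mbY_{\mbcalA}, \mbY_{\mbcalB}, \mathrm{diag}(\mby)$ throughout will serve as a useful sanity check against transposition errors.
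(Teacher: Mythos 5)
Your proposal is correct and takes essentially the same approach as the paper's proof: expand each Frobenius-norm term via the trace to reduce it to Gram inner products of the encoding vectors, differentiate with respect to $\mbd^{(p)}$ to get (i), then swap the order of summation to express the loss in terms of $y_i$ and $Y_{ij}$ before differentiating with respect to $\mbalpha^{(i)}$ and $\mbbeta^{(i)}$ for (ii) and (iii). The only caveat worth flagging is the implicit transpose in the matrix assembly step (conditions (ii)--(iii) as stated require interpreting $\mbcalA,\mbcalB$ column-wise so that the $P\times P$ matrices act on the right index), but that notational wrinkle is present in the paper itself and you already flag it as a sanity check.
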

Proof is given in Appendix~\ref{app:prop_proof}.
%The proposition provides the conditions for the points that have gradient equal to $0$ for each $\mbd^{(p)}$ and $\mbcalA$ and $\mbcalB$.
Algorithm~\ref{alg:alt_opt} presents an alternating minimization procedure for computing a local minimum of \eqref{eq:alt_opt}. The algorithm sequentially solves conditions (i)--(iii) in Proposition \ref{prop:opt_stable}. Since each step corresponds to minimizing  \eqref{eq:alt_opt} for one of the variables $\mbd^{(p)}$, $\mbcalA$, and $\mbcalB$, the resulting objective is non-increasing in the algorithm's iterations and converges to a local minimum.

\begin{algorithm}[!tb] 
\SetAlgoLined
\textbf{Input:} Positive Integers $m, k$ and $P$ ($P>k$)\;
\textbf{Output:} $\mbcalA$, $\mbcalB$, $\mbd^{(p)}$ ($p=1,\ldots,P$)\;
\textbf{Initialize:} Random $m \times P$ matrices $\mbcalA$ and $\mbcalB$\;

 \While{num\_iter $<$ max\_iter}{
 Compute $\mbZ^{\text{(full)}}$ and $\mbz^{\text{(full)}}$ from $\mbcalA$ and $\mbcalB$\;
 \For{$p\leftarrow 1$ \KwTo $P$}{
     Solve for $\mbd^{(p)}$ :  $\mbZ^{(p)} \mbd^{(p)} = \mbz^{(p)}$
    }
Compute $\mbY$ and $\mby$, and $\mbY_{\mbcalB}$ \;
$\mbcalA \leftarrow \mbcalA^{*}$, $\mbcalA^{*}$: solution of $\mbY_{\calB} \cdot \mbcalA = \mathrm{diag}(\mby) \cdot \mbcalB$\;
Compute $\mbY_{\mbcalA}$ \;
$\mbcalB \leftarrow \mbcalB^{*}$, $\mbcalB^{*}$: solution of $\mbY_{\mbcalA} \cdot \mbcalB = \mathrm{diag}(\mby) \cdot \mbcalA$\;
}
 \caption{Alternating Quadratic Minimization} 
 \label{alg:alt_opt}
\end{algorithm}

We next show how the optimization objective in~\eqref{eq:alt_opt} is related to the relative error of the computation output. Let $\ell^{(p)}$ be the loss function for the $p$-th scenario, i.e., 
\begin{equation}\label{eq:ell_p}
    \ell^{(p)} = || \mbI_{m \times m} - \sum_{i \in \mathcal{S}_p} d_i^{(p)} \mbalpha^{(i)} \mbbeta^{(i) T} ||_F^2.
\end{equation}
\begin{theorem} \label{thm:opt_bound}
The error between the decoded result from the nodes in $\calS_p$, $\widehat{\mbC}_{\calS_p}$,  and the true result $\mbC$ can be bounded as:
\begin{equation}
    ||\mbC -\widehat{\mbC}_{\calS_p}||_F \leq \sqrt{\ell^{(p)}}\cdot m \cdot \eta^2.
\end{equation}
\end{theorem}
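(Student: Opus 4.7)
The plan is to expand $\widehat{\mbC}_{\calS_p}$ in terms of the submatrix products $\mbA_j\mbB_k$, identify the error entries as those of the matrix appearing inside $\ell^{(p)}$, and then bound the resulting sum by two applications of norm inequalities.

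First I would write the decoded quantity explicitly. Because each worker computes $\widetilde{\mbC}^{(i)} = \widetilde{\mbA}^{(i)}\widetilde{\mbB}^{(i)} = \sum_{j,k} \alpha_j^{(i)}\beta_k^{(i)}\,\mbA_j\mbB_k$, linearity gives
\begin{equation*}
\widehat{\mbC}_{\calS_p} = \sum_{i\in\calS_p} d_i^{(p)} \widetilde{\mbC}^{(i)} = \sum_{j=1}^{m}\sum_{k=1}^{m} E_{jk}\,\mbA_j\mbB_k,
\qquad E_{jk} \triangleq \sum_{i\in\calS_p} d_i^{(p)}\alpha_j^{(i)}\beta_k^{(i)}.
\end{equation*}
The $m\times m$ matrix $[E_{jk}]$ is exactly $\sum_{i\in\calS_p} d_i^{(p)}\mbalpha^{(i)}\mbbeta^{(i)T}$, so if I define $\mbE = \mbI_{m\times m} - [E_{jk}]$, then $\|\mbE\|_F^2 = \ell^{(p)}$, and since $\mbC = \sum_j \mbA_j\mbB_j = \sum_{j,k}\delta_{jk}\,\mbA_j\mbB_k$ I obtain the clean identity
\begin{equation*}
\mbC - \widehat{\mbC}_{\calS_p} = \sum_{j=1}^{m}\sum_{k=1}^{m} \mbE[j,k]\,\mbA_j\mbB_k .
\end{equation*}

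Next I would bound the Frobenius norm. By the triangle inequality followed by submultiplicativity $\|\mbA_j\mbB_k\|_F \le \|\mbA_j\|_F\|\mbB_k\|_F$, and using that the Frobenius norm of a block partition satisfies $\sum_j \|\mbA_j\|_F^2 = \|\mbA\|_F^2 \le \eta^2$ (and similarly for $\mbB$), each factor $\|\mbA_j\|_F$ and $\|\mbB_k\|_F$ is at most $\eta$, so $\|\mbA_j\mbB_k\|_F \le \eta^2$. Therefore
\begin{equation*}
\|\mbC - \widehat{\mbC}_{\calS_p}\|_F \;\le\; \eta^2 \sum_{j,k=1}^{m}|\mbE[j,k]|.
\end{equation*}
Finally, Cauchy--Schwarz on the $m^2$ entries of $\mbE$ gives $\sum_{j,k}|\mbE[j,k]| \le m\,\|\mbE\|_F = m\sqrt{\ell^{(p)}}$, yielding the advertised bound $\|\mbC - \widehat{\mbC}_{\calS_p}\|_F \le \sqrt{\ell^{(p)}}\cdot m\cdot \eta^2$.

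There is no real obstacle here; the only subtlety is the bookkeeping step that recognizes $\mbI_{m\times m} - \sum_{i\in\calS_p}d_i^{(p)}\mbalpha^{(i)}\mbbeta^{(i)T}$ as exactly the coefficient-error matrix of the expansion in the $\mbA_j\mbB_k$ basis, which is precisely what motivates choosing $\ell^{(p)}$ as the optimization loss. A slightly tighter constant could be obtained by a finer use of Cauchy--Schwarz with the weights $\|\mbA_j\|_F\|\mbB_k\|_F$, but the stated bound follows from the straightforward route above.
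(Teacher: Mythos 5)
Your proof is correct and follows the same route as the paper: expand $\widehat{\mbC}_{\calS_p}$ in the $\mbA_j\mbB_k$ basis, identify the coefficient-error matrix with $\mbE = \mbI_{m\times m} - \sum_{i\in\calS_p}d_i^{(p)}\mbalpha^{(i)}\mbbeta^{(i)T}$ (so $\|\mbE\|_F^2 = \ell^{(p)}$), apply the triangle inequality, bound each $\|\mbA_j\mbB_k\|_F\leq\eta^2$, and finish with Cauchy--Schwarz giving $\sum_{j,k}|\mbE[j,k]|\leq m\|\mbE\|_F$. One point worth noting in your favor: the paper's chain of inequalities passes through the intermediate claim $\|\mbA_j\mbB_k\|_F\leq\|\mbA\mbB\|_F$, which is not actually true in general (for $j\neq k$ the block product $\mbA_j\mbB_k$ is not a sub-block of $\mbA\mbB$, and one can construct $\mbA,\mbB$ with $\mbA\mbB=\mb0$ but $\mbA_j\mbB_k\neq\mb0$). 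Your version, which goes directly via submultiplicativity $\|\mbA_j\mbB_k\|_F\leq\|\mbA_j\|_F\|\mbB_k\|_F\leq\|\mbA\|_F\|\mbB\|_F\leq\eta^2$, sidesteps that misstep while reaching the same final bound, so it is the cleaner argument.
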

The proof is given in Appendix~\ref{app:optboundproof}

\subsection{Optimization Results} \label{subsec:opt_result}

\begin{figure}[t]
\vspace{-2em}
\centering
\subfloat[Min Loss (over 1000 seeds) vs $k$ for $P=k+1$]{\label{fig:res4}\includegraphics[scale=.35] {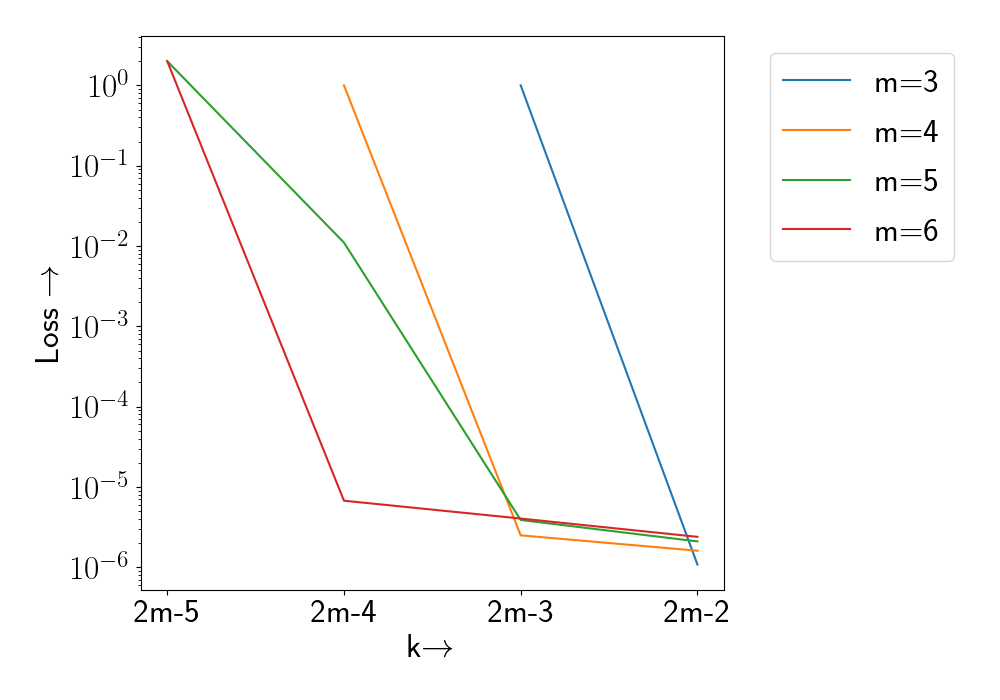}}
\subfloat[Min Loss (over 1000 seeds) vs $m$ for $k=2m-2$]{\label{fig:res1}\includegraphics[scale=.35]{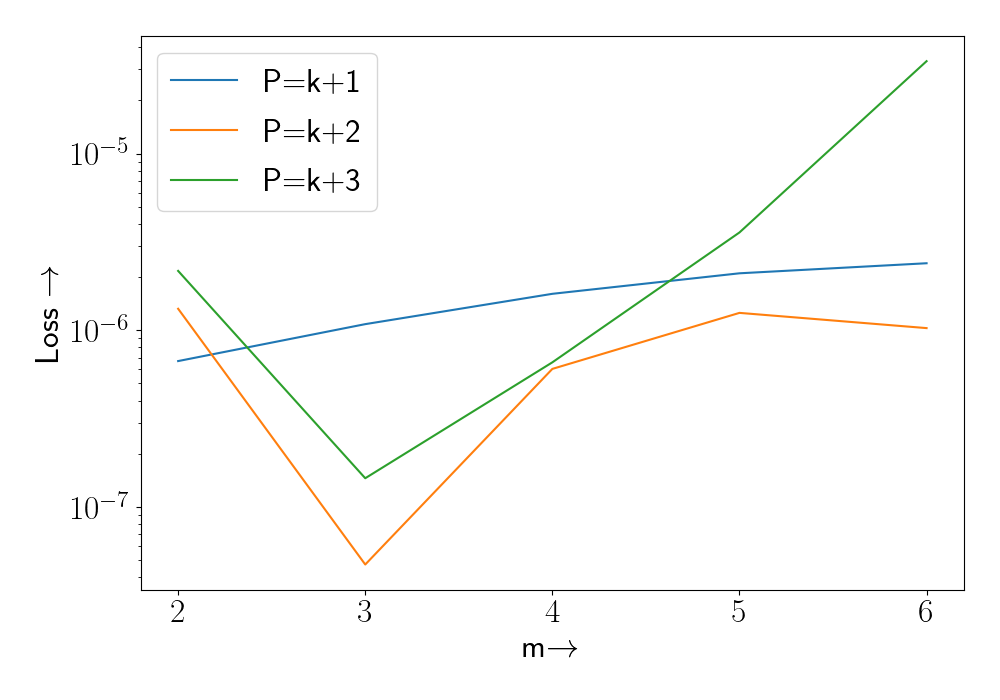}}\qquad
\subfloat[Avg Loss (over 1000 seeds) vs $m$ for $k=2m-2$ with $95\%$ confidence interval]{\label{fig:res7}\includegraphics[scale=.35]{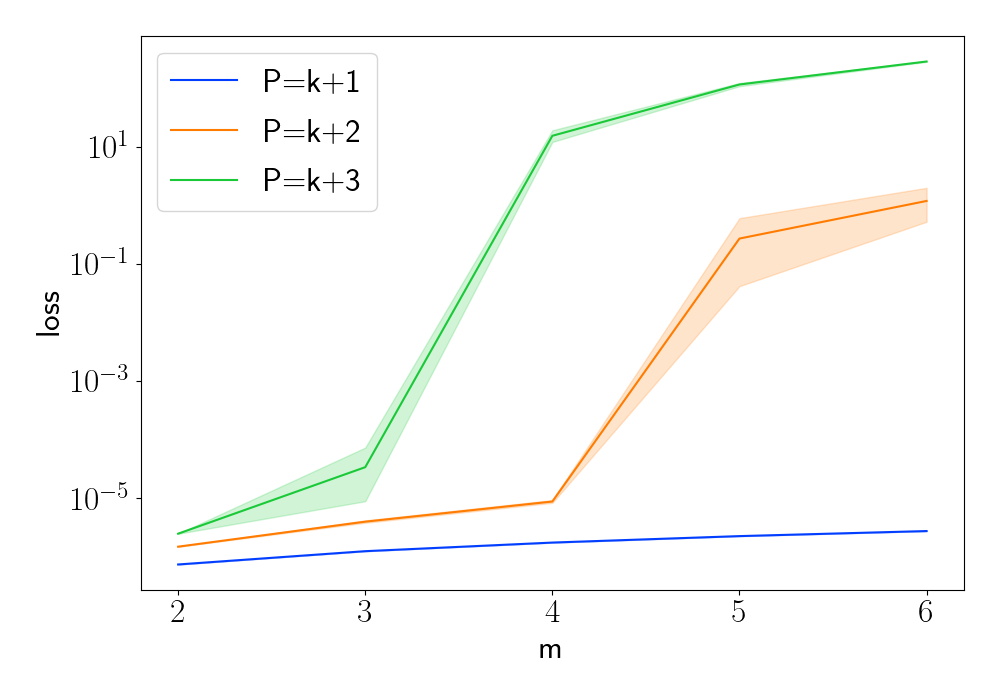}}

  \caption{Summary of results of running Algorithm~\ref{alg:alt_opt} for 1,000,000 iterations. The y-axis is the loss function given in~\eqref{eq:alt_opt}.}
  \label{fig:resopt}
\end{figure}

We summarize the results of running Algorithm~\ref{alg:alt_opt} for various combinations of parameters, $m, k, P$ in Fig.~\ref{fig:resopt}. We report the best result out of 1000 random initializations (seeds); for each trial, we ran Algorithm \ref{alg:alt_opt} for 1,000,000 iterations. 
{ In Fig.~\ref{fig:res4} and Fig.~\ref{fig:res1}, we plot min loss which is the loss from the best code picked from all initializations, and in Fig.~\ref{fig:res7}, we plot the loss averaged over 1000 different random initializations. }
% From Fig.~\ref{fig:res4}, we see that the loss function is very small ($\sim 10^{-4}$) for  $k<2m-1$, indicating that the optimization approach is able to find good approximate coded matrix multiplcation codes. 
{In Fig.~\ref{fig:res4} where we vary $k$ for fixed $P=k+1$, we observe that the loss increases  with the increase in fault-tolerance, that is, as the recovery threshold $k$ reduces and becomes closer to $m$. Figures \ref{fig:res7} and \ref{fig:res1} plot the loss of the code generated by the optimizer by fixing $k=2m-2$.
% , respectively by averaging among $1000$ random initialization points, and by picking the best (least loss) code among these initialization points. 
Fig. \ref{fig:res1}  demonstrates the best codes found have a loss ($\sim 10^{-5}$) that is much smaller than $1/m^2$, which implies accurate reconstruction due to Theorem \ref{thm:opt_bound}. The figure thus demonstrates the power of the optimization framework.\footnote{The plot in Fig. \ref{fig:res1} is not monotonic. We believe that this could be because of the randomness in the seeds, as only a very few seeds have  small loss (See Fig. \ref{fig:res7} for the average loss which is much higher). Also, we do not know if the loss is monotonic in $m$.}}

\begin{figure}[t]
%\vspace{-2em}
\centering
  \subfloat[Loss vs $N_\text{succ}$ for $m=3,P=6$]{\label{fig:jres1}\includegraphics[scale=.33] {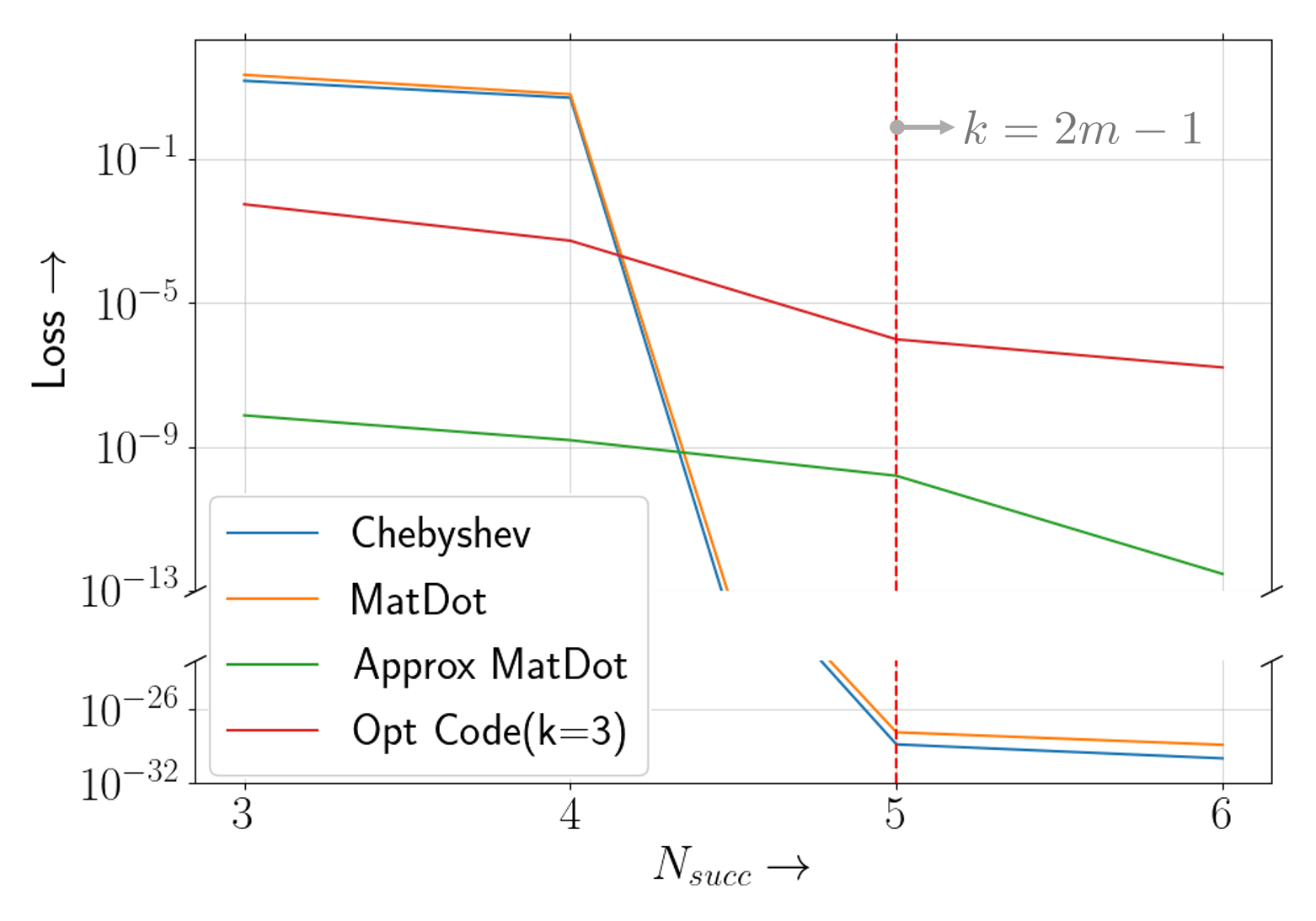}}
  \subfloat[$\epsilon$ vs $N_\text{succ}$ for $m=3,P=6$]{\label{fig:jres2}\includegraphics[scale=.33]{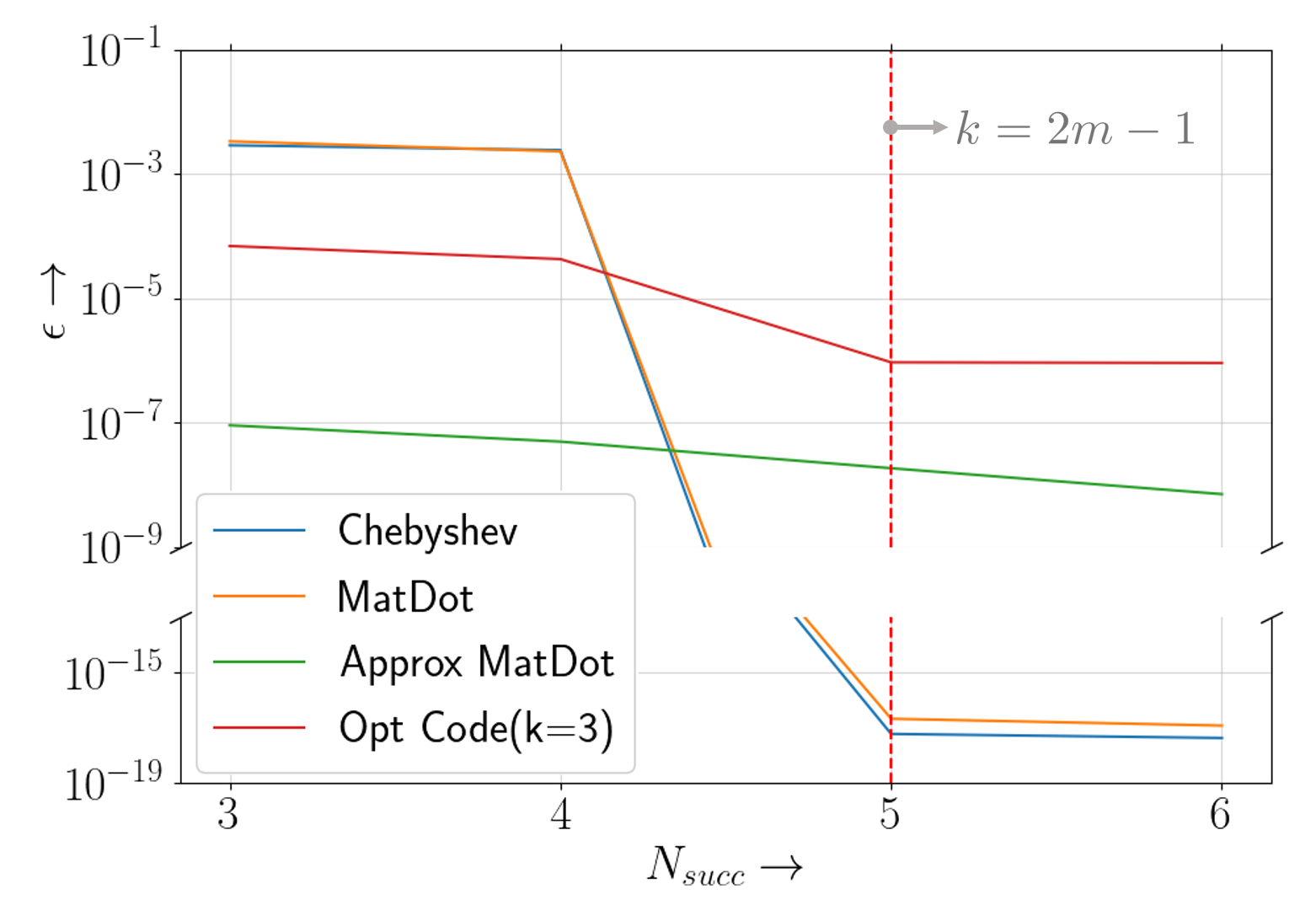}}\qquad
  \subfloat[MatDot: $\epsilon$ vs $\gamma$ for $m=3,P=6$]{\label{fig:jres3}\includegraphics[scale=.33] {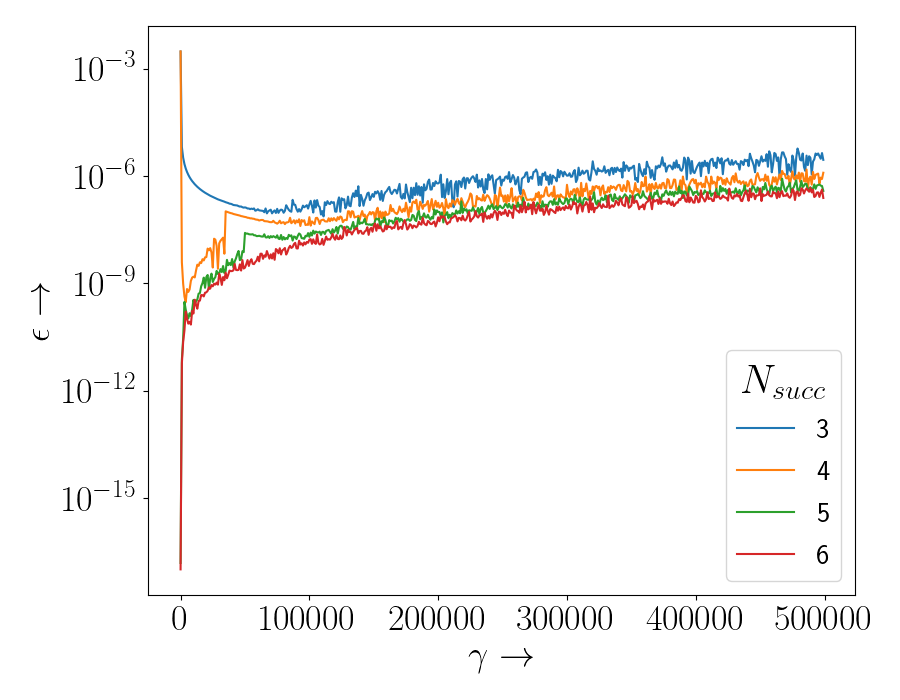}}
  \subfloat[MatDot: $\epsilon$ vs $N_\text{succ}$ for $m=3,P=6$]{\label{fig:jres4}\includegraphics[scale=.33]{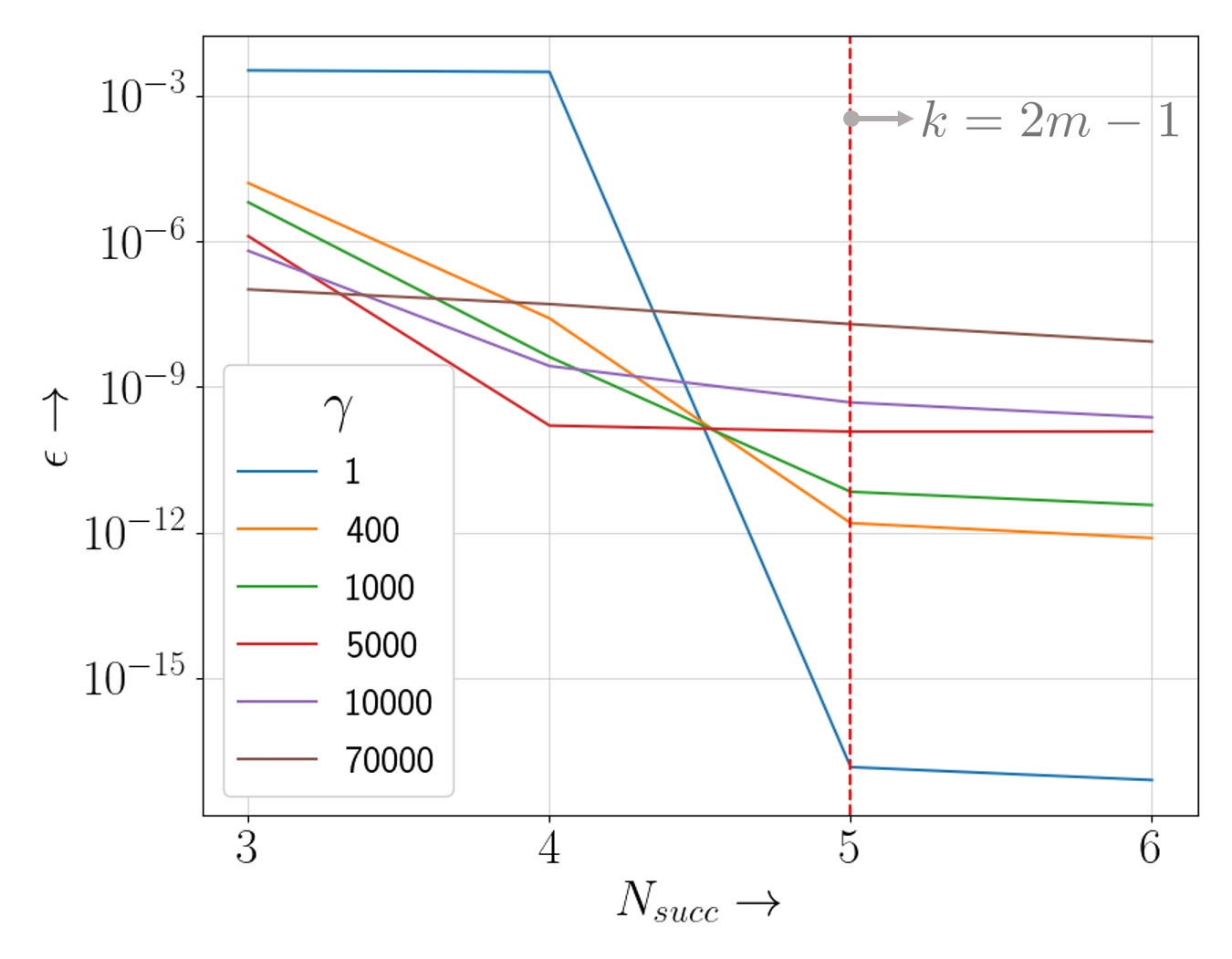}}
  \caption{Performance comparison between different coding methods, over various number of successful nodes for fixed $m =3$ and $P=6$. The y-axis in (a) represents the loss function given in \eqref{eq:alt_opt} and the y-axis in (b),(c),(d) represents the empirical evaluation of $\epsilon$, i.e., $||\widehat{\mbC} - \mbC||_{\max}$. }
  \label{fig:resopt2}
\end{figure}

In Fig. \ref{fig:resopt2}, we compare the performance of the conventional MatDot and Chebyshev polynomial-based codes \cite{fahim2019numerically} with the approximate MatDot codes and optimization codes developed in this paper. {In the figures, the parameter $N_\text{succ}$ represents the number of non-straggling nodes. Note the difference between $N_{succ}$ and $k$; $k$ represents the recovery threshold that the codes have been designed for, while $N_{succ}$ represents the number of nodes that do not straggle and is independent of code design.}
% , in the case of the optimization algorithm \ref{alg:alt_opt}, $N_{succ}=k$. 
In Fig.~\ref{fig:jres1} the loss is computed in accordance with \eqref{eq:alt_opt}, where $\mbcalA,\mbcalB,\mbY$ are derived from the encoding and decoding procedures of the respective codes. {The $2m-1$ recovery threshold is highlighted in a red dotted vertical line for reference.} In Fig. \ref{fig:jres2}, we show the actual error in the decoded matrix product, i.e., $\epsilon = ||\mbC - \widehat{\mbC}||_{\max}$. To compute this, we performed multiplications of two random unit matrices of sizes $100\times 100$. MatDot and Approximate MatDot codes are constructed using the evaluation points $\bm{\lambda}(1)$ and $\bm{\lambda}(70000)$ respectively, where
\[\bm{\lambda}(\gamma)=\left\{\frac{1}{\gamma}\cos\left(\frac{(2i-1)\pi}{2P}\right)\right\}_{i=1}^P.\]
The above equation is consistent with picking of Chebyshev nodes as our evaluation points. The Chebyshev nodes are a popular choice \cite{trefethen2019approximation}  to mitigate the well-known Runge phenomenon, where the interpolation error increases closer to the boundaries of the interval $[-1/\gamma,1/\gamma].$ It is also instructive to note that the only difference between MatDot and Approx Matdot is the choice of $\gamma$, the encoding scheme is same.

{Both Figures \ref{fig:jres1} and \ref{fig:jres2} demonstrate that for $N_{\text{succ}} \geq k=2m-1$, the Matdot and Chebyshev codes have very small loss and error $(~10^{-16}),$ however, for $N_{\text{succ}} < k$ Approximate MatDot codes and optimized codes outperform Chebyshev and MatDot codes.}

{Figures~\ref{fig:jres3} and \ref{fig:jres4} represent MatDot codes (with evaluation points $\bm{\lambda(\gamma)}$) behavior for increasing condition number (controlled with $\gamma$ parameter). Observe that the error 
$\epsilon$ is composed of two quantities: $\epsilon_1,$ the  interpolation error under infinite precision,  and $\epsilon_2$ the computation error due to finite precision. For $N_{\text{succ}} \geq k=2m-1$, $\epsilon_1=0$ for MatDot, Approximate-MatDot and Chebyshev Codes. $\epsilon_2$ increases as $\gamma,$ and therefore, MatDot and Chebyshev codes have lower loss. However, for $N_{\text{succ}} \leq K=2m-1$, $\epsilon_1$ decreases and $\epsilon_2$ increases, as $\gamma$ increases; therefore, the error is non-monotonic in $\gamma$. These phenomena are transparent from Figures \ref{fig:jres3}, \ref{fig:jres4}.}
%We conclude from figures that as condition number of the vandermonde matrices increases, while interpolation error decreases, the quantization error increases and at a certain point dominates.} 
 The source code for Figure~\ref{fig:resopt} and \ref{fig:resopt2} are in \cite{GitHubRepo_approx}.

\section{Application} \label{secc:app}

% (5,7,8) & 81.145 & 81.026 & 82.08  & 81.82\\ \hline
%       (5,7,9) & 80.82  & 80.67 & 81.56 & 81.39 \\ \hline
%       (5,8,9) & 80.663 & 80.515  & 81.28  & 81.17\\ \hline
%       (5,8,10) & 81.38 & 81.25 & 82.32  & 82.28\\ \bottomrule
In this section, we illustrate that approximate coded computing is particularly useful for training  machine learning (ML) models.  ML models are usually trained using optimization algorithms that have inherent stochasticity (e.g., stochastic gradient descent). These algorithms are applied to finite, noisy training data. Consequently, ML models can be  tolerant of the accuracy loss resulting from approximate computations during training. In fact, this loss can be insignificant when compared to other factors that impact training performance (parameter initialization, learning rate, dataset size, etc.). 

We illustrate this point by considering a simple logistic regression training scenario modified to use coded computation. 
First, we describe how coded matrix multiplication strategies can be applied to training a logistic regression model. Then, we train a model on the MNIST dataset \cite{lecun-mnisthandwrittendigit-2010} using approximate coded computing strategies and show that the accuracy loss due to approximate coded matrix multiplication is very small.

\subsection{Logistic regression model with coded computation}
We consider logistic regression with cross entropy loss and softmax function. We identify parts of training steps where coded computation could be applied.

Consider a dataset 
$ \mathcal{D} = \{(\bm{x}_1,\bm{y}_1) \ldots (\bm{x}_D, \bm{y}_D) \} $
and the loss function $L(\bm{W}; \mathcal{D})$ with gradient $\frac{\partial}{\partial \bm{W}} L(\bm{W}; \mathcal{D})$, for model $\bm{W}$. Let there be $J$ classes in the dataset, and $\{\bm{y}_i\}_{i=1}^D$ be a set of one-hot encoded vectors, such that $y_{ji}=1$ means $i$\textsuperscript{th} data point $\bm{x}_i$ belongs to $j$\textsuperscript{th} class. Let $\bm{Y}\in\mathbb{F}_2^{J\times D}=[\bm{y}_1, \dots, \bm{y}_D]$. Let $\bm{W}=[\bm{w}_1;\dots;\bm{w}_J]$ ($\bm{w}_j$ is a row vector) be a matrix that comprises the logistic regression training parameters. The cross entropy loss is given by:
\begin{equation}
    L(\bm{W}; \mathcal{D}) = \sum_{i=1}^{D}\sum_{j=1}^{J}y_{ji}\log p(y_{ji}=1|\bm{x}_i)
    \label{eq:logisticloss}
\end{equation}
where
\begin{equation*}
    p(y_{ji}=1|\bm{x}_i) = \text{softmax}(z_{ji})=\frac{e^{z_{ji}}}{\sum_{j=1}^{J}e^{z_{ji}}}, \;\; z_{ji}=\bm{w}_j\bm{x}_i
\end{equation*}
$\bm{x}_i$ is an column vector. Define $\bm{Z}\in\mathbb{R}^{J\times D}=\{z_{ji}\}_{j=1,i=1}^{J,D}$ and $\bm{X}=\{\bm{x}_i\}_{i=1}^D$. Then we write above equation as \begin{equation}
    \bm{Z}=\bm{W}\bm{X}
    \label{eq:lgapply1}
\end{equation}
The gradient is computed as:
\begin{align}
    \frac{\partial}{\partial \bm{W}} L(\bm{W}; \mathcal{D} ) &=\bm{H}\bm{X}^T
    \label{eq:lgapply2}
\end{align}

where $\bm{H} = (\text{softmax}(\bm{Z})-\bm{Y}),$ and we apply softmax function element-wise.

Clearly, we can apply the coded matrix multiplication schemes to computations in \eqref{eq:lgapply1} and \eqref{eq:lgapply2}. In \eqref{eq:lgapply1}, we encode $\bm{W}$ and $\bm{X}$ and perform coded matrix multiplication, then we encode $\bm{H}$ and reuse encoded $\bm{X}$ to perform another coded matrix multiplication in \eqref{eq:lgapply2}.

\subsection{Results}
The goal is to explore whether, despite the loss of precision due to approximation, our approach leads to accurate training. We trained the logistic regression using the MNIST dataset \cite{lecun-mnisthandwrittendigit-2010}. A learning rate of 0.001 and batch size of 128 were used. Each logistic regression experiment was run for 40,000 iterations. {For every matrix multiplication step in the training algorithm i.e., computing $\bm{WX}$ and $\bm{HX}^T$, we assume that we have $k$ successful nodes out of $P$ nodes.} 
{Tables \ref{table:reslg1} and \ref{table:reslg2} show the 10-fold cross validation accuracy results obtained for training and test datasets. For accurate comparison, we used the same the random folds, initialization and batches for the different coding schemes.} 

We first ran the training algorithm for the worst-case failure scenario where we assume that the worst-case failure pattern happens at every multiplication step, i.e., out of $\binom{P}{k}$ failure scenarios, we always have $p_\text{worst} = \argmax_{p \in \{1, \ldots, \binom{P}{k} \}} \ell^{(p)}$. The results are summarized in Table~\ref{table:reslg1}. {To simulate a scenario where different nodes straggle in different iterations, we ran the experiments for a scenario where a random subset of $k$ nodes returns at every iteration. The corresponding accuracies are given in Table~\ref{table:reslg2}.} %The  failures are picked for the same random seed across different  coding scheme is .}
For each coding scheme, we fit the corresponding encoding and decoding matrices into $\mbalpha$, $\mbbeta$, $\mbd^{(p)}$. For example, for approximate MatDot, Vandermonde matrices are used in $\mbalpha$ and $\mbbeta$, and its corresponding decoding coefficients are put in $\mbd^{(p)}$ and "Opt Code" represents the codes obtained from the optimization algorithm \ref{alg:alt_opt}. $k$ represents the recovery threshold.
The training and test accuracies for uncoded strategy (without failed nodes) are 92.32±0.07 and 91.17±0.25. 
The training and test accuracies for uncoded distributed strategy (with 2 failed nodes, but no error correction) are 21.45±1.00 and 21.48±1.05 for $m=5$, and 29.09±5.99 and 28.96±6.32 for $m=20$. {This indicates the importance of redundancy and error correction for accurate model training in presence of stragglers.}
% As can be seen in Table \ref{table:reslg1}, the accuracies  obtained from using Approximate MatDot codes or optimized codes are nearly the same as the uncoded results. We also include results of using Chebyshev codes designed for $m=5$. Notice that when $N_\text{succ}$ is below the recovery threshold $k=9$, the performance of Chebyshev-coded strategy degrades quickly. 
%
\begin{table}[ht]
\small
    \centering
    \begin{tabular}{c c c c c c c }
    \toprule
    & \multicolumn{3}{c}{Training Accuracy (\%)} & \multicolumn{3}{c}{Test Accuracy (\%)} \\    
    \cmidrule(lr){2-4}\cmidrule(lr){5-7}
      $(m,k,P)$ & Approx MatDot & Chebyshev & Opt Code & Approx MatDot & Chebyshev & Opt Code\\ \midrule
      (5,5,7) & 92.32±0.07 & 29.00±2.84 & 16.95±1.63 & 91.17±0.26 & 29.14±2.99 & 16.96±1.55\\ \hline
(5,6,8) & 92.33±0.07 & 41.83±4.90 & 92.16±0.10 & 91.16±0.25 & 41.81±5.38 & 91.05±0.30\\ \hline
(5,7,9) & 92.33±0.07 & 50.55±8.00 & 92.32±0.07 & 91.17±0.25 & 50.44±8.22 & 91.16±0.25\\ \hline
( 5, 8,10) & 92.32±0.07 & 47.10±5.67 & 92.32±0.08 & 91.17±0.25 & 46.82±5.63 & 91.16±0.25\\ \hline
( 5, 9,11) & 92.32±0.07 & 92.32±0.07 & 92.32±0.07 & 91.17±0.25 & 91.17±0.25 & 91.17±0.25\\ \hline
(20,20,22) & 44.25±2.06 & 38.50±8.26 & 73.16±3.17 & 44.04±1.38 & 38.33±8.16 & 72.69±3.26\\ \bottomrule
    \end{tabular}
    \caption{Logistic regression results on MNIST dataset for worst case failures}
    \label{table:reslg1}
\end{table}
\begin{table}[ht]
\small
    \centering
    \begin{tabular}{c c c c c c c }
    \toprule
    & \multicolumn{3}{c}{Training Accuracy (\%)} & \multicolumn{3}{c}{Test Accuracy (\%)} \\    
    \cmidrule(lr){2-4}\cmidrule(lr){5-7}
      $(m,k,P)$ & Approx MatDot & Chebyshev & Opt Code & Approx MatDot & Chebyshev & Opt Code\\ \midrule
    (5,5,7) & 92.32±0.07 & 91.37±0.10 & 88.36±0.39 & 91.17±0.25 & 91.03±0.34 & 87.98±0.61\\ \hline
(5,6,8) & 92.32±0.07 & 91.91±0.07 & 92.32±0.08 & 91.17±0.25 & 91.47±0.25 & 91.16±0.27\\ \hline
(5,7,9) & 92.32±0.07 & 92.14±0.10 & 92.32±0.07 & 91.17±0.25 & 91.58±0.16 & 91.17±0.25\\ \hline
( 5, 8,10) & 92.32±0.07 & 92.22±0.09 & 92.32±0.07 & 91.17±0.25 & 91.45±0.23 & 91.17±0.25\\ \hline
( 5, 9,11) & 92.32±0.07 & 92.32±0.07 & 92.32±0.07 & 91.17±0.25 & 91.17±0.25 & 91.17±0.25\\ \hline
(20,20,22) & 46.05±1.47 & 92.54±0.05 & 92.44±0.10 & 45.97±1.02 & 91.85±0.19 & 91.66±0.18\\ \bottomrule
    \end{tabular}
    \caption{Logistic regression results on MNIST dataset for random failures}
    \label{table:reslg2}
\end{table}

{From the results in both tables, we observe that MatDot codes have essentially identical performance to uncoded multiplication for smaller values of parameter $m$. The Opt codes also give approximately identical results as uncoded computation, and for bigger values of $m$ Opt codes appear to have better performance than MatDot codes.\footnote{The only exception to this statement is the $(5,5,7)$ case where Opt codes perform poorly. Here, it is possible that a more expansive search across a larger set of random seeds  leads to a code with comparable performance as uncoded computation.} As expected, the random failure scenario in Table \ref{table:reslg2} has much better performance than the worst-case failure scenario.} The codes for logistic regression implementation via uncoded and coded multiplications can be found in \cite{GitHubRepo_approx}.

\section{Discussion and Future Work}
% \hj{As this is a short paper, I think we can skip the first paragraph summarizing our results, and just have the second paragraph. And change the section title to Discussion and Future Work?}
% \vc{Yes no problem, feel free to make that change. I think the only value addition of the first paragraph is the point of the logistic regression....but that can be "slipped into" the discussion on numerical stability, or some other place in the paper.}
% We have shown that $\epsilon$-error recovery threshold of $m$ is possible for any $\epsilon > 0$. We have also developed an optimization work that numerically discovers codes for approximate coded computation. Through implementation via logistic regression (refer \cite{FullPaper}) we have shown that the quality of approximation can be sufficient for  classification tasks even with minimal amount of overhead compared with previous coded matrix multiplication schemes. 

This paper opens new directions for coded computing by showing the power of approximations. Specifically, an open research direction is the investigation of related coded computing frameworks (e.g., polynomial evaluations) to examine the gap between $\epsilon$-error and $0$-error recovery thresholds. {The optimization approach provides a practical framework to find good codes for general approximate coded computing problems beyond matrix multiplication. Further, the framework can guide the development of coded computing theory by giving heuristic insights into performance; in fact, the main theoretical results of our paper (Theorems \ref{thm:approx_matdot}, \ref{thm:approx_polydot}) were the result of such hints that were provided by the framework. From a practical viewpoint, it is an open question as to whether matrix multiplication codes developed by the optimization framework are better than the codes developed via theory, i.e., the approximate MatDot codes. While the results of Sec. \ref{secc:app} indicate that approximate MatDot codes are better for smaller values of the parameter $m$ and the optimization framework seems to have better performance for larger values of $m$, it is unclear whether this behavior is fundamental.}

As our constructions require evaluation points close to $0$ (Section \ref{sec:approx_matdot}), encoding matrices  become ill-conditioned rapidly as $m$ grows. An open direction of future work is to explore numerically stable coding schemes possibly building on recent works, e.g. \cite{fahim2019numerically, ramamoorthy2019numerically,subramaniam2019random}, with  focus on $\epsilon$-error instead of exact computation. %Development of computationally efficient optimization approaches in Section \ref{sec:opt}, and theoretical guarantees for the same is another direction of future work.

% use section* for acknowledgment
% \section*{Acknowledgment}
% The authors would like to acknowledge NSF grant CCF 1763657. 
\begin{singlespace}
\bibliographystyle{IEEEtran}
\bibliography{IEEEabrv,references,cadambe-references}
\end{singlespace}

% \clearpage
\appendices 
\section{Proof of Proposition~\ref{prop:opt_stable}} \label{app:prop_proof}
For simple demonstration, let us focus on the case where $\calS_p = [1, \ldots, k]$ and expand the term inside the sum. In the following equations, we will omit the superscript $(p)$ for simplification. 

\begin{align*}
    || \mbI_{m \times m} - \sum_{i \in \calS_p} d_i \mbalpha^{(i)} \mbbeta^{(i) T} ||_F^2 &=   \Tr \left( \left( \mbI - \sum_{i=1}^{k} d_i \mbX_i\right)^T \left( \mbI - \sum_{i=1}^{k} d_i \mbX_i \right)  \right) \quad (\mbX_i = \mbalpha^{(i)} \mbbeta^{(i)T}) \\ 
    &=  \Tr \Bigg(\mbI - \sum_{i=1}^{k} d_i \mbX_i^T -\sum_{i=1}^{k} d_i \mbX_i + \left(\sum_{i=1}^{k} d_i \mbX_i \right)^T  \left(\sum_{j=1}^{k} d_j \mbX_j \right) \Bigg) \\ 
    &= m - 2\sum_{i=1}^{k} d_i \Tr(\mbX_i) + \sum_{i=1}^{k} \sum_{j=1}^{k} d_i d_j \Tr (\mbX_i^T \mbX_j)  \\
    &= m - 2\sum_{i=1}^{k} d_i \mbalpha^{(i)} \cdot \mbbeta^{(i)}  + \sum_{i=1}^{k} \sum_{j=1}^{k} d_i d_j (\mbalpha^{(i)}\cdot \mbalpha^{(j)})(\mbbeta^{(i)}\cdot \mbbeta^{(j)}) \\
      &= m - 2 \mbd \cdot \mbz +  \mbd^T \mbZ \mbd,
\end{align*}
where $\mbd$ and $\mbz$ are length-$k$ column vectors: $\mbd= [d_1, \ldots, d_k]$ and 
$\mbz = [\mbalpha^{(i)} \cdot \mbbeta^{(i)}]_{i=1,\ldots,k}$. $\mbZ$ is a $k \times k$ matrix: 
$
    \mbZ = (\mbcalA_k^T \mbcalA_k) \odot (\mbcalB_k^T \mbcalB_k), 
$
where $\mbcalA_k = \begin{bmatrix} \mbalpha_{1} & \mbalpha_{2} & \ldots & \mbalpha_k \end{bmatrix}$ and $\mbcalB_k = \begin{bmatrix} \mbbeta_{1} & \mbbeta_{2} & \ldots & \mbbeta_k \end{bmatrix}$.

The partial derivative with respect to $\mbd$ can be represented as: 
$
    \frac{\partial}{\partial \mbd} L = -2 \mbz + 2\mbZ \mbd.
$
Thus, the optimal $\mbd^{*}$ can be obtained by solving:
$
    \mbZ \mbd = \mbz. 
$
Note that this can be easily generalized to any $\mbd^{(p)} = \begin{bmatrix} d_i \end{bmatrix}_{i \in \calS_p}$. It only requires using different $\mbZ$ and $\mbz$ as follows: 
\begin{equation*}
    \mbZ = (\mbcalA^{(p) T} \mbcalA^{(p)}) \odot (\mbcalB^{(p) T} \mbcalB^{(p)}), \;\;  \mbz^{(p)} = [\mbalpha^{(i)} \cdot \mbbeta^{(i)}]_{i \in S_p},
\end{equation*}
where $\mbcalA^{(p)} = [\mbalpha_i]_{i \in S_p}, \; \mbcalB^{(p)} = [\mbbeta_i]_{i \in S_p}$. 

To obtain the gradient with respect to $\mbalpha^{(i)}$'s and $\mbbeta^{(i)}$'s, let us expand the loss function given in \eqref{eq:alt_opt} again. We now want to include the outer sum: 
\begin{align*}
     L &=  \sum_{p=1, \ldots, \binom{n}{k}} || \mbI_{m \times m} - \sum_{i \in \calS_p} d_i^{(p)} \mbalpha^{(i)} \mbbeta^{(i) T} ||_F^2 \\ 
     &= \sum_{p=1, \ldots, \binom{n}{k}} \Big(m - 2\sum_{i \in \calS_p} d_i^{(p)} \mbalpha^{(i)} \cdot \mbbeta^{(i)} + \sum_{i\in \calS_p} \sum_{j \in \calS_p} d_i^{(p)} d_j^{(p)} (\mbalpha^{(i)}\cdot \mbalpha^{(j)})(\mbbeta^{(i)}\cdot \mbbeta^{(j)})\Big) \\
     &= \binom{n}{k} \cdot m -2 \sum_{i \in [n]} \big(\sum_{p: i\in \calS_p} d_i^{(p)} \big) \mbalpha^{(i)}\cdot \mbbeta^{(i)} + \sum_{i \in [n]} \sum_{j \in [n]} \big( \sum_{p: i,j \in \calS_p } d_i^{(p)} d_j^{(p)} \big) (\mbalpha^{(i)}\cdot \mbalpha^{(j)}) (\mbbeta^{(i)}\cdot \mbbeta^{(j)}) \\ 
     &= \binom{n}{k} \cdot m -2 \sum_{i \in [n]} y_i \cdot \mbalpha^{(i)}\cdot \mbbeta^{(i)} + \sum_{i \in [n]} \sum_{j \in [n]} Y_{i,j} (\mbalpha^{(i)}\cdot \mbalpha^{(j)}) (\mbbeta^{(i)}\cdot \mbbeta^{(j)})
\end{align*}
In the last line, we let $y_i = \sum_{p: i\in \calS_p} d_i^{(p)}$ and $Y_{i,j} =  \sum_{p: i,j \in \calS_p } d_i^{(p)} d_j^{(p)}$. 
Now, the gradient of $L$ with respect to $\mbalpha^{(i)}$ can be written as: 
\begin{equation}
    \frac{\partial}{\partial \mbalpha^{(i)}} L = -2y_i \mbbetai + 2\sum_{j \in [n]} Y_{i,j} (\mbbetai \cdot \mbbetaj) \mbalphaj. 
\end{equation}
Following the notation that $\mbcalA = \begin{bmatrix} \mbalpha^{(1)}& \ldots & \mbalpha^{(n)} \end{bmatrix}$ and $\mbcalB =\begin{bmatrix} \mbbeta^{(1)}& \ldots & \mbbeta^{(n)} \end{bmatrix} $, this can be written in a matrix form:
$
    \frac{\partial}{\partial \mbcalA} L = -2 \mathrm{diag} (\mby) \mbcalB^T + 2 \mbY_{\mbcalB} \mbcalA^T, 
$
where $\mby = [y_i]_{i=1,\ldots,n}$ is a column vector of length $n$ and $\mbY_{\calB} = [Y_{i,j} (\mbbetai \cdot \mbbetaj)]_{i,j=1,\ldots,n} = \mbY \odot (\mbcalB^T \mbcalB)$ is an $n \times n$ matrix. Thus, the optimal $\mbcalA^{*}$ can be obtained by solving: 
$
    \mbY_{\calB} \cdot \mbcalA = \mathrm{diag}(\mby) \cdot \mbcalB.
$
Similarly, the optimal $\mbcalB^{*}$ can be obtained by solving:
$
    \mbY_{\calA} \cdot \mbcalB = \mathrm{diag}(\mby) \cdot \mbcalA,
$
where $\mbY_{\calA} = \mbY \odot (\mbcalA^T \mbcalA)$.

\section{Proof of Theorem~\ref{thm:approx_matdot}} \label{app:approx_matdot}
Let $f$ be a $(k-1)$-degree polynomial
$
    f(x) = a_0 + a_1 x + \cdots + a_{k-1} x^{k-1}, 
$
and we use  $\mathsf{vec}(f)$ to denote the row vector representation of the coefficients of $f$, i.e., 
\begin{equation}
    \mathsf{vec}(f) \triangleq  \begin{bmatrix} a_0 & a_1 & \cdots & a_{k-1} \end{bmatrix} \triangleq \mba.
\end{equation}
Also, let $\mblambda \triangleq [\lambda_j]_{j\in [m]} \in \mathbb{R}^m$  be $m$ distinct evaluation points. We define a Vandermonde matrix for $\mblambda$ of degree $k-1$ as:
\begin{equation}
    \mathsf{Vander}(\mblambda, k) = \begin{bmatrix}
    1 & \cdots & 1 \\ 
    \lambda_1 & \cdots & \lambda_m \\
    \vdots & \ddots & \vdots \\ 
    \lambda_1^{k-1} & \cdots & \lambda_m^{k-1}
    \end{bmatrix}_{k \times m}.
\end{equation}
The evaluations of $f$ at the points $\mblambda$ can be written as
\begin{equation} \label{eq:a_V}
    \begin{bmatrix} f(\lambda_1) & f(\lambda_2) & \cdots & f(\lambda_m) \end{bmatrix} = \mba\cdot 
     \mbV,
\end{equation}
where  $\mbV = \mathsf{Vander}(\mblambda, k)$.
When $m<k$, the null space of $\mbV$ can be conveniently expressed in terms of elementary symmetric polynomials.

\begin{definition}[Elementary Symmetric Polynomial]
Let $\mbx = (x_1, \cdots, x_n)$. For $l\in \{0,1,\dots,n\}$, the elementary symmetric polynomials in $n$ variables $e_l:\mathbb{R}^n\to\mathbb{R}$ are given by
\begin{equation}
    e_l(\mbx) \triangleq \begin{cases} \displaystyle \sum_{\substack{S \subseteq [n] \\ |S| = l}} \prod_{i \in S} x_i, & \text{ if } l = 1, \ldots, n, \\ 
    1, & \text{ if } l = 0.
    \end{cases}
\end{equation}
In particular, $e_1(\mbx) = \sum_{i\in [n]}x_i$ and $e_n(\mbx) = \prod_{i\in [n]}x_i$.
\end{definition}

\begin{lemma}
For $m < k$, the left null space of $\mbV$ is spanned by $\left\{\mbu_i\right\}_{i\in [k-m]}\subset \mathbb{R}^k$, where $\mbu_i = \mathsf{vec}(p_i)$ for the polynomials $p_i$'s defined as:
\begin{equation}
    p_i(x) \triangleq x^{i-1} \prod_{j=1}^{m} ( x-\lambda_j).
\end{equation}
\end{lemma}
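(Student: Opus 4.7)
The plan is to proceed by a standard dimension-counting argument, combining the evaluation identity from equation \eqref{eq:a_V} with the observation that the $p_i$'s all vanish on $\mblambda$.

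First, I would note that since $\lambda_1,\ldots,\lambda_m$ are distinct, the $k\times m$ Vandermonde matrix $\mbV=\mathsf{Vander}(\mblambda,k)$ has full column rank $m$. By the rank-nullity theorem applied to left multiplication, the left null space of $\mbV$ therefore has dimension exactly $k-m$. Hence it suffices to exhibit $k-m$ linearly independent vectors in the left null space and then invoke dimension counting.

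Next I would check that each proposed $\mbu_i=\mathsf{vec}(p_i)$ for $i\in[k-m]$ really lies in $\mathbb{R}^k$ and in the left null space. The polynomial $p_i(x)=x^{i-1}\prod_{j=1}^{m}(x-\lambda_j)$ has degree $m+i-1$, which is at most $k-1$ since $i\le k-m$; thus its coefficient vector $\mbu_i$ is a well-defined element of $\mathbb{R}^k$. Applying \eqref{eq:a_V} to $p_i$ yields
\begin{equation*}
\mbu_i\cdot \mbV=\begin{bmatrix} p_i(\lambda_1) & p_i(\lambda_2) & \cdots & p_i(\lambda_m)\end{bmatrix}=\mathbf{0},
\end{equation*}
since $p_i$ vanishes at every $\lambda_j$ by construction of the product $\prod_{j=1}^m(x-\lambda_j)$.

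Finally, I would establish linear independence of $\mbu_1,\ldots,\mbu_{k-m}$. Because the degrees $\deg p_i=m+i-1$ are pairwise distinct for $i=1,\ldots,k-m$, the polynomials $p_1,\ldots,p_{k-m}$ are linearly independent in the space of polynomials of degree at most $k-1$; equivalently their coefficient vectors $\mbu_1,\ldots,\mbu_{k-m}$ are linearly independent in $\mathbb{R}^k$. Having produced $k-m$ linearly independent vectors inside the left null space of $\mbV$, and having already argued the latter has dimension $k-m$, they must form a basis, which proves the claim. No step here looks to be a real obstacle; the only thing to be slightly careful about is confirming the degree bound so that $\mbu_i\in\mathbb{R}^k$ and verifying the distinct-degree argument for independence.
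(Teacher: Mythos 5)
Your proof is correct and complete, and it takes a cleaner route to the linear-independence step than the paper does. Both proofs begin the same way: verify that each $\mbu_i$ annihilates $\mbV$ by evaluating $p_i$ at the roots $\lambda_j$, then argue the $k-m$ vectors $\mbu_1,\ldots,\mbu_{k-m}$ are linearly independent, then conclude by a dimension count. Where the paper writes out the coordinates of each $\mbu_i$ explicitly as a shifted vector of elementary symmetric polynomials $e_\ell(\mblambda)$, stacks these into a matrix $\mbU = [\mbU_1\ \mbU_2]$, and observes that $\mbU_2$ is lower triangular with $e_0(\mblambda)=1$ on the diagonal (hence invertible), you instead note that $\deg p_i = m+i-1$ are pairwise distinct, so the coefficient vectors are automatically independent. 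Your argument is more elementary and shorter. What the paper's approach buys is that the explicit elementary-symmetric-polynomial description of the $\mbu_i$'s is reused directly in the proof of Theorem~\ref{thm:V_null} and Lemma~\ref{lem:e_l_bound} that follow, so the paper's longer form of the linear-independence argument is amortized over the rest of the appendix. One small point in your favor: you explicitly invoke full column rank and rank--nullity to pin down $\dim\mathsf{null}(\mbV^T)=k-m$, a step the paper leaves implicit even though it is needed to upgrade ``linearly independent vectors in the null space'' to ``span the null space.''
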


\begin{proof}
% \vc{This first sentence is confusing...are you defining $\mathbf{u}_{i}$ here? Can you just completely remove these two sentences and then read the proof, would it not make sense then?} 
First, note that $\mbu_i \in \mathsf{null}(\mbV^T)$: 
$
    \mbu_i \cdot \mbV = \begin{bmatrix}
     p_i(\lambda_1) & \cdots & p_i(\lambda_m)
    \end{bmatrix} = \mathbf{0}.
$
Next, we show that $\mathsf{dim}(\mathsf{span}(\mbu_1, \cdots, \mbu_{k-m}))=k-m$.
% Let $\mblambda = \begin{bmatrix}
%  \lambda_1 & \cdots & \lambda_m
% \end{bmatrix}$. 
%and $e_l(\mblambda)$ be the $l$-th elementary symmetric polynomial in $\mblambda$. I.e., 
% \begin{equation} \label{eq:esp}
%     e_l(\mblambda) \triangleq \sum_{\substack{S \in [m] \\ |S| = l}} \prod_{j \in S} \lambda_j, \;\; e_0(\mblambda) = 1.
% \end{equation}
%Then, 
The coefficients of the Lagrange polynomial $p_1(x) = \prod_{j=1}^{m} (x-\lambda_j)$ can be written as: 
\begin{equation}\label{eq:b1}
    \mbu_1 = \begin{bmatrix}
     e_m(\mblambda) & \cdots & e_{0}(\mblambda) & 0 & 0 & \cdots 0
    \end{bmatrix},
\end{equation}
with $k-m-1$ trailing zeros, and 
\begin{align}
    \mbu_2 &= \begin{bmatrix}
     0 & e_m(\mblambda) & \cdots & e_{0}(\mblambda) & 0 & \cdots 0
    \end{bmatrix}, \nonumber \\ 
    &\;\;\vdots \nonumber \\
    \mbu_{k-m} &= \begin{bmatrix}
     0 & 0 & \cdots & 0 & e_m(\mblambda) & \cdots & e_0(\mblambda) 
    \end{bmatrix}.  \label{eq:b_2_km}
\end{align}
% Assume that
% \begin{equation} \label{eq:b_linind}
%     \alpha_1 \mbb_1 + \cdots +\alpha_{k-m} \mbb_{k-m} = \mb0.
% \end{equation}
Let $\mbU \in \mathbb{R}^{(k-m)\times k}$ be the matrix obtained by concatenating $\mbu_i$ row-wise, i.e., the matrix with $i$-th row equal to $\mbu_i$. Note that $\mbU = \left[\mbU_1~\mbU_2 \right]$, where $\mbU_2\in \mathbb{R}^{(k-m)\times (k-m)}$ is a lower-triangular matrix with diagonal entries equal to $e_0(\mblambda)=1$. Consequently, $\mbU_2$ is full-rank (in particular, $\det (\mbU_2)=1$). Therefore  $\mbU$ is also full-rank and $
\mathsf{rank}(\mbU)=\mathsf{dim}(\mathsf{span}(\mbu_1, \cdots, \mbu_{k-m}))=k-m.
$
\end{proof}
% Denoting $b_i[j]$ as the $j$-th element of $\mbb_i$, $b_i[k] = 0$ for all $i \neq k-m$ and $b_{k-m}[k] = e_0(\mblambda) = 1$. Hence, $\alpha_{k-m} = 0$. 
% Now, \eqref{eq:b_linind} becomes: 
% \begin{equation} \label{eq:p_sum2}
%     \alpha_1 \mbb_1 + \cdots +\alpha_{k-m-1} \mbb_{k-m-1} = \mb0.
% \end{equation}
% and $b_i[k-1] = 0$ for all $i < k-m-1$ and $b_{k-m-1}[k-1] = 1$. Hence, $\alpha_{k-m-1} = 0$.
% Repeating the same argument, we can show that $\alpha_{k-m-1} = \cdots = \alpha_{1} = 0$. 
We prove next a bound on the evaluation of the elementary symmetric polynomials $e_{l}, l \neq 0$ in terms of the $\ell_{\infty}$-norm of its entries.
\begin{lemma} \label{lem:e_l_bound}
Let $0 < \epsilon \leq 2$ and  $\mbx\in \mathbb{R}^n$. If $\|\mbx\|_\infty \leq \epsilon/n$, then  $|e_l(\mbx)| \leq \epsilon$ for $l \in \{1, 2, \ldots, n \}$.
\end{lemma}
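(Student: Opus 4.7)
The plan is to directly bound $|e_l(\mbx)|$ by counting monomials and applying the triangle inequality, then to exploit the hypothesis $\epsilon \leq 2$ via a factorial comparison.

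First I would observe that $e_l(\mbx)$ is a sum of $\binom{n}{l}$ monomials, each a product of $l$ distinct coordinates of $\mbx$. Applying the triangle inequality together with the coordinate bound $|x_i| \leq \epsilon/n$ gives
\begin{equation*}
    |e_l(\mbx)| \;\leq\; \binom{n}{l}\left(\frac{\epsilon}{n}\right)^{l} \;\leq\; \frac{n^{l}}{l!}\cdot\frac{\epsilon^{l}}{n^{l}} \;=\; \frac{\epsilon^{l}}{l!},
\end{equation*}
where I used the elementary estimate $\binom{n}{l} \leq n^l/l!$.

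It therefore suffices to show $\epsilon^{l}/l! \leq \epsilon$ for every $l \in \{1,\dots,n\}$, which is equivalent to $\epsilon^{l-1} \leq l!$. Since $\epsilon \leq 2$, this in turn follows from the inequality $2^{l-1} \leq l!$, which is immediate for $l \in \{1,2\}$ and follows for $l \geq 2$ by a one-line induction: $(l+1)! = (l+1)\cdot l! \geq 2\cdot l! \geq 2\cdot 2^{l-1} = 2^{l}$.

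There is no real obstacle here; the only subtlety worth flagging is that the statement correctly excludes $l=0$ (where $e_0 = 1$ cannot be bounded by $\epsilon$ for $\epsilon < 1$), and both steps of the argument genuinely use $l \geq 1$. The hypothesis $\epsilon \leq 2$ is tight at $l=2$, where $\epsilon^{2}/2 \leq \epsilon$ holds if and only if $\epsilon \leq 2$, which indicates the constant $2$ in the statement is sharp.
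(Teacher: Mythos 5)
Your proof is correct and follows essentially the same path as the paper: bound $|e_l(\mbx)|$ by $\binom{n}{l}(\epsilon/n)^l \leq \epsilon^l/l!$ and then use $\epsilon \leq 2$ to conclude $\epsilon^l/l! \leq \epsilon$. The paper finishes by writing $\epsilon^l/l! = \epsilon \prod_{k=2}^{l}(\epsilon/k)$ and noting each factor is at most $1$, whereas you show the equivalent inequality $\epsilon^{l-1} \leq l!$ via $2^{l-1} \leq l!$; these are the same observation, and your remark about sharpness at $l=2$ is a nice bonus.
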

%
% \begin{equation}
%  |\lambda_i| \leq \frac{\epsilon}{m} \;\; (i=1,\ldots,m),
% \end{equation}
% then for $l = 1, \ldots, m$,
% \begin{equation}
%     |e_l(\mblambda)| \leq \epsilon.
% \end{equation}
\begin{proof}
\begin{align*}
    |e_l(\mbx)| &= \Big| \sum_{\substack{S \in [n] \\ |S| = l}} \prod_{j \in S} x_j \Big| \leq \sum_{\substack{S \in [n] \\ |S| = l}} \Big| \prod_{j \in S} x_j \Big| \leq \sum_{\substack{S \in [n] \\ |S| = l}} \left(\frac{\epsilon}{n}\right)^l = {n \choose l}  \cdot \left(\frac{\epsilon}{n}\right)^l \\ 
    & \leq \frac{n^l}{l!} \cdot \frac{\epsilon^l}{n^l} = \frac{\epsilon^l}{l!} = \epsilon \cdot \prod_{k=2}^{l} \frac{\epsilon}{k} \leq \epsilon. 
\end{align*}
\end{proof}
% \begin{equation}
%  |\lambda_i| \leq \frac{\epsilon}{m} \;\; (i=1,\ldots,m),
% \end{equation}
% then for $l = 1, \ldots, m$,
% \begin{equation}
%     |e_l(\mblambda)| \leq \epsilon.
% \end{equation}
% \end{lemma}
% \begin{proof}
% \begin{align}
%     |e_l(\mblambda)| &= \Big| \sum_{\substack{S \in [m] \\ |S| = l}} \prod_{j \in S} \lambda_j \Big| \\ 
%     &\leq \sum_{\substack{S \in [m] \\ |S| = l}} \Big| \prod_{j \in S} \lambda_j \Big| \\
%     &\leq \sum_{\substack{S \in [m] \\ |S| = l}} \left(\frac{\epsilon}{m}\right)^l = {m \choose l}  \cdot \left(\frac{\epsilon}{m}\right)^l \\ 
%     &\leq \frac{m^l}{l!} \cdot \frac{\epsilon^l}{m^l} = \frac{\epsilon^l}{l!}  = \epsilon \cdot \prod_{k=2}^{l} \frac{\epsilon}{k} \\
%     & \leq \epsilon. 
% \end{align}

If $m=k$, the coefficients of $f$ can be recovered \textit{exactly} from  $\left\{f(\lambda_j)\right\}_{j\in [m]}$   by inverting the linear system \eqref{eq:a_V} as long as the evaluation points are distinct. When $m<k$ then, in general, $\mba$ cannot be recovered exactly. In this case, the system \eqref{eq:a_V} is undetermined: denoting  the true (but unknown) coefficient vector by $\mba^*$, any vector in the set
$
 \left\{\mba\right\} + \mathsf{null}(\mbV^T) = \{ \mba + \mbn \; | \; \mbn \in \mathsf{null}(\mbV^T) \}
$
will be consistent with the $m$ evaluation points. Nevertheless, we show next that if the  coefficients of  $f$ have bounded norm, i.e.,
$
    \mba \in \calB_R \triangleq \{ \mbx \in \mathbb{R}^{k} \text{ s.t. } ||\mbx||_2 \leq R \},
$
then the first $m$ coefficients $a_0,\dots,a_{m-1}$ can be approximated with \textit{arbitrary  precision}  by computing $f$ at $m$ distinct and sufficiently small evaluation points. This result is formally stated in Corollary~\ref{cor:main}, which is the main tool  for proving the approximate coded computing recovery threshold.

\begin{theorem}\label{thm:V_null}
Let $\mblambda$ be $m<k$ distinct evaluation points with corresponding  $k\times m$ Vandermonde matrix $\mbV =\mathsf{Vander}(\mblambda, k-1)$ and $0 < \epsilon \leq \min (2, 3R)$. If $\|\mblambda\|_\infty <\frac{\epsilon}{3 R(k-m)m}$, then for any $\mbx \in \calB_{R}\cap \mathsf{null}(\mbV^T)$,
\begin{equation}
    |x[i]| \leq \epsilon \;\; \text{ for } \; i \in [m]. % careful here with indexing!
\end{equation}
% \vc{Can the above equation also be expressed using the $\ell_{\infty}$ norm?}
\end{theorem}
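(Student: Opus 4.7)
The plan is to parametrize every vector $\mbx \in \mathsf{null}(\mbV^T)$ by a polynomial factorization and then use smallness of the evaluation points to argue that its first $m$ coordinates are negligible. By the preceding null-space lemma, $\mathsf{null}(\mbV^T) = \mathsf{span}(\mbu_1,\dots,\mbu_{k-m})$ where $\mbu_i = \mathsf{vec}(x^{i-1} P(x))$ with $P(x) := \prod_{j=1}^m (x-\lambda_j)$, so any such $\mbx$ is the coefficient vector of some $f(x) = P(x)q(x)$ with $q(x) = \sum_{i=0}^{k-m-1} q_i x^i$ and $\|\mbx\|_2 = \|Pq\|_2 \leq R$. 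Setting $\tilde\epsilon := \frac{\epsilon}{3R(k-m)}$, the hypothesis $\|\mblambda\|_\infty < \tilde\epsilon/m$ together with Lemma~\ref{lem:e_l_bound} (applied with $n = m$) gives $|e_l(\mblambda)| \leq \tilde\epsilon$ for every $l \in [m]$.

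The key structural observation I would exploit is that since $P$ is monic with $p_m = 1$, the first $m$ coordinates of $\mbx$ never pick up this leading term: for $i \in [m]$, $x[i]$ is the coefficient of $x^{i-1}$ in $Pq$, and every $p_l$ appearing in the convolution $x[i] = \sum_l p_l\, q_{i-1-l}$ has index $l \leq i-1 < m$, hence $p_l = (-1)^{m-l}e_{m-l}(\mblambda)$ with $m-l \geq 1$ and so $|p_l| \leq \tilde\epsilon$. The triangle inequality followed by Cauchy--Schwarz on the $(k-m)$-dimensional vector $\mathbf{q} = (q_0,\dots,q_{k-m-1})$ then gives
\[
|x[i]| \;\leq\; \tilde\epsilon\,\|\mathbf{q}\|_1 \;\leq\; \tilde\epsilon\,\sqrt{k-m}\,\|\mathbf{q}\|_2.
\]

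The main technical obstacle is to bound $\|\mathbf{q}\|_2$ from the constraint $\|Pq\|_2 \leq R$, i.e., to show that multiplication by $P$ is close to an isometry in coefficient norm. My plan is to invoke Parseval's identity, $\|Pq\|_2^2 = \frac{1}{2\pi}\int_0^{2\pi} |P(e^{i\theta})|^2\,|q(e^{i\theta})|^2\,d\theta$, combined with the pointwise lower bound $|P(e^{i\theta})| \geq \prod_{j=1}^m (1 - |\lambda_j|) \geq (1-\|\mblambda\|_\infty)^m \geq 1 - \tilde\epsilon$ that follows from the reverse triangle inequality and Bernoulli. This yields $\|\mathbf{q}\|_2 \leq R/(1-\tilde\epsilon)$; substituting into the previous display produces $|x[i]| \leq \frac{\epsilon}{3\sqrt{k-m}\,(1-\tilde\epsilon)}$, and the upper bounds $\epsilon \leq \min(2,3R)$ together with $k > m$ force the prefactor to be at most $1$, giving $|x[i]| \leq \epsilon$.
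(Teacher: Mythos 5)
Your approach is genuinely different from the paper's. Where the paper expands $\mbx$ in the null-space basis $\{\mbu_j\}$ and controls the combination coefficients $\alpha_l$ through an elementary recursion $|\alpha_l|\leq R(1+\delta)^{k-m-l}\leq 3R$, you factor the associated polynomial as $P(x)q(x)$ with $P=\prod_j(x-\lambda_j)$ monic, and control $\|\mathbf{q}\|_2$ via Parseval on the unit circle together with the lower bound $|P(e^{i\theta})|\geq(1-\|\mblambda\|_\infty)^m$. The convolution observation (that $x[i]$ for $i\in[m]$ never sees the leading coefficient $p_m=1$, so every $|p_l|\leq\tilde\epsilon$) is correct, as is the Cauchy--Schwarz step $\|\mathbf{q}\|_1\leq\sqrt{k-m}\,\|\mathbf{q}\|_2$, and the Parseval inequality itself is sound because $\|\mblambda\|_\infty<\tilde\epsilon/m\leq 1/m<1$.

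However, there is a genuine gap in the closing step. You claim the prefactor $\tfrac{1}{3\sqrt{k-m}\,(1-\tilde\epsilon)}$ is at most $1$, which requires $\tilde\epsilon\leq 1-\tfrac{1}{3\sqrt{k-m}}$. Since all you know \emph{a priori} is $\tilde\epsilon=\tfrac{\epsilon}{3R(k-m)}\leq\tfrac{1}{k-m}$, the needed inequality $\tfrac{1}{k-m}\leq 1-\tfrac{1}{3\sqrt{k-m}}$ holds for $k-m\geq 2$ but \emph{fails} for $k-m=1$: there $\tilde\epsilon$ can range up to $1$ (take $R\leq 2/3$ and $\epsilon=3R$), the denominator $1-\tilde\epsilon$ vanishes, and your bound degenerates. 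This is not a vacuous corner case --- $k-m=1$ corresponds to $m=2$ in the main application $k=2m-1$, and the paper's proof covers it (the recursion gives $|\alpha_l|\leq 3R$ and then $|x[i]|\leq(k-m)\cdot 3R\cdot\delta=\epsilon$ holds with equality at $\epsilon=3R$). A small patch would handle $k-m=1$ separately: there $q$ is a scalar $q_0$, $\|Pq\|_2=|q_0|\,\|P\|_2\geq|q_0|$ since $p_m=1$, so $|q_0|\leq R$ and $|x[i]|\leq\tilde\epsilon|q_0|\leq\tfrac{\epsilon}{3}<\epsilon$. With that repair the argument is complete.
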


% Letting $\mba^{*}$ be the true coefficient vector we want to recover, any $\mba$ that satisfies \eqref{eq:a_V} is in the set:
% \begin{equation}
%  \mba^{*} + \text{null}(\mbV^T) = \{ \mba^{*} + \mbn \; | \; \mbn \in \text{null}(\mbV^T) \}.
% \end{equation}
% Now, let us assume that the true vector $\mba^{*}$ has a bounded norm, i.e., 
% \begin{equation}
%     \mba^{*} \in \calB_C = \{ \mbx \in \mathbb{R}^{k} \text{ s.t. } ||\mbx||_2 \leq C \}.
% \end{equation}
% Then, what we want to claim is the following: 
% \begin{claim}
% Let $0 < \delta < C$ and $\epsilon  =\frac{\delta}{C(k-m)}$. If $e_i(\mblambda) \leq \epsilon \;\; \forall i = 1, \cdots, m$,
% then for all $\mba \in \calB_C \cap \mba^{*} + \text{null}(\mbV^T)$,
% \begin{equation}
%     |a_l - a^{*}_l| \leq 4 \delta \;\; \text{ for } \; l = 0, \cdots, m-1.
% \end{equation}
% \end{claim}
\begin{proof}
Since $\mbx \in \mathsf{null}(\mbV^T)$, we can express $\mbx$ as: 
\begin{equation} \label{eq:a_alpha_b}
    \mbx = \alpha_1 \mbu_1 + \cdots + \alpha_{k-m} \mbu_{k-m},
\end{equation}
for some $\alpha_1, \cdots, \alpha_{k-m} \in \mathbb{R}$. For a shorthand notation, we will use $e_l$ for $e_l(\mblambda)$, and let $e_l = 0$ if $l < 0$ or $l > m$. By substituting \eqref{eq:b1}, \eqref{eq:b_2_km} into \eqref{eq:a_alpha_b}, we get:
$
    x[i] = \sum_{j=1}^{k-m} \alpha_j e_{m-i+j} \;\; \text{ for } \; i = 1, \ldots, k. 
$
Since $e_0 = 1$,  
$
    x[k] = \alpha_{k-m} e_0 = \alpha_{k-m}. 
$
Furthermore, because $\mbx \in \calB_R$, $|x[k]| = |\alpha_{k-m}|  \leq R. $ Similarly, 
$x[k-1]  = \alpha_{k-m-1} e_0 + \alpha_{k-m} e_1 = \alpha_{k-m-1}  + \alpha_{k-m} e_1. 
$

Now note that from Lemma~\ref{lem:e_l_bound}, 
$
    |e_l(\mblambda)| \leq \frac{\epsilon}{3R(k-m)} \triangleq \delta. 
$
for $ l \in [m]$. Thus,
\begin{align*}
    |\alpha_{k-m-1}| &= |x[k-1] + \alpha_{k-m} e_1 |  \leq | x[k-1] | + | \alpha_{k-m} e_1 | \leq R + R \cdot \delta = R (1+\delta).
\end{align*}

% {\color{red}{
% \begin{align*}
%     x_{k-2}&=\alpha_{k-m-2}e_0+\alpha_{k-m-1}e_1+\alpha_{k-m}e_2 \\
%     |\alpha_{k-m-2}| &= |x_{k-2}-\alpha_{k-m-1}e_1-\alpha_{k-m}e_2| \\
%     &\leq C+C(1+\delta)\delta+C\delta \\
%     &= C(1+\delta)^2
% \end{align*}
% I don't know if this changes anything, but continuing,  we get $|\alpha_{l}| \leq C(1+\delta)^{k-m-l} \sim C(1+(k-m-l)\delta)$ if $\delta$ is small?}
% }

By repeating the same argument up to $x[m+1]$, for $l =1, \ldots, k-m$, we obtain:
\begin{align}
|\alpha_{l}|  \leq R (1 + \delta)^{k-m-l} 
 \leq R ( 1+ \frac{1}{k-m})^{k-m-l}  
 \leq R ( 1+ \frac{1}{k-m})^{k-m} 
 \leq 3R. 
\end{align}
The second inequality follows from the assumption that  $\epsilon \leq 3R$ and the last inequality holds because for a positive integer $n$: 
$
    (1+\frac{1}{n})^n \leq 3- \frac{1}{n} < 3.
$

Now, for $i \in [m]$, $x[i]$ can be written as:
\begin{align}
    |x[i]| & = |\sum_{j=1}^{k-m} \alpha_j e_{m-i+j}|
    \leq \sum_{j=1}^{k-m} |\alpha_j| |e_{m-i+j}| 
    \leq \sum_{j=1}^{k-m} |\alpha_j| \delta 
    \leq (k-m) \cdot 3R \cdot \delta =  \epsilon. \label{eq:claim1_prf1}
\end{align}
The inequality in \eqref{eq:claim1_prf1} holds because $m-l+j \neq 0$ for $l \in [m]$, and thus $|e_{m-l+j}| \leq \delta$. 
% \vc{Maybe this is a stupid question, but I don't follow this last comment. $m-l+j$ can be larger than $m$, in which case equation (62) says nothing about $e_{m-l+j}$. Should equation (62) be strengthened? Ok I get it, $e_{l} = 0$ if $l > m$. Maybe state equation (62) for all $l$, in which case you don't even need this complicated explanatory note, just reference (62).}
\end{proof}

\begin{corollary} \label{cor:main}
Consider a set $\left\{f(\lambda_j) \right\}_{j\in[m]}$ of $m<k$ evaluations of $f$ at distinct points $\mblambda$. If  $0<\epsilon< \min(2, 3R)$ and $\|\mblambda\|_\infty <\frac{\epsilon}{6R(k-m)m}$, then for any two coefficient vectors $\mba,\mbb\in \calB_R$ that satisfy \eqref{eq:a_V} (i.e., that are consistent with the evaluations), we have 
\begin{equation}
    |a[i] - b[i]| \leq \epsilon ~\text{for } i \in [m].
\end{equation}
\end{corollary}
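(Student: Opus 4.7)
The plan is to reduce the corollary directly to Theorem~\ref{thm:V_null} by considering the difference of the two coefficient vectors. Specifically, let $\mbx \triangleq \mba - \mbb$. Since both $\mba$ and $\mbb$ satisfy the linear system \eqref{eq:a_V} with the same right-hand side (the common vector of evaluations $\left[f(\lambda_1),\ldots,f(\lambda_m)\right]$), the difference satisfies $\mbx \cdot \mbV = \mathbf{0}$, i.e., $\mbx \in \mathsf{null}(\mbV^T)$.

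Next, I would control the norm of $\mbx$ via the triangle inequality: since $\mba, \mbb \in \calB_R$, we have $\|\mbx\|_2 \leq \|\mba\|_2 + \|\mbb\|_2 \leq 2R$, so $\mbx \in \calB_{2R}$. This puts $\mbx$ precisely in the intersection $\calB_{2R} \cap \mathsf{null}(\mbV^T)$ to which Theorem~\ref{thm:V_null} applies.

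Finally, I would invoke Theorem~\ref{thm:V_null} with the parameter $R$ there replaced by $R' \triangleq 2R$. The hypothesis of that theorem is $\|\mblambda\|_\infty < \tfrac{\epsilon}{3R'(k-m)m}$, which with $R' = 2R$ becomes exactly $\|\mblambda\|_\infty < \tfrac{\epsilon}{6R(k-m)m}$, matching the assumption of the corollary. The secondary condition $0 < \epsilon \leq \min(2, 3R')  = \min(2, 6R)$ is implied by the corollary's assumption $0 < \epsilon < \min(2, 3R)$. The theorem then yields $|x[i]| \leq \epsilon$ for $i \in [m]$, i.e., $|a[i] - b[i]| \leq \epsilon$, which is the desired conclusion.

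There is no real obstacle here; the only thing to be careful about is the factor-of-two bookkeeping when passing from the norm bound $R$ on each of $\mba, \mbb$ to the norm bound $2R$ on their difference, which is exactly what introduces the factor of $6$ (rather than $3$) in the denominator of the required bound on $\|\mblambda\|_\infty$. Everything else is a routine application of Theorem~\ref{thm:V_null}.
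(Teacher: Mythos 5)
Your proof is correct and follows the same approach as the paper: take the difference $\mba - \mbb$, show it lies in $\calB_{2R}\cap\mathsf{null}(\mbV^T)$ by the triangle inequality, and apply Theorem~\ref{thm:V_null} with radius $2R$. Your extra bookkeeping on the $\epsilon$ condition and the factor of six is a nice touch but the argument is the paper's.
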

% \vc{This $\mathbf{b}$ is confusing with the null vector you use in the rest of this section, maybe $\overline{\mathbf{a}}$. I would rather state things explicitly, since (48) only seems to have $\mathbf{a}.$ That is, in the corollary statement, say that $[f(\lambda_1)...] = Va = V\overline{a}.$}
\begin{proof}
Under the assumptions of the corollary, $\mbn\triangleq \mba-\mbb \in \mathsf{null}(\mbV^T).$ Moreover, the triangle inequality yields: $\|\mbn\|\leq \|\mba\|+\|\mbb\|\leq 2R.$ I.e., $\mbn \in \calB_{2R} \cap \mathsf{null}(\mbV^T)$. The result  follows by a direct application of Theorem~\ref{thm:V_null}.
\end{proof}

% \begin{corollary}
% Let $\mblambda$ be $m<k$ distinct evaluation points with corresponding  $k\times m$ Vandermonde matrix $\mbV$ in \eqref{eq:vandy} and $0 < \epsilon \leq 2C$. If $\|\mblambda\|_\infty <\frac{\epsilon}{4C(k-m)m}$, then for any $\mbx,\mbx' \in \calB_C\cap \mathsf{null}(\mbV^T)$,
% \begin{equation}
%     |x_i - x'_i| \leq \epsilon \;\; \text{ for } \; i \in [m]. % careful here with indexing!
% \end{equation}

% \end{corollary}

For an $n$-by-$n$ matrix $\mbC$, the polynomial $p_{\mbC}(x)$ is essentially a set of $n^2$ polynomials, having one polynomial for each $C[i,j]$ ($i,j \in [n]$). For decoding, we have to interpolate each of those $n^2$ polynomials. Let us denote $p_{C[i,j]}(x)$ as the $(i,j)$-th polynomial for $C[i,j]$ and let the row vector representation of the coefficients of $p_{C[i,j]}(x)$ as $\mbp_{[i,j]}$. 

\begin{lemma}\label{lem:p_norm}
For $p_\mbC(x)$ given in Construction~\ref{const:MatDot}, the norm of $\mbp_{[i,j]} = \mathsf{vec}(p_{C[i,j]})$ is bounded as:
\begin{equation}
    ||\mbp_{[i,j]}||_2 \leq \sqrt{2m-1} \eta^2,
\end{equation}
if $||\mbA||_F \leq \eta$ and $||\mbB||_F \leq \eta$. 
\end{lemma}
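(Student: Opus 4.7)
The plan is to bound each of the $2m-1$ coefficients of $p_{C[i,j]}(x)$ individually by $\eta^2$ and then conclude via the elementary norm inequality $\|\mbv\|_2 \leq \sqrt{d}\,\|\mbv\|_\infty$ for $\mbv \in \mathbb{R}^{d}$ with $d = 2m-1$. The key observation that makes each coefficient amenable to a clean bound is that the MatDot polynomial is designed so that each power of $x$ aggregates terms $\mbA_l\mbB_r$ along a single ``diagonal'' $l-r = \text{const}$, which in turn corresponds to a genuine inner product between contiguous chunks of a row of $\mbA$ and a column of $\mbB$.

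Concretely, I would first expand $p_{\mbC}(x) = p_{\mbA}(x)p_{\mbB}(x)$ using the construction to write the coefficient of $x^{k}$ as $\sum_{l-r=k-m+1,\, l,r\in[m]} \mbA_l\mbB_r$, so that
\[
\mbp_{[i,j]}[k] \;=\; \sum_{\substack{l-r=k-m+1 \\ l,r\in[m]}} (\mbA_l\mbB_r)[i,j] \;=\; \sum_{\substack{l-r=k-m+1 \\ l,r\in[m]}} \mbA_l[i,:]\cdot \mbB_r[:,j].
\]
The second step is the re-indexing trick: setting $s=k-m+1$ (so $s\in\{-(m-1),\dots,m-1\}$), the admissible pairs $(l,r)$ can be enumerated as $(r+s,r)$ with $r$ running over a contiguous range. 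Therefore $\mbp_{[i,j]}[k]$ is precisely the Euclidean inner product of the concatenated sub-vector $(\mbA_{s+1}[i,:],\dots,\mbA_m[i,:])$ of $\mbA[i,:]$ with the concatenated sub-vector $(\mbB_1[:,j],\dots,\mbB_{m-s}[:,j])$ of $\mbB[:,j]$ (with symmetric indexing for $s<0$).

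The third step applies Cauchy--Schwarz to this single inner product and uses the fact that the norm of any sub-vector of $\mbA[i,:]$ is at most $\|\mbA[i,:]\|_2 \leq \|\mbA\|_F \leq \eta$, and analogously for $\mbB[:,j]$. This yields the uniform bound $|\mbp_{[i,j]}[k]| \leq \eta^2$ for every $k\in\{0,\dots,2m-2\}$. The final step is then
\[
\|\mbp_{[i,j]}\|_2 \;=\; \sqrt{\sum_{k=0}^{2m-2} \mbp_{[i,j]}[k]^2} \;\leq\; \sqrt{(2m-1)\,\eta^4} \;=\; \sqrt{2m-1}\,\eta^2,
\]
which is the desired inequality.

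The only place that requires care is the re-indexing step showing that the (possibly many) summands $\mbA_l[i,:]\mbB_r[:,j]$ with $l-r$ fixed can be merged into a single dot product of contiguous slices; once this is done, the rest is a direct application of Cauchy--Schwarz and the Frobenius-norm hypothesis \eqref{eq:norm-constraint}. I do not anticipate any genuine obstacle: the bound is a worst-case one (in fact one obtains the slightly sharper $\sqrt{m}\,\eta^2$ by tracking the actual number of non-empty diagonals, but the stated $\sqrt{2m-1}\,\eta^2$ is already enough for the downstream application to Corollary~\ref{cor:main} with $R = \sqrt{2m-1}\,\eta^2$ and thus Theorem~\ref{thm:approx_matdot}).
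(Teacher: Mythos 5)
Your proof is correct and takes essentially the same route as the paper: both identify the coefficient of $x^k$ as a sum of $\mbA_l\mbB_r$ along the diagonal $l-r=\text{const}$, rewrite it as a product of contiguous block slices, and bound it by $\eta^2$ via Cauchy--Schwarz / Frobenius submultiplicativity before summing over the $2m-1$ coefficients (the paper works at the matrix level with $\|\mbP_l\|_F\leq\eta^2$; you work at the scalar entry level, which is an equivalent specialization). One small caveat on your parenthetical: the number of non-empty diagonals is $2m-1$, not $m$, so ``tracking the actual number of non-empty diagonals'' does not yield $\sqrt{m}\,\eta^2$ --- that sharper constant, if desired, would instead follow from a Young-type bound on the convolution, not from a diagonal count.
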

\begin{proof}
Throughout the proof, $||\cdot||$ denotes a Frobenius norm for a matrix and a 2-norm for a vector. Let $\mbP_l$ be the coefficient of $x^{l-1}$ in $p_{\mbC}(x)$ for $l \in [2m-1]$, which can be written as:
\begin{align*}
    \mbP_l &= \sum_{\substack{1\leq i,j \leq m \\ j-i=m-l}} \mbA_i \mbB_{j} = \begin{cases} 
    \sum_{1\leq i \leq l} \mbA_i \mbB_{i+m-l}, \text{ if } l \leq m\\ 
    \sum_{l+1-m\leq i \leq m} \mbA_i \mbB_{i+m-l}, \text{ otherwise.}
    \end{cases}
\end{align*}
Let us focus on the case when $l \leq m$ as the argument extends naturally for $l > m$. For $l \leq m$, $\mbP_l$ can be rewritten as: 
$
    \mbP_l = \sum_{1\leq i \leq l} \mbA_i \mbB_{i+m-l} = \begin{bmatrix}
     \mbA_1 & \cdots & \mbA_l
    \end{bmatrix} \cdot \begin{bmatrix}
     \mbB_{m-l+1} \\ \vdots \\ \mbB_{m}
    \end{bmatrix}.
$
As these matrices are submatrices of $\mbA$ and $\mbB$, 
\begin{equation}
    ||\begin{bmatrix}
     \mbA_1 & \cdots & \mbA_l
    \end{bmatrix} || \leq \eta , \;\; \norm{\begin{bmatrix}
     \mbB_{m-l+1} \\ \vdots \\ \mbB_{m}
    \end{bmatrix}} \leq \eta.
\end{equation}
Since $||\mbX \mbY|| \leq ||\mbX|| \cdot ||\mbY||$, we have $
|| \sum_{1\leq i \leq l} \mbA_i \mbB_{i+m-l} || \leq \eta^2.$
We can apply the same argument for $l >m$ and show: 
$
   ||\mbP_l|| = || \sum_{\substack{1\leq i,j \leq m \\ j-i=m-l}} \mbA_i \mbB_{j} || \leq \eta^2 \;\; \text{ for } l \in [2m-1]. 
$
Finally,
\begin{align}
    || \mbp_{[i,j]} || & = \norm{\begin{bmatrix}
     P_1[i,j] & P_2[i,j] & \cdots & P_{2m-1}[i,j]
    \end{bmatrix}} = \sqrt{\sum_{l=1}^{2m-1} P_l[i,j]^2} \leq \sqrt{\sum_{l=1}^{2m-1} ||\mbP_l||^2}  \\
    &\leq \sqrt{2m-1}\eta^2.
\end{align}
\end{proof}

% \begin{theorem}
% If we have $k \geq m$ successful nodes out of $P$ nodes, then $\epsilon$-Approximate MatDot codes can recover $\widehat{\mbC}$ such that: 
% \begin{equation}
%     |\widehat{C}[i,j] - C[i,j]| \leq \epsilon. 
% \end{equation}
% \end{theorem}

\begin{alg}[Decoding of Approximate MatDot codes]
Let $ \mblambda^{(\text{succ})}$ be a length-$K$ vector with evaluation points at $K$ successful worker nodes:
$
    \mblambda^{(\text{succ})}= \begin{bmatrix}
     \lambda_{i_1}, \cdots, \lambda_{i_K}
    \end{bmatrix},
$
and let $\mbV^{(\text{succ})} = \mathsf{Vander}(\mblambda^{(\text{succ})}, 2m-2)$. Finally, we denote $\mby^{(\text{succ})}_{[i,j]}$ as the evaluations of $p_{C[i,j]}(x)$ at $\mblambda^{(\text{succ})}$, i.e., 
$
    \mby^{(\text{succ})}_{[i,j]} = \begin{bmatrix}
     \widetilde{C}_{i_1}[i,j] & \cdots \widetilde{C}_{i_K}[i,j]
    \end{bmatrix}.
$
For decoding $C[i,j]$, we solve the following optimization: 
\begin{equation}\label{eq:decoding_opt}
    \widehat{\mba} = \underset{\mba \mbV^{(\text{succ})} = \mby^{(\text{succ})}_{[i,j]}}{\mathrm{argmin}} \; \;
    ||\mba||_2. 
\end{equation}
If $||\widehat{\mba}||_2 > \sqrt{2m-1} \eta^2$, declare failure. Otherwise, 
$\widehat{C}[i,j] = \widehat{a}[m]$. 
% \fc{given an expression for the minimum $ \widehat{\mba}$ in terms of a sol to a linear equation.}
\end{alg}

\textit{Proof of Theorem~\ref{thm:approx_matdot}:} 
First, note that the solution of the equation \eqref{eq:decoding_opt} and the true polynomial coefficients $\mbp_{[i,j]}$ both lie in $\calB_{\sqrt{2m-1} \eta^2}$. As  
$
    ||\lambda^{(\text{succ})}||_\infty < \frac{\epsilon}{6 \eta^2 \sqrt{2m-1}(m-1)m}, 
$
by construction, Corollary~\ref{cor:main} gives:
$
    |\widehat{a}[l] - p_{[i,j]}[l]| \leq \epsilon \;\; \text{ for } l \in [m].
$
Hence, 
$
    |\widehat{C}[i,j] - C[i,j]| = |\widehat{a}[m] - p_{[i,j]}[m]| \leq \epsilon. 
$

\section{} \label{app:conproof}
\begin{proof}[Proof of Theorem \ref{thm:con}]
	We show a contradiction, i.e., assume $K(m,\epsilon)=m-1,\;\forall\epsilon<\eta^2$.
	    We need to show that there exist matrices $\mbA,\mbB$ such that $\epsilon\geq \eta^2$ for a recovery threshold of $m-1$.
	
	Consider $f_i$ and $g_i$ defined in the system model \eqref{eq:fimapdef} and \eqref{eq:gimapdef}. Let
	$\bm{f}=\{f_i\}_{i=1}^P$ and $\bm{g}=\{g_i\}_{i=1}^P$ be encoding functions for $\mbA$ and $\mbB$ respectively. Consider any set $\mathcal{S}$ of $m-1$ nodes. Let $\bm{f}_{\mathcal{S}}$, $\bm{g}_{\mathcal{S}}$ denote the restriction of $\bm{f}$, $\bm{g}$ to the nodes corresponding to $\mathcal{S}$ respectively.	Let $\bm{C_i}=$ output of $i$\textsuperscript{th} node, $i\in \mathcal{S}$. 
		\[\bm{C_i} = (f_i(\{\mbA_{j,k}\}_{j=1,k=1}^{p,q}))\;(g_i(\{\mbB_{j,k}\}_{j=1,k=1}^{q,p}))\]
		Let $\bm{C}=\{\bm{C_i}\}_{i\in \mathcal{S}}$. Let $d_{\mathcal{S}}(\,\cdot\,;\bm{f}_{\mathcal{S}}, \bm{g}_{\mathcal{S}})$ denote any decoding function corresponding  to the $m-1$ nodes in $\mathcal{S}$ that takes $\bm{C}$ and gives an estimate of $\mbA\mbB$. To show a contradiction, we show that there exist matrices $\mbA,\mbB$, such that
		$||d_{\mathcal{S}}(\bm{C})-\mbA\mbB||_F\geq \eta^2 > 0 $
		Let $\textbf{vectorize}(\cdot)$ be a function that outputs a column-wise vectorization of the input matrix. Let $\bm{Q}\in\mathbb{R}^{p\times q}$ such that $\sigma_{\max}(\bm{Q})=1$ and, $\textbf{vectorize}(\bm{Q})$ is a null vector of $\bm{f}_{\mathcal{S}}$ i.e., $\bm{f}_{\mathcal{S}}(\textbf{vectorize}(\bm{Q})\otimes \bm{D})=\bm{0}$, $\forall\bm{D}\in\mathbb{R}^{\frac{n}{p}\times \frac{n}{q}}$. Note that we can scale any such $\bm{Q}$, such that its maximum singular value is 1.
		Let $\bar{\mbA}$ and $\bar{\mbB}$ be some constant matrices with bounded frobenius norms. We set 
		\[\mbA = \bm{Q}\otimes\bar{\mbA},\;\bar{\mbA}\in\mathbb{R}^{\frac{n}{p}\times \frac{n}{q}}\;\text{ and }\;
		\mbB^{(b)} = b\bm{Q}^T\otimes\bar{\mbB},\;\bar{\mbB}\in\mathbb{R}^{\frac{n}{q}\times \frac{n}{p}},\;b\in\mathbb{R}\text{\textbackslash}\{0\}\]
		Note: $\mbA\mbB^{(b)}=b\bar{\mbA}\bar{\mbB}$. Also observe that, $||\bar{\mbA}||_F\leq \eta$ and $||\bar{\mbB}||_F\leq \frac{\eta}{|b|}$.
		
		Let $\bm{C}_i^{(b)} = (f_i(\{\mbA_{j,k}\}_{j=1,k=1}^{p,q}))\;(g_i(\{\mbB^{(b)}_{j,k}\}_{j=1,k=1}^{q,p}))$, 
		$\bm{C}^{(b)}=\{\bm{C}_i^{(b)}\}_{i\in \mathcal{S}}$.
		
		By construction $\bm{C}^{(b)}=\bm{0}\implies d_{\mathcal{S}}(\bm{C}^{(1)})=d_{\mathcal{S}}(\bm{C^}{(-1)})=d_{\mathcal{S}}(\bm{0})$. Then by triangle inequality,
		\begin{align*}
			||d_{\mathcal{S}}(\bm{C}^{(1)}) - \mbA\mbB^{(1)}||_F + ||d_{\mathcal{S}}(\bm{C}^{(-1)})-\mbA\mbB^{(-1)}||_F &\geq ||\mbA\mbB^{(1)}-\mbA\mbB^{(-1)}||_F\geq 2||\mbA\mbB^{(1)}||_F
		\end{align*}
		\[\max(||d_{\mathcal{S}}(\bm{0}) - \mbA\mbB||_F , ||d_{\mathcal{S}}(\bm{0}) + \mbA\mbB||_F) \geq ||\mbA\mbB^{(1)}||_F\]
		\begin{align*}
		    ||\mbA\mbB^{(1)}||_F &= ||(\bm{Q}\otimes\bar{\mbA})(\bm{Q}^T\otimes\bar{\mbB})||_F 
		    = ||(\bm{Q}\bm{Q}^T)\otimes(\bar{\mbA}\bar{\mbB})||_F \\
		    &= \sqrt{\Tr((\bm{Q}\bm{Q}^T \otimes \bar{\mbB}^T\bar{\mbA}^T)(\bm{Q}\bm{Q}^T \otimes \bar{\mbA}\bar{\mbB}))} 
		    = \sqrt{\Tr((\bm{Q}\bm{Q}^T \bm{Q}\bm{Q}^T )\otimes (\bar{\mbB}^T\bar{\mbA}^T \bar{\mbA}\bar{\mbB}))}\\
		    &= \sqrt{\Tr(\bm{Q}\bm{Q}^T \bm{Q}\bm{Q}^T )}\sqrt{\Tr(\bar{\mbB}^T\bar{\mbA}^T \bar{\mbA}\bar{\mbB})}\\
		    &=||\bm{Q}\bm{Q}^T||_F||\bar{\mbA}\bar{\mbB}||_F
		\end{align*}
		We can find matrices $\bar{\mbA},\bar{\mbB}$ such that $||\bar{\mbA}\bar{\mbB}||_F=\eta^2$ due to Lemma \ref{lem:con}.
		\begin{align*}
		    ||\mbA\mbB^{(1)}||_F &= ||\bm{Q}\bm{Q}^T||_F\,
		    \eta^2 \geq ||\bm{Q}\bm{Q}^T||_2\,\eta^2
		    =||\bm{Q}||_2^2\,\eta^2
		    = \eta^2
		\end{align*}
		 Therefore 
	    \[\max(||d_{\mathcal{S}}(\bm{0}) - \mbA\mbB||_F , ||d_{\mathcal{S}}(\bm{0}) + \mbA\mbB||_F) \geq \eta^2\]

	Therefore, there exists matrices $\mbA$ and ($\mbB^{(1)}$ or $\mbB^{(-1)}$) such that given $\eta=1$, the decoding error is $\geq  1$, when recovery threshold is set to $m-1$.
	\end{proof}
		
	\begin{lemma}
	\label{lem:con}
	    Choose $\bar{\mbA}=\bm{x}\bm{y}^T$ and $\bar{\mbB}=\bm{y}\bm{z}^T$, then $||\bar{\mbA}\bar{\mbB}||_F=||\bar{\mbA}||_F||\bar{\mbB}||_F$.
	\end{lemma}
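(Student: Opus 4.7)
The plan is to verify the identity by direct computation, exploiting the fact that both $\bar{\mbA}$ and $\bar{\mbB}$ are rank-one and that the Frobenius norm of a rank-one matrix $\bm{u}\bm{v}^T$ factorizes as $\|\bm{u}\|_2\,\|\bm{v}\|_2$. This single fact will be invoked three times, once on each of $\bar{\mbA}$, $\bar{\mbB}$, and their product.

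First I would simplify $\bar{\mbA}\bar{\mbB}$ using associativity: since $\bm{y}^T\bm{y} = \|\bm{y}\|_2^2$ is a scalar, we have
\begin{equation*}
\bar{\mbA}\bar{\mbB} = (\bm{x}\bm{y}^T)(\bm{y}\bm{z}^T) = \bm{x}(\bm{y}^T\bm{y})\bm{z}^T = \|\bm{y}\|_2^2\,\bm{x}\bm{z}^T,
\end{equation*}
which is again a (scaled) rank-one matrix. Taking Frobenius norms then gives $\|\bar{\mbA}\bar{\mbB}\|_F = \|\bm{y}\|_2^2\,\|\bm{x}\|_2\,\|\bm{z}\|_2$.

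Second, I would compute the right-hand side factor by factor: $\|\bar{\mbA}\|_F = \|\bm{x}\bm{y}^T\|_F = \|\bm{x}\|_2\,\|\bm{y}\|_2$ and $\|\bar{\mbB}\|_F = \|\bm{y}\bm{z}^T\|_F = \|\bm{y}\|_2\,\|\bm{z}\|_2$, so their product equals $\|\bm{x}\|_2\,\|\bm{y}\|_2^2\,\|\bm{z}\|_2$, matching the expression obtained for the left-hand side. Comparing the two expressions completes the proof.

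There is no real obstacle here: the statement is essentially an algebraic identity for rank-one outer products, and once one observes that $\bm{y}^T\bm{y}$ is a scalar that can be pulled outside, everything reduces to the factorization $\|\bm{u}\bm{v}^T\|_F = \|\bm{u}\|_2\|\bm{v}\|_2$. The only care needed is dimensional consistency of the vectors $\bm{x},\bm{y},\bm{z}$, which is implicit from the hypothesis that $\bar{\mbA}\in\mathbb{R}^{n/p\times n/q}$ and $\bar{\mbB}\in\mathbb{R}^{n/q\times n/p}$, forcing $\bm{y}\in\mathbb{R}^{n/q}$ so that $\bm{y}^T\bm{y}$ makes sense.
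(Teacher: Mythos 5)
Your proof is correct and follows essentially the same route as the paper: pull the scalar $\bm{y}^T\bm{y}=\|\bm{y}\|_2^2$ out of $\bar{\mbA}\bar{\mbB}$ to expose a rank-one outer product, then compare with the factor-by-factor computation of $\|\bar{\mbA}\|_F\|\bar{\mbB}\|_F$. The only cosmetic difference is that the paper re-derives the identity $\|\bm{u}\bm{v}^T\|_F=\|\bm{u}\|_2\|\bm{v}\|_2$ each time via a trace computation, whereas you invoke it directly as a known fact.
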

	\begin{proof}
	\begin{align*}
	    ||\bar{\mbA}\bar{\mbB}||_F &= ||\bm{y}||^2 ||\bm{x}\bm{z}^T||_F
	    =||\bm{y}||^2 \sqrt{\Tr{(\bm{z}\bm{x}^T\bm{x}\bm{z}^T)}}
	    = ||\bm{y}||^2 ||\bm{x}||||\bm{z}||\\
	    ||\bar{\mbA}||_F||\bar{\mbB}||_F &= \sqrt{\Tr{(\bm{y}\bm{x}^T\bm{x}\bm{y}^T)}} \sqrt{\Tr{(\bm{z}\bm{y}^T\bm{y}\bm{z}^T)}}
	    = ||\bm{y}||^2 ||\bm{x}||||\bm{z}||
	\end{align*}	\end{proof}

% \clearpage
% \input{scrap}

\section{Proof of Theorem~\ref{thm:approx_polydot}}
We first prove the following  crucial theorem.
\begin{theorem}\label{thm:alt_prf}
Let $f(x) = a_0 + a_1 x + \cdot + a_{k-1} x^{k-1}$ and let $x_1, \ldots, x_m$ be distinct real numbers that satisfy $|x_i| \leq \delta$ for all $i = 1, \ldots, m$, for some $0 < \delta < \frac{1}{m}$. Let
\begin{equation}
    \begin{bmatrix}
    \widehat{a}_0 \\ \vdots \\  \widehat{a}_{m-1}
    \end{bmatrix} \triangleq \mbV^{-1} \begin{bmatrix}
    f(x_1) \\ \vdots \\ f(x_m)
    \end{bmatrix},
\end{equation}
where $\mbV =\vander{ \mbx, m}$ for $\mbx = [x_1 \;\; \cdots \;\; x_m ] $. Then, 
\begin{equation}
    | \widehat{a}_{m-1} - a_{m-1} | \leq ||\mba||_\infty \cdot (k-m)m\delta.
\end{equation}
\end{theorem}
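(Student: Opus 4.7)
\textbf{Proof Proposal for Theorem~\ref{thm:alt_prf}.}

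The plan is to interpret $\widehat{a}_0,\ldots,\widehat{a}_{m-1}$ as the coefficients of the unique degree-$(m-1)$ polynomial $\widehat{f}$ that interpolates $f$ at $x_1,\ldots,x_m$, so that $\widehat{a}_{m-1}$ is its leading coefficient. I would then split $f=g+h$ with $g(x)=\sum_{l=0}^{m-1}a_lx^l$ and $h(x)=\sum_{l=m}^{k-1}a_lx^l$. Since $g$ itself has degree at most $m-1$, by uniqueness of interpolation the interpolant of $f$ equals $g+\widehat{h}$, where $\widehat{h}$ is the interpolant of $h$ at the same nodes. Consequently $\widehat{a}_{m-1}-a_{m-1}$ is exactly the leading coefficient of $\widehat{h}$, and the problem reduces to bounding the top coefficient of the interpolant of a polynomial that vanishes to order $m$ at the origin.

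Next, I would identify the leading coefficient of $\widehat{h}$ with the divided difference $h[x_1,\ldots,x_m]$. By linearity of divided differences, $h[x_1,\ldots,x_m]=\sum_{l=m}^{k-1}a_l\,(x^l)[x_1,\ldots,x_m]$. The classical identity $(x^l)[x_1,\ldots,x_m]=h_{l-m+1}(x_1,\ldots,x_m)$, where $h_d$ is the complete homogeneous symmetric polynomial of degree $d$, will be invoked (and can be proved in one line by induction on $l$ using the recurrence for divided differences, or read off directly from the Lagrange/Cramer expression $\sum_i x_i^l/\prod_{j\neq i}(x_i-x_j)$). Using $|x_i|\le \delta$ and the fact that $h_d$ has $\binom{m+d-1}{d}$ monomials each a product of $d$ of the $x_i$'s (with repetition), I get $|h_d(x_1,\ldots,x_m)|\le\binom{m+d-1}{d}\delta^d$.

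The final step is the clean combinatorial estimate showing $\binom{m+d-1}{d}\delta^d\le m\delta$ for every $d\ge 1$ whenever $\delta<1/m$. Writing
\[
\binom{m+d-1}{d}\delta^d \;=\; m\delta \cdot \frac{1}{d!}\prod_{i=1}^{d-1}(m+i)\,\delta^{d-1}\;\le\; m\delta\cdot\frac{1}{d!}\prod_{i=1}^{d-1}\!\Bigl(1+\tfrac{i}{m}\Bigr),
\]
and noting that $1+i/m\le 1+i$ for $m\ge 1$, the product is at most $2\cdot 3\cdots d=d!$, so each term is bounded by $m\delta$. Combining with $|a_l|\le\|\mba\|_\infty$ and summing the $k-m$ contributions from $l=m,\ldots,k-1$ yields the desired bound
\[
|\widehat{a}_{m-1}-a_{m-1}|\;\le\;\sum_{d=1}^{k-m}\|\mba\|_\infty\binom{m+d-1}{d}\delta^d\;\le\;\|\mba\|_\infty(k-m)m\delta.
\]

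The main obstacle I anticipate is the divided-difference-to-symmetric-polynomial identity: it is the one non-routine ingredient. An equivalent, self-contained route is to apply Cramer's rule to $\mbV$, observe that $\widehat{a}_{m-1}=\sum_{i=1}^{m}f(x_i)/\prod_{j\neq i}(x_i-x_j)$, substitute $f=g+h$ (the $g$-sum telescopes to $a_{m-1}$ since $g$ has degree $\le m-1$), and recognize the remaining sum as $h[x_1,\ldots,x_m]$; this bypasses any appeal to external facts about symmetric polynomials beyond a direct counting of monomials. Everything else is book-keeping.
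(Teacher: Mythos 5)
Your proposal is correct and follows essentially the same path as the paper's proof: isolate the high-degree tail $\sum_{l\ge m}a_l x^l$, express the resulting error in $\widehat{a}_{m-1}$ through the last row of $\mbV^{-1}$ (equivalently, the divided difference $h[x_1,\ldots,x_m]$), invoke the identity $\sum_i x_i^{m-1+l}/\prod_{j\neq i}(x_i-x_j)=h_l(x_1,\ldots,x_m)$, and finish with the counting bound $|h_l|\le\binom{m+l-1}{l}\delta^l\le m\delta$. Your divided-difference framing is a cosmetic repackaging of the paper's explicit Vandermonde-inverse computation (as you note yourself), and your spelled-out verification that $\binom{m+l-1}{l}\delta^l\le m\delta$ for $\delta<1/m$ simply fills in a step the paper asserts without detail.
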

\begin{proof}
Let $R_m(x)$ be the higher order terms in f: 
$
    R_m(x) = a_m x^m + \cdots a_{k-1} x^{k-1}.
$
Then, the following relation holds: 
\begin{equation*}
    \begin{bmatrix}
    f(x_1) \\ \vdots \\ f(x_m)
    \end{bmatrix} = \mbV \begin{bmatrix}
    a_0 \\ \vdots \\ a_{m-1}
    \end{bmatrix} + \begin{bmatrix}
    R_m(x_1) \\ \vdots \\ R_m(x_m)
    \end{bmatrix}
    \;\;
    \iff 
    \;\;
    \mbV^{-1} \begin{bmatrix}
    f(x_1) \\ \vdots \\ f(x_m)
    \end{bmatrix} = \begin{bmatrix}
    a_0 \\ \vdots \\ a_{m-1}
    \end{bmatrix} + \mbV^{-1} \begin{bmatrix}
    R_m(x_1) \\ \vdots \\ R_m(x_m)
    \end{bmatrix}.
\end{equation*}
Let $\mbv$ be the last row of $\mbV^{-1}$, i.e., $\mbv = \mbV^{-1}[m,:]$ and let $\mbr = [ R_m(x_i)]_{i\in [m]}$. Then,
$
    | \widehat{a}_{m-1} - a_{m-1} | = | \mbv \cdot \mbr |. 
$
Using the explicit formula for the inverse of Vandermonde matrices, the $i$-th entry of $\mbv$ is given as:
\begin{equation}
    v[i] = \frac{1}{\prod_{\substack{j = 1 \\ j \neq i}}^{m} (x_j - x_i)}.
\end{equation}
Thus, $\mbv \cdot \mbr$ can be rewritten as:
\begin{align}
    \mbv \cdot \mbr = \sum_{i=1}^{m} \frac{\sum_{l=1}^{k-m} a_{m-1+l} x_i^{m-1+l}}{\prod_{\substack{j = 1 \\ j \neq i}}^{m} (x_j - x_i)} = \sum_{l=1}^{k-m} a_{m-1+l}  \sum_{i=1}^{m} \frac{x_i^{m-1+l}}{\prod_{\substack{j = 1 \\ j \neq i}}^{m} (x_j - x_i)}. \label{eq:alt_prf1}
\end{align}

The expression in \eqref{eq:alt_prf1} can be further simplified using the following lemma. 
\begin{lemma} \label{lem:alt_prf_lem} [Theorem 3.2 in \cite{cornelius2011identities}]
\begin{equation}
     \sum_{i=1}^{m} \frac{x_i^{m-1+l}}{\prod_{\substack{j = 1 \\ j \neq i}}^{m} (x_j - x_i)} = h_l (x_1, \ldots, x_m),
\end{equation}
where $h_l$ is the complete homogeneous symmetric polynomial of degree $l$ defined as:
\begin{equation}
    h_l(x_1, \ldots, x_m) = \sum_{d_1 + \cdots + d_m = l} x_1^{d_1} \cdot x_2^{d_2}  \cdots  x_m^{d_m}.
\end{equation}
\end{lemma}

Using Lemma~\ref{lem:alt_prf_lem}, \eqref{eq:alt_prf1} can now be written as: 
$
     \mbv \cdot \mbr  = \sum_{l=1}^{k-m} a_{m-1+l} \cdot  h_l(x_1, \cdots, x_m). 
$
Finally, 
\begin{align}
     | \widehat{a}_{m-1} - a_{m-1} | &= | \mbv \cdot \mbr | = \left| \sum_{l=1}^{k-m} a_{m-1+l} \cdot  h_l(x_1, \cdots, x_m) \right|  \nonumber \\ 
     &\leq \sum_{l=1}^{k-m} a_{m-1+l} \cdot \binom{m+l-1}{l} \delta^l \leq ||\mba||_\infty \sum_{l=1}^{k-m} \binom{m+l-1}{l} \delta^l \nonumber \\ 
     &\leq ||\mba||_\infty (k-m) m \delta. 
\end{align}
The last inequality holds because $\delta < \frac{1}{m}$ and thus $\binom{m+l-1}{l} \delta^{l} \leq \binom{m}{1} \delta $ for $l = 1, \ldots, m-1$.
\end{proof}

Recall that  polynomial $p_{\mbC}(x)$ is essentially a set of $n^2$ polynomials, having one polynomial for each $C[i,j]$ ($i,j \in [n]$, and we use $p_{C[i,j]}(x)$ as the $(i,j)$-th polynomial for $C[i,j]$.

\begin{lemma}\label{lem:polydot_norm}
Assume $||\mbA||_F \leq \eta$ and $||\mbB||_F \leq \eta$. Then, for $p_\mbC(x)$ given in Construction~\ref{const:polydot}, the $\infty$-norm of $\mbp_{[v,w]} = \mathsf{vec}(p_{C[v,w]})$  ($v,w \in [n]$) is bounded as:
\begin{equation}
    ||\mbp_{[v,w]}||_\infty \leq \eta^2.
\end{equation}
\end{lemma}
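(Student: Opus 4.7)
The plan is to prove $|P_t[v,w]|\le\eta^2$ for every exponent $t$, from which $\|\mbp_{[v,w]}\|_\infty=\max_t|P_t[v,w]|\le\eta^2$ follows immediately. I will realize each coefficient as a single scalar inner product $\mbu\cdot\mbz$ and control each factor by the Frobenius norm of $\mbA$ and $\mbB$ respectively.

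First I expand $p_\mbC(y)=p_\mbA(y)p_\mbB(y)$ entry-wise and read off the coefficient of $y^t$:
\[
P_t[v,w]=\sum_{a+b=t}\sum_{s'}p_\mbA^{(a)}[v,s']\,p_\mbB^{(b)}[s',w],
\]
where $p_\mbA^{(a)}=\mbA_{i,j}$ for the unique $(i,j)\in[p]\times[q]$ with $a=q(i-1)+(j-1)$ (and $0$ if no such $(i,j)$ exists), and analogously $p_\mbB^{(b)}=\mbB_{k,l}$ for the unique $(k,l)\in[q]\times[p]$ with $b=(q-k)+pq(l-1)$. The key structural observation I will use throughout is that both exponent maps are injective (they are base-$q$ and base-$pq$ positional encodings). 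Letting $T$ denote the set of $a\in\{0,\dots,pq-1\}$ for which $t-a$ is also a valid exponent of $p_\mbB$, I concatenate the rows $\mbA_{i(a),j(a)}[v,:]$ for $a\in T$ into a long row vector $\mbu$, and the columns $\mbB_{k(t-a),l(t-a)}[:,w]$ into a long column vector $\mbz$ in the matching order, so that $P_t[v,w]=\mbu\cdot\mbz$.

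The crux---and, I expect, the only delicate point---is to show $\|\mbu\|_2\le\eta$ and symmetrically $\|\mbz\|_2\le\eta$. By the injectivity noted above, $\{\mbA_{i(a),j(a)}\}_{a\in T}$ are pairwise distinct sub-blocks of $\mbA$. Group the summands of $\|\mbu\|_2^2=\sum_{a\in T}\|\mbA_{i(a),j(a)}[v,:]\|_2^2$ by the row-block index $i$: within a group the terms are disjoint column segments of row $(i-1)(n/p)+v$ of $\mbA$, while different groups live in different rows of $\mbA$. Enlarging each partial group to all of $[q]$ and to all row-block indices $[p]$ only increases the sum, giving
\[
\|\mbu\|_2^2\le\sum_{i^*\in[p]}\bigl\|\text{row }(i^*-1)(n/p)+v\text{ of }\mbA\bigr\|_2^2\le\|\mbA\|_F^2\le\eta^2.
\]
The same argument applied to $\mbz$, now grouping by the column-block index $l$ of $\mbB$, yields $\|\mbz\|_2^2\le\|\mbB\|_F^2\le\eta^2$.

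Cauchy--Schwarz then gives $|P_t[v,w]|\le\|\mbu\|_2\|\mbz\|_2\le\eta^2$ uniformly in $t$, which completes the proof. As a sanity check, the MatDot specialization $p=1,\,q=m$ forces $i(a)=1$ for all $a$, so the grouping collapses to a single group and the argument reduces to precisely the Cauchy--Schwarz bound used inside Lemma~\ref{lem:p_norm}; for larger $p$ the multi-group argument above is exactly what is needed to keep the constant at $1$ rather than scaling with the number of interfering $(i,l)$-pairs.
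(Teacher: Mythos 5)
Your proof is correct and, at bottom, matches the paper's argument: each coefficient of $p_{\mbC}$ is expressed as a bilinear form in pairwise-distinct sub-blocks of $\mbA$ and $\mbB$ (realized in the paper as a product of a block-row concatenation times a block-column concatenation, and in your version as the entry-wise inner product $\mbu\cdot\mbz$, which is exactly row $v$ and column $w$ of those two concatenated factors), and is then controlled by Cauchy--Schwarz after showing each factor has norm at most $\eta$. Your treatment is somewhat more scrupulous in making the injectivity of the two exponent maps explicit and in the row-wise grouping that handles $\mbA$-blocks drawn from two different block-rows---both of which the paper uses implicitly under the loose phrasing that the concatenations ``are submatrices of $\mbA$ and $\mbB$''---but the decomposition, the disjointness of blocks, and the final Cauchy--Schwarz step are the same.
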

\begin{proof}
Let $d \triangleq q(i-1) + pq(l-1)+ (q-1+j-k)$. The coefficient of $y^{d}$ in $p_\mbC(x)$ is:
\begin{equation}
\mbP_{d} =\begin{cases}
     \sum_{j'-k' = j-k} \mbA_{i,j'} \mbB_{k', l} + \sum_{j'-k' = j-k-q} \mbA_{i+1,j'} \mbB_{k', l}, \;\; &\text {for } j-k > 0, \\
     \sum_{j'=k'} \mbA_{i,j'} \mbB_{k', l},  \;\; &\text {for }  j-k = 0.
\end{cases}
\end{equation}
For both cases, the number of terms in the sum is $q$. Thus, it can be rewritten as:
\begin{equation}
\begin{bmatrix}
     \mbA_{j_1} & \cdots & \mbA_{j_q}
    \end{bmatrix} \cdot \begin{bmatrix}
     \mbB_{k_1} \\ \vdots \\ \mbB_{k_q}
    \end{bmatrix}.
\end{equation}
As these matrices are submatrices of $\mbA$ and $\mbB$, 
\begin{equation}
    \norm{\begin{bmatrix}
     \mbA_{j_1} & \cdots & \mbA_{j_q}
    \end{bmatrix}}_2 \leq \eta , \;\; \norm{\begin{bmatrix}
     \mbB_{k_1} \\ \vdots \\ \mbB_{k_q}
    \end{bmatrix}}_2 \leq \eta,
\end{equation}
Hence,
$~~
    ||\mbp_{[v,w]}||_\infty = \max_{d} |P_d[v,w]| \leq ||\mbP_d||_2 \leq \eta^2.
$\end{proof}

\begin{proof}[Proof of Theorem~\ref{thm:approx_polydot}]
The decoding for $\epsilon$-approximate PolyDot codes can be performed as follows. For decoding $\mbC_{i,l}$, we choose $d_{i,l} = iq+pq(l-1)$ points from the $p^2q$ successful nodes. Let $\mbV_{i,l} = \vander{[x_1, \cdots, x_{d_{i,l}}], d_{i,l}}$ and $\mbv$ be the last row of $\mbV_{i,l}^{-1}$. Then, we decode  $\mbC_{i,l}$ by computing:
\begin{equation}
    \widehat{\mbC}_{i,l} = \mbv \cdot \begin{bmatrix}
         p_\mbC(x_1) \\ \vdots \\ p_\mbC(x_{d_{i,l}})
    \end{bmatrix}.
\end{equation}
By combining Theorem~\ref{thm:alt_prf} and Lemma~\ref{lem:polydot_norm}, we can show that: 
\begin{align*}
    \norm{\widehat{\mbC}_{i,l} - \mbC_{i,l}}_{\max} \leq \frac{(p^2q+q-1 - d_{i,l})d_{i,l}}{q(p^2q-1)} \epsilon.
\end{align*}
The smallest $d_{i,l}$ is $d_{1,1} = q$ and the largest $d_{i,l}$ is $d_{p,p} = p^2q$. For $q \leq d_{i,l} \leq p^2q$, $(p^2q+q-1 - d_{i,l})d_{i,l} \leq q(p^2q-1)$. Hence, 
$
    \norm{\widehat{\mbC}_{i,l} - \mbC_{i,l}}_{\max} \leq \epsilon.
$
\end{proof}

\section{Proof of Theorem~\ref{thm:opt_bound}} \label{app:optboundproof}
\begin{proof}
Let $\mbE =  \mbI_{m \times m} - \sum_{i \in \calS_p} d_i^{(p)} \mbalpha^{(i)} \mbbeta^{(i) T}$. $||\cdot||$ here represent Frobenius norm. 
\begin{align*}
    ||\mbC - \widehat{\mbC}_{\calS_p}|| &= ||\sum_{i,j} E[i,j] \mbA_i \mbB_j|| \leq \sum_{i,j} |E[i,j]| \cdot ||\mbA_i \mbB_j|| \leq \sum_{i,j} |E[i,j]| \cdot ||\mbA \mbB|| \\
    &\leq m ||\mbE|| \cdot ||\mbA \mbB|| \leq m ||\mbE|| \cdot ||\mbA || \cdot || \mbB|| = m \sqrt{\ell^{(p)}}  \eta^2.
\end{align*}
\end{proof}

\end{document}